\documentclass[12pt]{article}

\usepackage{titlesec}
\titleformat{\section}
  {\normalfont\fontsize{12}{15}\bfseries}{\thesection}{1em}{}
\titleformat{\subsection}
  {\normalfont\fontsize{12}{15}\bfseries}{\thesubsection}{1em}{}

\usepackage{setspace} 

\usepackage{times}
\usepackage{rotating}
\usepackage{bbm}
\usepackage{amsfonts}
\usepackage{amssymb}
\usepackage{amsmath}
\usepackage{amsthm}
\usepackage{graphics}
\usepackage{epsfig,amssymb,latexsym,verbatim}
\usepackage{graphicx}
\usepackage{multirow}
\usepackage{multicol}
\usepackage{float}
\usepackage{graphicx, amssymb}
\usepackage{hyperref}
\usepackage{color}
\usepackage{algorithm}
\usepackage{algpseudocode}
 \usepackage{setspace}
 \usepackage{nccmath}
 \usepackage[titletoc]{appendix}
 \usepackage{enumitem}
\usepackage{apacite}
\bibliographystyle{apacite}

\newfloat{algorithm}{tbp}{loa}
\floatname{algorithm}{Algorithm}

\newtheorem{theorem}{{\bf Theorem}}

\newtheorem{proposition}{{\bf Proposition}}
\newtheorem{lemma}{{\bf Lemma}}
\newtheorem{example}{{\bf Example}}
\newtheorem{assumption}{{\bf Assumption}}
\newtheorem{remark}{{\bf Remark}}
\newtheorem{definition}{{\bf Definition}}

\textheight 23.4cm
\textwidth 14.65cm
\oddsidemargin 0.375in
\evensidemargin 0.375in
\topmargin  -0.55in
\interfootnotelinepenalty=10000

\newcommand{\captionfonts}{\normalsize}

\makeatletter  
\long\def\@makecaption#1#2{%
  \vskip\abovecaptionskip
  \sbox\@tempboxa{{\captionfonts #1: #2}}%
  \ifdim \wd\@tempboxa >\hsize
    {\captionfonts #1: #2\par}
  \else
    \hbox to\hsize{\hfil\box\@tempboxa\hfil}%
  \fi
  \vskip\belowcaptionskip}
\makeatother   

\begin{document}
\doublespacing

\hspace{13.9cm}1

\vspace{20mm}

{\bf Nearly Optimal Learning using Sparse Deep ReLU Networks in Regularized Empirical Risk Minimization with Lipschitz Loss}

\ \\
{\bf Ke Huang$^{\displaystyle 1}$, Mingming Liu$^{\displaystyle 1}$, Shujie Ma$^{\displaystyle 1}$}\\
{$^{\displaystyle 1}$Department of Statistics, University of California, Riverside, Riverside 92521, California, United States}\\
%
{\bf Keywords:} Classification, curse of dimensionality, deep neural networks, lipschitz loss function, ReLU, robust regression

\thispagestyle{empty}
\markboth{}{NC instructions}
\ \vspace{-0mm}\\





\begin{center} {\bf Abstract} \end{center}
We propose a sparse deep ReLU network (SDRN) estimator of the regression function obtained from regularized
empirical risk minimization with a Lipschitz loss function. Our framework can be applied to a variety of
regression and classification problems.  We establish novel non-asymptotic excess risk
bounds for our SDRN estimator when the regression function belongs to a Sobolev space with mixed derivatives. We obtain a new nearly optimal risk rate in the sense that the SDRN estimator can achieve nearly the same optimal minimax convergence rate as one-dimensional nonparametric
regression with the dimension only involved in a logarithm term when the feature dimension is fixed. The estimator has a slightly slower rate when the dimension grows
with the sample size. We show that the depth of the SDRN estimator grows with the sample size in logarithmic order, and the total number of nodes
and weights grows in polynomial order of the sample size to have the nearly optimal risk rate. The proposed SDRN can go deeper with fewer parameters to well estimate the regression and overcome the overfitting problem encountered by conventional feed-forward neural networks.




\section{Introduction\label{SEC:INTRODUCTION}}

Advances in modern technologies have facilitated the collection of
large-scale data that are growing in both sample size and the number of
variables. Although conventional parametric models, such as generalized
linear models are convenient for studying the relationships between
variables, they may not be flexible enough to capture complex patterns
in large-scale data. With a large sample size, the bias due to model
misspecification becomes more prominent compared to sampling variability,
and may lead to false conclusions. The problem of model misspecification
can be solved by nonparametric regression methods that are capable of
approximating the unknown target function well without a restrictive
structural assumption. Theoretically, we hope that both the bias and the
variance of the functional estimator decrease as the sample size increases.
Moreover, the bias is reduced by increasing the variance and vice versa, so
that a tradeoff between bias and variance can be achieved for an accurate
prediction.

In the classical multivariate regression context with a smoothness condition
imposed on the target function, the conventional nonparametric smoothing
methods such as local kernels and splines (e.g. \citeNP{S94,CFM94,FG96,R97,W06,MRY2015}) suffer from the so-called
\textquotedblleft curse of dimensionality\textquotedblright\ \cite{B61},
i.e., the convergence rate of the resulting functional estimator
deteriorates sharply as the dimension of the predictors increases. As such
it is desirable to develop analytic tools that can alleviate the curse of
dimensionality while preserving sufficient flexibility, to accommodate the
large volume as well as the high dimensionality of the modern data.

In recent years, the investigation of statistical properties for regression using deep neural networks has received increasing attention in the statistical and machine learning communities. Deep neural networks with multiple hidden layers are powerful and effective machine learning tools for prediction and classification, and have been successfully applied to many fields, including
computer vision, language processing, speech recognition, time series forecasting, and biomedical
studies (e.g. \citeNP{AB09,LBH15,S15,LS16,WC18,CSG221,FMZ21}). Several pioneering works (e.g. \citeNP{BK2019,Schmidt2020,CJLZ19}) have established convergence rates for neural network estimators of a regression function when it has a compositional structure (e.g. \citeNP{PMRML17,MLP17,BK2019,Schmidt2020}), or the covariates are assumed to lie in a low-dimensional
manifold (e.g. \citeNP{CW13,SCC18,NI2020,SYZ2020,CJLZ19}). The compositional structure or the low-dimensional manifold assumption is assumed to alleviate the dimensionality problem. Moreover, most existing works (e.g. \citeNP{KM11,BK2019,Schmidt2020,CJLZ19,LWWZ22}) estimate the regression function using the least-squares method. 

In this paper, we develop novel non-asymptotic excess risk bounds and convergence rates for a sparse deep ReLU network (SDRN) estimator of the regression function through regularized empirical risk minimization (ERM) with a Lipschitz loss function satisfying mild
conditions. $L_2$ regularization is employed by our approach to prevent possible over-fitting. We consider a unified framework in which the family of the loss functions is a general class. It includes the quadratic, Huber, quantile, and logistic loss functions as special cases, so many regression and
classification problems can be solved by our framework. Classification (e.g. \citeNP{K07,KS19}) is a
crucial task of supervised learning, and robust regression is an important
tool for analyzing data with heavy tails. We adopt a network structure of sparsely connected deep
neural networks with the rectified linear unit (ReLU) activation function given in \cite{MD19} that allows the network to go deeper with fewer parameters than the conventional feedforward neural networks (FNNs), so it enjoys parsimony, scalability, and lower computational complexity, which are important for deep learning models  \cite{BL07,ES16,MP16,MLP17}. On the contrary, conventional FNNs can be computationally expensive and may suffer from overfitting \cite{SHKSS2014}. Different from \cite{MD19} which focuses on finding a sparse neural network with known and given weights to approximate a target function, we obtain the SDRN estimator of a regression function with unknown weights trained through minimizing the regularized ERM. We also establish a new nearly optimal risk rate for the SDRN estimator when the regression function belongs to the Sobolev spaces of functions with
square-integrable mixed second derivatives (also called Korobov spaces),
commonly assumed for the sparse grid approaches dealing with high-dimensional partial
differential equations \cite{BG04,G06,SW10,MD19,MZ22}. This rate can break the notorious `curse of dimensionality' suffered by the conventional nonparametric methods. 

\textcolor{black}{ The main contributions of our paper are summarized as follows:
\begin{itemize}[leftmargin=*]
    \item We derive novel non-asymptotic excess risk bounds for the sparse deep ReLU network
(SDRN) estimator obtained from regularized empirical risk minimization. The SDRN architecture allows the network to go deeper with fewer parameters than the regular FNNs, and it can overcome the overfitting problem suffered by conventional FNNs. Our established risk bounds consist of the estimation and approximation errors, for both of which we derive explicit forms as a function of the sample size, dimension of the feature space, and network complexity. The existing work \cite{MD19} 
uses an accuracy value $\epsilon >0$ for the approximation error only without data fitting. Moreover, we derive a
new non-asymptotic bound for the network complexity. This bound has not been provided in \cite{MD19}.
These newly established bounds provide important theoretical guidance on
how the network complexity should be related to the sample size and the dimension, so that a
tradeoff between the two errors can be achieved to secure an optimal fitting from the dataset.
\item We show that the SDRN estimator can break the \textit{curse of dimensionality} when the regression function is in the  Korobov space by allowing the covariate dimension $d$ to increase with sample size $n$ with a rate slightly slower than $\log (n)$. The existing works on FNN regression, for example, \cite{BK2019} and \cite{Schmidt2019},  still require the dimension $d$ to be fixed.  We further
show that our SDRN estimator can achieve nearly the optimal minimax convergence rate as one-dimensional nonparametric
regression with the dimension $d$ only involved in a logarithm term when the dimension is fixed. The SDRN estimator has a suboptimal rate (slightly slower than the
optimal rate) when the dimension increases with the sample size. 
\item We provide an architecture of the SDRN, and estimate the unknown parameters (unknown weights) in the SDRN through regularized ERM with Lipschitz loss, while  \cite{MD19} focus on the study of the approximation of the sparse neural network with given and known weights to the target function. In addition, our SDRN can be applied to both regression and classification problems. We discuss several commonly used loss functions that satisfy the Lipschitz condition. 
\item We show that the depth of SDRN
increases with the sample size $n$ at a logarithmic rate, and the number of nodes and weights
only need to grow with $n$ at a polynomial rate to ensure the convergence properties of the SDRN estimator. The proposed SDRN
can go deeper with fewer parameters to well estimate the regression and overcome the overfitting problem encountered by conventional FNNs.
\end{itemize}
}

The paper is organized as follows. Section \ref{SEC:MODEL} provides the
basic setup, Section \ref{SEC:NN} discusses approximation of the target
function by the ReLU networks, Section \ref{SEC:SDRN algrithom} constructs the sparse deep ReLU network (SDRN) estimator, Section \ref{SEC:ReLUestimator} establishes the theoretical properties for the SDRN estimator obtained from empirical risk
minimization, Section \ref%
{discussion_ass} further discusses the conditions imposed on the loss function,
Section \ref{sec:simulation} reports results from simulation studies, and
Section \ref{sec:real} illustrates the proposed method through real data
applications. Some concluding remarks are given in Section \ref%
{sec:Discussion}. All the technical proofs are provided in the
Appendix.

\textbf{Notations: }Let $\mathbf{a}_{d}=(a,...,a)^{\top }$ be a $d$%
-dimensional vector of $a$'s. let $|A|$ be the cardinality of a set $A$. The
vectorization of a $m\times n$ matrix $\boldsymbol{A}$, denoted vec$\left(
\boldsymbol{A}\right) $, is the $mn\times 1$ column vector by stacking the
columns of the matrix $\boldsymbol{A}$. Denote $|\boldsymbol{a}%
|_{p}=(\sum_{i=1}^{m}|a_{i}|^{p})^{1/p}$ as the L$^{p}$-norm of a vector $%
\boldsymbol{a}\mathbf{=(}a_{1},\ldots ,a_{m})^{\top }$, and $|\boldsymbol{a}%
|_{\infty }=\max_{i=1,...,m}|a_{i}|$. For two vectors $\boldsymbol{a}=%
\mathbf{(}a_{1},\ldots ,a_{m})^{\top }$ and $\boldsymbol{b}=\mathbf{(}%
b_{1},\ldots ,b_{m})^{\top }$, denote $\boldsymbol{a\cdot b}%
=\sum_{i=1}^{m}a_{i}b_{i}$. Moreover, for any arithmetic operations
involving vectors, they are performed element-by-element. For any two values
$a$ and $b$, denote $a\vee b=\max (a,b)$. For two sequences of positive
numbers $a_{n}$ and $b_{n}$, $a_{n}\ll b_{n}$ means that $%
b_{n}^{-1}a_{n}=o(1)$, $a_{n}\lesssim b_{n}$ means that there exists a
constant $C\in (0,\infty )$ and $n_{0}\geq 1$ such that $a_{n}\leq Cb_{n}$
for $n\geq $ $n_{0}$, and $a_{n}\asymp b_{n}$ means that there exist
constants $C,C^{\prime }\in (0,\infty )$ and $n_{0}\geq 1$ such that $%
a_{n}\leq Cb_{n}$ and $b_{n}\leq C^{\prime }a_{n}$ for $n\geq $ $n_{0}$.

\section{Basic setup\label{SEC:MODEL}}

We consider a general setting of many supervised learning problems. Let $%
Y\in \mathcal{Y\subset }\mathbb{R}$ be a real-valued response variable and $%
\boldsymbol{X}\mathbf{=(}X_{1},\ldots ,X_{d})^{\top }$ be $d$-dimensional
independent variables with values in a compact support $\mathcal{X\subset }%
\mathbb{R}^{d}$. Without loss of generality, we let $\mathcal{X}=[0,1]^{d}$.
Let $(\boldsymbol{X}_{i}^{\top },Y_{i})^{\top }$, $i=1,...,n$ be i.i.d.
samples (a training set of $n$ examples) drawn from the distribution of $(%
\boldsymbol{X}^{\top },Y)^{\top }$. We consider the mapping $\ f:\mathcal{X}%
\rightarrow \mathbb{R}$. Our goal is to estimate the unknown regression function
$f\left( \boldsymbol{x}\right) $ using sparse deep neural networks from the
training set.

Let $\mu :$ $\mathcal{X}\times \mathcal{Y}\rightarrow \lbrack 0,1]$ be a
Borel probability measure of $(\boldsymbol{X}^{\top },Y)^{\top }$. For every
$\boldsymbol{x}\in \mathcal{X}$, let $\mu (y|\boldsymbol{x})$ be the
conditional (w.r.t. $\boldsymbol{x}$) probability measure of $Y$. Let $\mu
_{X}$ be the marginal probability measure of $\boldsymbol{X}$. For any $%
1\leq p\leq \infty $, let $\mathcal{L}^{p}\left( \mathcal{X}\right) =\{f:%
\mathcal{X}\rightarrow \mathbb{R}$, $f$ is Lebesgue measurable on $\mathcal{X%
}$ and $||f||_{L^{p}}<\infty \}$, where $||f||_{L^{p}}=(\int_{\boldsymbol{x}%
\in \mathcal{X}}|f\left( \boldsymbol{x}\right) |^{p}d\boldsymbol{x})^{1/p}$
for $1\leq p<\infty $, and $||f||_{L^{\infty }}=||f||_{\infty }=\sup_{%
\boldsymbol{x}\in \mathcal{X}}|f\left( \boldsymbol{x}\right) |$. For $1\leq
p<\infty $, denote $||f||_{p}=(\int_{\boldsymbol{x}\in \mathcal{X}}|f\left(
\boldsymbol{x}\right) |^{p}d\mu _{X}(\boldsymbol{x}))^{1/p}$ and $%
||f||_{p,n}=(n^{-1}\sum_{i=1}^{n}|f\left( \boldsymbol{X}_{i}\right)
|^{p})^{1/p}$. Let $\rho :\textcolor{black}{\mathbb{R}}\times \mathcal{Y}\rightarrow \mathbb{R}
$ be a loss function. The true target function $f_{0}$ is defined as \
\begin{equation}
f_{0}=\arg \min_{f\in \mathcal{L}^{p}\left( \mathcal{X}\right) }\mathcal{E}%
(f)\text{, where }\mathcal{E}(f)=\int_{\mathcal{X\times Y}}\rho (f(%
\boldsymbol{x}),y)d\mu (\boldsymbol{x,}y).  \label{EQ:f0}
\end{equation}%
Next, we introduce the Korobov spaces, in which the functions need to satisfy
a certain smoothness condition. The partial derivatives of $f$ with
multi-index $\boldsymbol{k}=(k_{1},...,k_{d})^{\top }\in \mathbb{N}^{d}$ is
given as $D^{\boldsymbol{k}}f=\frac{\partial ^{|\boldsymbol{k}|_{1}}f}{%
\partial x_{1}^{k_{1}}\cdots \partial x_{d}^{k_{d}}}$, where $\mathbb{N}%
=\{0,1,2,...,\}$ and $|\boldsymbol{k}|_{1}=\sum_{j=1}^{d}k_{j}$.

\begin{definition}
For $2\leq p\leq \infty $, the Sobolev spaces of mixed second derivatives
(also called Korobov spaces) $W^{2,p}(\mathcal{X)}$ are define by
\begin{equation*}
W^{2,p}(\mathcal{X})=\{f\in \mathcal{L}^{p}\left( \mathcal{X}\right) :D^{%
\boldsymbol{k}}f\in \mathcal{L}^{p}\left( \mathcal{X}\right) ,|\boldsymbol{k}%
|_{\infty }\leq 2\}\text{, where }|\boldsymbol{k}|_{\infty
}=\max_{j=1,...,d}k_{j}.
\end{equation*}
\end{definition}

\begin{assumption}
\label{ass1} We assume that $f_{0}\in W^{2,p}(\mathcal{X})$, for a given $%
2\leq p\leq \infty $.
\end{assumption}

\begin{remark}
Assumption \ref{ass1} imposes a smoothness condition on the target function %
\cite{BG04,G06,MD19}. The Korobov spaces $W^{2,p}(\mathcal{X)}$ are subsets
of the regular Sobolev spaces defined as $S^{2,p}(\mathcal{X})=\{f\in
\mathcal{L}^{p}\left( \mathcal{X}\right) :D^{\boldsymbol{k}}f\in \mathcal{L}%
^{p}\left( \mathcal{X}\right) ,|\boldsymbol{k}|_{1}\leq 2\}$ assumed in the
traditional nonparametric regression setting \cite{W06}. For instance, when
$d=2,$ $|\boldsymbol{k}|_{\infty }=\max (k_{1},k_{2})\leq 2$ implies $|%
\boldsymbol{k}|_{1}=k_{1}+k_{2}\leq 4$. If $f\in W^{2,p}(\mathcal{X)}$, it
needs to satisfy
\begin{equation*}
\frac{\partial f}{\partial x_{j}},\frac{\partial ^{2}f}{\partial x_{j}^{2}},%
\frac{\partial ^{2}f}{\partial x_{1}\partial x_{2}},\frac{\partial ^{3}f}{%
\partial x_{1}^{2}\partial x_{2}},\frac{\partial ^{3}f}{\partial
x_{1}\partial x_{2}^{2}},\frac{\partial ^{4}f}{\partial x_{1}^{2}\partial
x_{2}^{2}}\in \mathcal{L}^{p}\left( \mathcal{X}\right) .
\end{equation*}%
If $f\in S^{2,p}(\mathcal{X})$, it needs to satisfy $\frac{\partial f}{%
\partial x_{j}},\frac{\partial ^{2}f}{\partial x_{j}^{2}},\frac{\partial
^{2}f}{\partial x_{1}\partial x_{2}}\in \mathcal{L}^{p}\left( \mathcal{X}%
\right) $. It is worth noting that no nonparametric regression methods can
avoid the \textquotedblleft curse of dimensionality\textquotedblright\ if
the target function belongs to the regular Sobolev spaces. Functions in the
Korobov spaces need to be smoother than those in the regular Sobolev spaces,
and many popular structured nonparametric models satisfy this condition %
(see \citeNP{G06}). Note that when $d=1$ (one-dimensional nonparametric
regression), the Korobov and the Sobolev spaces are the same, i.e., if $%
f\in W^{2,p}(\mathcal{X)}$ or $f\in S^{2,p}(\mathcal{X)}$, it needs to
satisfy $\frac{\partial f}{\partial x_{1}},\frac{\partial ^{2}f}{\partial
x_{1}^{2}}\in \mathcal{L}^{p}\left( \mathcal{X}\right) $.
\end{remark}

\textcolor{black}{
Below, we provide several examples of regression functions that belong to the Korobov space given in Definition 1, so they satisfy the condition in Assumption 1. These regression models are popularly used in the non- and semi-parametric regression literature. Let $\boldsymbol{x} = (x_1, x_2, \dots, x_d)^{\top}$.
\begin{example}
\textbf{Additive model} \cite{Stone85} is defined as $f_{0}(\boldsymbol{x})=\sum_{j=1}^{d}f_{j}(x_{j})$, where $f_{j}\left( \cdot
\right) $ is an unknown but smooth function of the $j^{\text{th}}$
covariate, for $j=1,\dots ,d$. In the literature, to ensure nonparametric estimators of $f_{j}\left( \cdot \right) $ have a good
convergence rate, it requires a smoothness condition on each univariate function $%
f_{j}\left( \cdot \right) $, such as $D^{2}f_{j}\in \mathcal{L}%
^{p}\left( \mathcal{X}\right) $, i.e., the second derivative of $f_{j}\left(
\cdot \right) $ exists and is integrable, then the regression function $%
f_{0}\in W^{2,p}(\mathcal{X)}$, satisfying Assumption \ref{ass1}.
\end{example}
\begin{example}
\textbf{Functional ANOVA model} \cite{Stone97} consists of the main and
interaction effect terms. It is typically expressed as
\begin{equation*}
f_{0}(\boldsymbol{x})=\sum_{j=1}^{d}f_{j}(x_{j})+\sum_{i\neq j}f_{jj^{\prime
}}(x_{j^{\prime }},x_{j}),
\end{equation*}%
where $f_{j}$ are unknown but smooth functions for the main effects and $f_{jj^{\prime }}$ are unknown functions for the second-order interaction
effects. When the univariate functions $f_{j}$ and the bivariate functions $f_{jj^{\prime }}$ satisfy
the smoothness conditions such that $D^{2}f_{j}\in \mathcal{L}%
^{p}\left( \mathcal{X}\right) $ and  $D^{\boldsymbol{k}}f_{jj^{\prime }}\in
\mathcal{L}^{p}\left( \mathcal{X}\right) $ for $|\boldsymbol{k}|_{\infty
}=\max (k_{1},k_{2})\leq 2$, where $\boldsymbol{k}=(k_1,k_2)^{\top}$, the regression function $f_{0}\in W^{2,p}(\mathcal{X)}$,
satisfying Assumption \ref{ass1}.
\end{example}
\begin{example}
\textbf{Sparse tensor decomposition model} \cite{Schmidt2020} assumes that the regression function has the form
\begin{equation*}
f_{0}(\boldsymbol{x})=\sum_{\ell=1}^{L}\alpha
_{\ell}\prod\limits_{j=1}^{d}f_{j\ell}(x_{j}),
\end{equation*}%
for fixed $L$, real coefficients $\alpha _{\ell}$ and univariate functions $%
f_{j\ell}$. When the unknown univariate functions $f_{j\ell}$ satisfy the
smoothness condition such that $D^{2}f_{j\ell}\in \mathcal{L}%
^{p}\left( \mathcal{X}\right) $, the regression function $f_{0}\in W^{2,p}(%
\mathcal{X)}$, satisfying Assumption \ref{ass1}.
\end{example}
}

\begin{assumption}
\label{ass2}For any $y\in \mathcal{Y}$, the loss function $\rho \left( \cdot
,y\right) $ is convex and it satisfies the Lipschitz property such that
there exists a constant $0<C_{\rho }<\infty $, for almost every $(%
\boldsymbol{x},y)\in \mathcal{X}\times \mathcal{Y}$, $|\rho \left( f_{1}(%
\boldsymbol{x}),y\right) -\rho \left( f_{2}(\boldsymbol{x}),y\right) |\leq
C_{\rho }|f_{1}(\boldsymbol{x})-f_{2}(\boldsymbol{x})|$, for any $%
f_{1},f_{2}\in \mathcal{F}$ , where $\mathcal{F}$ is a neural network space given in Section \ref{SEC:SDRN algrithom}. 
\end{assumption}

\begin{remark}
The above Lipschitz assumption is satisfied by many commonly used loss
functions. Several examples are provided below.
\end{remark}

\begin{example}
\textbf{Huber loss }is popularly used for robust regression, and it is
defined as
\begin{equation}
\rho \left( f(\boldsymbol{x}),y\right) =\left\{
\begin{array}{cc}
2^{-1}(y-f(\boldsymbol{x}))^{2} & \text{if \ }|f(\boldsymbol{x})-y|\leq
\delta \\
\delta |y-f(\boldsymbol{x})|-\delta ^{2}/2 & \text{if \ }|f(\boldsymbol{x}%
)-y|>\delta%
\end{array}%
\right. .  \label{def:huber}
\end{equation}%
It satisfies Assumption \ref{ass2} with $C_{\rho }=\delta $.
\end{example}

\begin{example}
\textbf{Quantile loss }is another popular loss function for robust
regression, and it is defined as
\begin{gather}
\rho \left( f(\boldsymbol{x}),y\right) =(y-f(\boldsymbol{x}))(\tau -I\{y-f(%
\boldsymbol{x})\leq 0\})  \label{def:quantile}
\end{gather}
for $\tau \in (0,1)$. It satisfies Assumption \ref{ass2} with $C_{\rho }=1$.
\end{example}

\begin{example}
\textbf{Logistic loss }is used in logistic regression for binary responses
as well as for classification. The loss function is $\rho \left( f(%
\boldsymbol{x}),y\right) =\log (1+e^{f(\boldsymbol{x})})-yf(\boldsymbol{x})$
for $y\in \{0,1\}$. It satisfies Assumption \ref{ass2} with $C_{\rho }=2$.
\end{example}

\section{Approximation of the target function by ReLU networks\label{SEC:NN}}

We consider feedforward neural networks which consist of a collection of
input variables, one output unit, and a number of computational units (nodes) in different hidden layers. In our setting, the $d$-dimensional covariates $%
\boldsymbol{X}$ are the input variables, and the approximated function is
the output unit. Each computational unit is obtained from the units in the
previous layer. Following \cite{AB09}, we measure the network complexity by
using the depth of the network defined as the number of layers, the total
number of units (nodes), and the total number of weights, which is the sum
of the number of connections and the number of units. Moreover, $\sigma :%
\mathbb{R}\rightarrow $ $\mathbb{R}$ is an activation function which is
chosen by practitioners. In this paper, we use the rectified linear unit
(ReLU) function given as $\sigma \left( x\right) =\max (0,x)$.

Our goal is to propose a ReLU network estimator for the regression function (\ref%
{EQ:f0}) obtained via the regularized ERM. We are also interested in studying whether the ReLU estimator can
break the \textquotedblleft curse of dimensionality \textquotedblright ,
when the target function satisfies the condition given in (\ref{ass1}).
Below we will first provide a reasoning why a function $f\left( \cdot
\right) $ in the Korobov space such that $f\in W^{2,p}(\mathcal{X)}$ can be
well approximated by a deep ReLU network. This will provide a mathematical
grounding for the construction of our sparse deep ReLU network estimator
introduced in Section \ref{SEC:SDRN algrithom}.

We present several results given in \cite{Y17} to show that for the function
$v(\boldsymbol{u})=\prod\nolimits_{j=1}^{d}u_{j}$ with $\boldsymbol{u}\in
\lbrack 0,1]^{d}$, where $\boldsymbol{u}=(u_{1},...,u_{d})^{\top }$, it can
be well approximated by a ReLU network. This result will be used to
construct the ReLU network approximator for the target function $f\left(
\cdot \right) $. Consider the \textquotedblleft tooth\textquotedblright\
function $g:\left[ 0,1\right] \rightarrow \left[ 0,1\right] $ given as $%
g(u)=2u$ for $u<1/2$ and $g(u)=2(1-u)$ for $u\geq 1/2$, and the iterated
functions $g_{r}(u)=\underset{r}{\underbrace{g\circ g\circ \cdot \cdot \cdot
\circ g}}(u)$. Let
\begin{equation*} \label{def:phi_true}
\varphi _{R}(u)=u-\sum\nolimits_{r=1}^{R}\frac{g_{r}(u)}{2^{2r}}.
\end{equation*} %
It is clear that $\varphi _{R}(0)=0$. It is shown in \cite{Y17} that for the
function $v(u)=u^{2}$ with $u\in \lbrack 0,1]$, it can be approximated by $%
\varphi _{R}(u)$ such that
\begin{equation*}
||v-\varphi _{R}||_{\infty }\leq 2^{-2R-2}.
\end{equation*}%
Moreover, the tooth function $g$ can be implemented by a ReLU network: $%
g\left( u\right) =2\sigma (u)-4\sigma (u-1/2)+2\sigma (u-1)$ which has 1
hidden layer and 3 computational units. Therefore, $\varphi _{R}(u)$ can be
constructed by a ReLU network with the depth $R+2$ , the computational units
$3R+1$, and the number of weights $12R-5+3R+1=15R-4$. 

Next, we approximate the function $%
v(u_{1},u_{2})=u_{1}u_{2}=2^{-1}((u_{1}+u_{2})^{2}-u_{1}^{2}-u_{2}^{2})$ for
$u_{1}\in \lbrack 0,1]$ and $u_{2}\in \lbrack 0,1]$ by a ReLU network as
follows. By the above results, we have $\textcolor{black}{|\varphi _{R}((u_{1}+u_{2})/2%
)-((u_{1}+u_{2})/2)^{2}|\leq 2^{-2R-2}}$, $|2^{-2}\varphi
_{R}(u_{1})-2^{-2}u_{1}^{2}|\leq 2^{-2}2^{-2R-2}$ and $|2^{-2}\varphi
_{R}(u_{2})-2^{-2}u_{2}{}^{2}|\leq 2^{-2}2^{-2R-2}$. Let
\begin{equation*}
\widetilde{\varphi }_{R}(u_{1},u_{2})=2\left\{ \varphi _{R}(\frac{u_{1}+u_{2}%
}{2})-\frac{\varphi _{R}(u_{1})}{2^{2}}-\frac{\varphi _{R}(u_{2})}{2^{2}}%
\right\} ,
\end{equation*}%
and $\widetilde{\varphi }_{R}(u_{1},u_{2})$ can be implemented by a ReLU
network having the depth, the computational units and the number of weights
being $c_{1}R+c_{2}$, where the constants $c_{1}$ and $c_{2}$ can be
different for these three measures. Moreover, $\widetilde{\varphi }%
_{R}(u_{1},u_{2})=0$ if $u_{1}u_{2}=0$. For all $u_{1}\in \lbrack 0,1]$ and $%
u_{2}\in \lbrack 0,1]$,
\begin{align}
& \textcolor{black}{\left\vert \widetilde{\varphi }_{R}(u_{1},u_{2})-v(u_{1},u_{2})\right\vert} \notag \\
& =2\left\vert \left\{ \varphi _{R}(\frac{u_{1}+u_{2}}{2})-\frac{\varphi
_{R}(u_{1})}{2^{2}}-\frac{\varphi _{R}(u_{2})}{2^{2}}\right\} -\left\{ (%
\frac{u_{1}+u_{2}}{2})^{2}-\frac{u_{1}{}^{2}}{2^{2}}-\frac{u_{2}{}^{2}}{2^{2}%
}\right\} \right\vert  \notag \\
& \leq 2\left( 2^{-2R-2}+2^{-2}2^{-2R-2}+2^{-2}2^{-2R-2}\right) =3\cdot
2^{-2R-2}.  \label{EQ:ftilda}
\end{align} 

At last, we can approximate the function $v(\boldsymbol{u}%
)=\prod\nolimits_{j=1}^{d}u_{j}$ for $\boldsymbol{u}\in \lbrack 0,1]^{d}$
from a binary tree structure constructed based on the function $\widetilde{\varphi }_{R}(\cdot, \cdot)$ given in (\ref{EQ:ftilda}). The resulting network is denoted by \textcolor{black}{$\widetilde{\varphi }_R%
(\boldsymbol{u})$}. It can be shown from mathematical induction \cite{MD19} that for any $%
\boldsymbol{u}\in \lbrack 0,1]^{d}$,
\begin{equation}
|\textcolor{black}{\widetilde{\varphi }_R(\boldsymbol{u})}-v(\boldsymbol{u})|\leq
(1+2+2^{2}+\cdot \cdot \cdot +2^{\left\lfloor \log _{2}d\right\rfloor
-1})\cdot 3\cdot 2^{-2R-2}\leq 3\cdot 2^{-2R-2}(d-1),  \label{EQ:phitilda}
\end{equation}%
where $\left\lfloor a\right\rfloor $ is the largest integer no greater than $%
a$. Moreover, $\textcolor{black}{\widetilde{\varphi }_{R}(\boldsymbol{u})}=0$ if $h(\boldsymbol{%
u})=0$. The ReLU network used to approximate $\phi _{\boldsymbol{\ell },%
\boldsymbol{s}}(\boldsymbol{x})$ has depth $\mathcal{O}(R)\times \log _{2}d=%
\mathcal{O}(R\log _{2}d)$, the computational units $\mathcal{O}(R)\times
(d+2^{-1}d+\cdot \cdot \cdot +2^{-\left\lfloor \log _{2}d\right\rfloor +1}d)=%
\mathcal{O}(Rd)$, and the number of weights $\mathcal{O}(Rd)$.

For any function $f\in W^{2,p}(\mathcal{X)}$, it has a unique expression in
a hierarchical basis \cite{BG04} such that $f(\boldsymbol{x})=\sum_{%
\boldsymbol{0}_{d}\leq \boldsymbol{\ell }\leq \boldsymbol{\infty }%
}\sum_{s\in I_{\boldsymbol{\ell }}}\gamma _{_{\boldsymbol{\ell },\boldsymbol{%
s}}}^{0} \phi _{\boldsymbol{\ell },\boldsymbol{s}}(\boldsymbol{x})$, where $%
\phi _{\boldsymbol{\ell },\boldsymbol{s}}(\boldsymbol{x})=\prod%
\limits_{j=1}^{d}\phi _{\ell _{j},s_{j}}(x_{j})$ are the tensor product
piecewise linear basis functions defined on the grids $\Omega _{\boldsymbol{%
\ell }}$ of level $\boldsymbol{\ell }=(\ell _{1},...,\ell _{d})^{\top }$, $%
I_{\boldsymbol{\ell }}$ are the index sets of level $\boldsymbol{\ell }$,
and the hierarchical coefficients $\gamma _{_{\boldsymbol{\ell },\boldsymbol{%
s}}}^{0}\in \mathbb{R}$ are given in (\ref{gammaexpression}). We refer to
Section \ref{SEC:discrete} in the Appendix for a detailed discussion on the
hierarchical basis functions. Section \ref{SEC:discrete} shows that for any $%
f\in W^{2,p}(\mathcal{X)}$, it can be well approximated by the hierarchical
basis functions with sparse grids such that $f(\boldsymbol{x})\approx
\sum\nolimits_{|\boldsymbol{\ell |}_{1}\leq m}\sum\nolimits_{s\in I_{%
\boldsymbol{\ell }}}\gamma _{_{\boldsymbol{\ell },\boldsymbol{s}}}^{0}\phi _{%
\boldsymbol{\ell },\boldsymbol{s}}(\boldsymbol{x})$. Then the hierarchical
space with sparse grids is given as
\begin{equation*}
V_{m}^{\left( 1\right) }=\text{span}\{\phi _{\boldsymbol{\ell },\boldsymbol{s%
}}:s\in I_{\boldsymbol{\ell }},|\boldsymbol{\ell }|_{1}\leq m\}.
\end{equation*}%
\textcolor{black}{We provide explicit upper and lower bounds for the
dimension (cardinality) of the space $V_{m}^{\left( 1\right) }$ in Proposition \ref{PROP:cardinality}.}

Based on the result given in (\ref{EQ:phitilda}), we let each $u_{j}=\phi
_{\ell _{j},s_{j}}(x_{j})$, \textcolor{black}{ so $\boldsymbol{u} = \boldsymbol{\phi}_{\boldsymbol{\ell}, \boldsymbol{s}}(\boldsymbol{x}) = (\phi
_{\ell _{1},s_{1}}(x_{1}), \phi
_{\ell _{2},s_{2}}(x_{2}), \dots, \phi
_{\ell _{d},s_{d}}(x_{d}))$, $v(\boldsymbol{u})=\prod\limits_{j=1}^{d}\phi
_{\ell _{j},s_{j}}(x_{j})=\phi _{\boldsymbol{\ell },\boldsymbol{s}}(%
\boldsymbol{x})$} and thus the hierarchical basis functions $\phi _{%
\boldsymbol{\ell },\boldsymbol{s}}(\boldsymbol{x})$ can be well approximated
by the ReLU network $\widetilde{\varphi }_R$. Then the ReLU network approximator of the unknown function $f(%
\boldsymbol{x})$ is
\begin{equation}
\widetilde{f}(\boldsymbol{x})=\sum_{|\boldsymbol{\ell |}_{1}\leq
m}\sum_{s\in I_{\boldsymbol{\ell }}}\gamma _{_{\boldsymbol{\ell },%
\boldsymbol{s}}}^{0}\textcolor{black}{\widetilde{\varphi }_R(\boldsymbol{\phi} _{\boldsymbol{\ell },%
\boldsymbol{s}}(\boldsymbol{x}))}=\sum_{|\boldsymbol{\ell |}_{1}\leq m}%
\widetilde{g}_{\boldsymbol{\ell }}(\boldsymbol{x}).
\label{EQ:ReLUapproximator}
\end{equation}

\begin{assumption}
\label{ass3}
\textcolor{black}{Let $p_X(\boldsymbol{x})$ be the density function of $\mu_X(\boldsymbol{x})$. Assume that for all $\boldsymbol{x}\in \mathcal{X}$, $0\leq p_{X}(\boldsymbol{x})\leq c_{\mu }$ for some constant $c_{\mu }\in (0,\infty )$. }
\end{assumption}

\begin{remark}
    \textcolor{black}{We assume the density function $p_X(\boldsymbol{x})$ is upper bounded by $c_\mu$. This condition is easily met and is used to establish approximation error bounds.}
\end{remark}

The following proposition provides the approximation error of the
approximator $\widetilde{f}\left( \cdot \right) $ obtained from the ReLU
network to the true unknown function $f\left( \cdot \right) $.

\begin{proposition}
\label{THM:error_ReLU}For any $f\in W^{2,p}(\mathcal{X)}$, $2\leq p\leq
\infty $, under Assumption \ref{ass3}, one has that for $d\geq 2$,
\begin{equation*}
||\widetilde{f}-f||_{2}\leq \left\{ (3/2)2^{-2R}+6^{-1}c_{\mu
}2^{-2m}\{(2/3)(m+3)\}^{d-1}\right\} ||D^{\boldsymbol{2}}f||_{L^{2}}.
\end{equation*}%
\textcolor{black}{The ReLU network that is used to construct the approximator $\widetilde{f}$
has the number of computational units \textcolor{black}{$%
\mathcal{O}(
2^{m}d^{\frac{3}{2}}R ( 2e\frac{m+d}{d-1}) ^{d-1}) $, the number
of weights $ \mathcal{O}%
( 2^{m}d^{\frac{3}{2}}R ( 2e\frac{m+d}{d-1}) ^{d-1}) $, and depth $\mathcal{O}(R\log _{2}d)$.} }
\end{proposition}

\begin{remark}
\textcolor{black}{From the mathematical expression (\ref%
{EQ:ReLUapproximator}) and the construction of $\widetilde{\varphi}_R$}, we see that the approximator $\widetilde{f}(%
\boldsymbol{\cdot })$ of the unknown function $f(\boldsymbol{\cdot })$ is
constructed from a sparse deep ReLU network, as the nodes on each layer are
not fully connected with the nodes from the previous layer, and the depth of
the network has the order of $R\log _{2}d$ which increases with $R$.
\end{remark}

\begin{remark}
\cite{MD19} showed that the approximation error of the deep ReLU network can
achieve accuracy $\epsilon >0$. We further derive an explicit form of the
bound to see how it depends on the dimension and the network complexity. In
Theorem \ref{THM:rate}, we will show that $m$ and $R$ need to grow with the
sample size $n$ slowly at a logarithmic rate to achieve a tradeoff between
bias and variance, so the depth of the ReLU network grows with $n$ at a
logarithmic rate, and the number of computational units increases with $n$ at
a polynomial rate.
\end{remark}

\section{\textcolor{black}{Sparse deep ReLU network estimator\label{SEC:SDRN algrithom}}}
In this section, we introduce the Sparse Deep ReLU Network estimator (SDRN)  for the regression function (\ref%
{EQ:f0}). 

\subsection{SDRN architecture}
\label{sec: SDRN architecture}
Our SDRN is constructed based on 3 subnetworks given in Figures \ref{Fig:plot_func}-\ref{Fig:f} obtained from 3 steps presented as follows. Step 4 gives the output as a weighted linear combination of the network nodes. Specifically, Step 1 generates subnetwork 1 consisting of fully-connected feedforward neural networks. step 2 builds a sparse network (subnetwork 2) as a simple linear combination of the subnetworks from Step 1. Step 3 obtains the network nodes in the last hidden layer constructed from a binary tree. In Step 4, the output is obtained from a linear combination of the network nodes obtained in Step 3. Our considered SDRN can go deeper with fewer parameters to well approximate a smooth function as discussed in Section \ref{SEC:NN}, and it can overcome the overfitting problem \cite{SHKSS2014} encountered by the conventional FNNs.

\textbf{Step 1:} for any input $u\in \left[ 0,1\right] $, we first generate
a fully-connected feedforward neural network with $r$ hidden layers and $3$
nodes in each layer, denoted as
\begin{equation*}
g_{r}(u,\boldsymbol{\theta }_{r})=\sigma (\boldsymbol{\theta }_{r1}\sigma
\left( \boldsymbol{\theta }_{r-1,1}\cdots \sigma (\boldsymbol{\theta }%
_{21}\sigma (\boldsymbol{\theta }_{11}u+\boldsymbol{\theta }_{10})+%
\boldsymbol{\theta }_{20})+\cdots \boldsymbol{\theta }_{r-1,0}\right) +%
\boldsymbol{\theta }_{r0}), 
\end{equation*}%
where $\boldsymbol{\theta }_{r}=$vec$(\boldsymbol{\theta }_{10},\boldsymbol{%
\theta }_{11},...,\boldsymbol{\theta }_{r0,}\boldsymbol{\theta }_{r1})$, in which the elements  $\boldsymbol{\theta }_{j1}$ are $3\times 3$ matrices for $2\leq j\leq
r$, while the element $\boldsymbol{\theta }_{11}$ and $\boldsymbol{\theta }_{j0}$, for $1\leq j\leq r$, are $3\times 1$ vectors. \textcolor{black}{We further introduce a ReLU neural network $\varphi_R(u, \boldsymbol{\theta}, \boldsymbol{\eta})$ constructed by $g_r(u, \boldsymbol{\theta}_r)$, $r=1, \dots, R$, given by 
$\varphi _{R}(u,\boldsymbol{\theta },\boldsymbol{\eta })=\eta_{0}u+\sum\nolimits_{r=1}^{R} \boldsymbol{\eta}_{r}^{\top }g_{r}(u,\boldsymbol{\theta }_{r}),$
 where $\boldsymbol{\eta}_r =2^{-2r}\widetilde{\boldsymbol{\eta}}$, $\boldsymbol{\eta} = \{\eta_0, \widetilde{\boldsymbol{\eta}}^\top \}^\top$ is a 4-dimensional vector 
and $\boldsymbol{\theta }=$vec$(\boldsymbol{\theta }_{10},\boldsymbol{\theta }_{11},...,%
\boldsymbol{\theta }_{R0,}\boldsymbol{\theta }_{R1})$ is a $(12R-6)$%
-dimensional vector. This network structure is depicted in Figure \ref{Fig:plot_func}. We denote $\varphi_R(u, \boldsymbol{\theta}, \boldsymbol{\eta})$ as Sub1.}

\begin{figure}[tbp]
\centering
\vspace{0cm} $\includegraphics[scale=0.55]{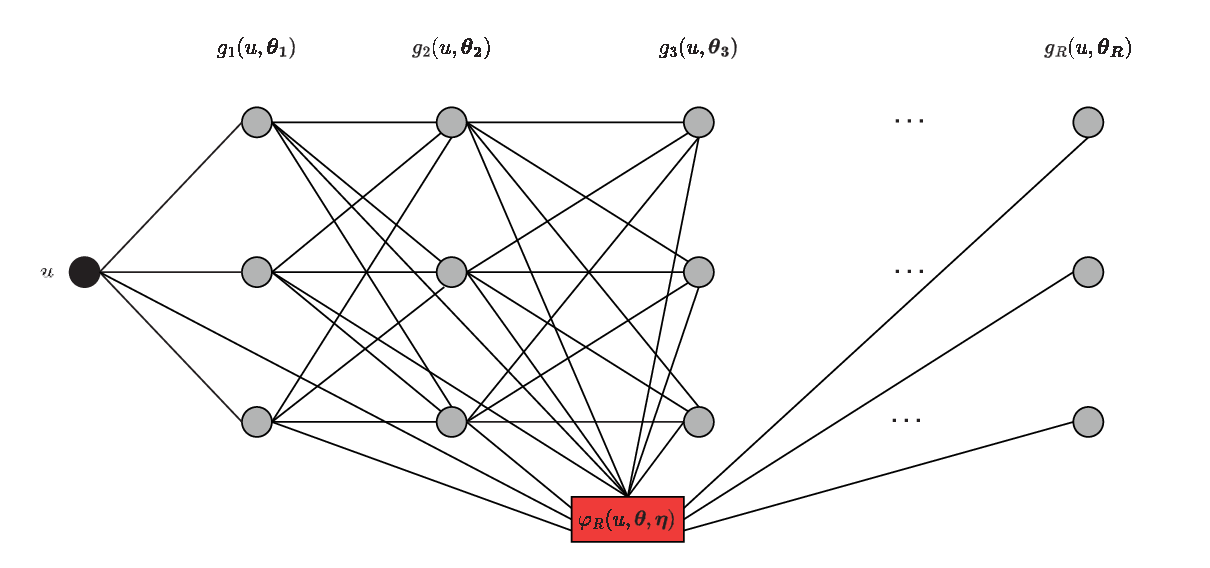}$%
\caption{The construction of the function $\varphi_{R}(u, \boldsymbol{\theta}, \boldsymbol{\eta} )$ by a ReLU network, denoted as Sub1.}
\label{Fig:plot_func}
\end{figure}

\textbf{Step 2:} \textcolor{black}{for any two-dimensional inputs $u_{1},u_{2}\in \left[ 0,1%
\right] ^{2}$, based on the ReLU network Sub1, we build a sparse ReLU network $%
\widetilde{\varphi }_{R}(u_{1},u_{2},\boldsymbol{\theta },\boldsymbol{\eta },%
\boldsymbol{\omega })=\omega _{1}\varphi _{R}(u_{1},\boldsymbol{\theta },%
\boldsymbol{\eta })+\omega _{2}\varphi _{R}(u_{2},\boldsymbol{\theta },%
\boldsymbol{\eta })+\omega _{3}\varphi _{R}((u_{1}+u_{2})/2,%
\boldsymbol{\theta },\boldsymbol{\eta })$, where $\boldsymbol{\omega }=(\omega _{1},\omega _{2},\omega _{3})^{\top }$. The corresponding network is shown in Figure \ref{Fig:phitilda} and is denoted as Sub2.}

\begin{figure}[tbp]
\centering
\vspace{0cm} $\includegraphics[scale=0.6]{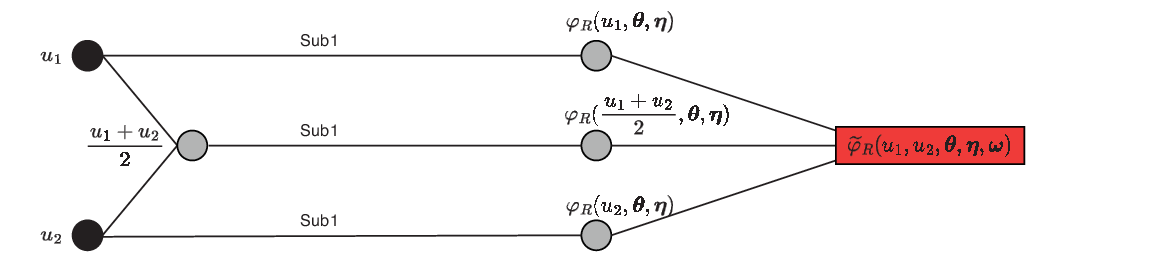}$
\caption{The construction of $\protect\widetilde{\varphi}_{R}(u_1,u_2, \boldsymbol{\theta}, \boldsymbol{\eta}, \boldsymbol{\omega})$ from the Sub1's, denoted as Sub2.}
\label{Fig:phitilda}
\end{figure}

\textbf{Step 3:} \textcolor{black}{for any $d$-dimensional inputs $\boldsymbol{u}\in \left[ 0,1%
\right] ^{d}$, where $\boldsymbol{u}=(u_1, u_2,\dots,u_d)^{\top }$, we construct the sub-ReLU neural network $\widetilde{\varphi }_R%
(\boldsymbol{u},\boldsymbol{\theta },\boldsymbol{\eta },\boldsymbol{%
\omega })$ based on a binary tree structure. This involves using Sub2 to integrate pairs of nodes from the preceding layer. Figure \ref{Fig:phtilda} shows the construction of $\widetilde{\varphi}_R(\boldsymbol{u}, \boldsymbol{\theta}, \boldsymbol{\eta}, \boldsymbol{\omega})$, clearly illustrating that $\widetilde{\varphi}_R(\boldsymbol{u}, \boldsymbol{\theta}, \boldsymbol{\eta}, \boldsymbol{\omega})$ is a ReLU neural network with sparse connections. We denote $\widetilde{\varphi}_R(\boldsymbol{u}, \boldsymbol{\theta}, \boldsymbol{\eta}, \boldsymbol{\omega})$ as Sub3.}

\begin{figure}[tbp]
\centering
\vspace{0cm} $\includegraphics[scale=0.6]{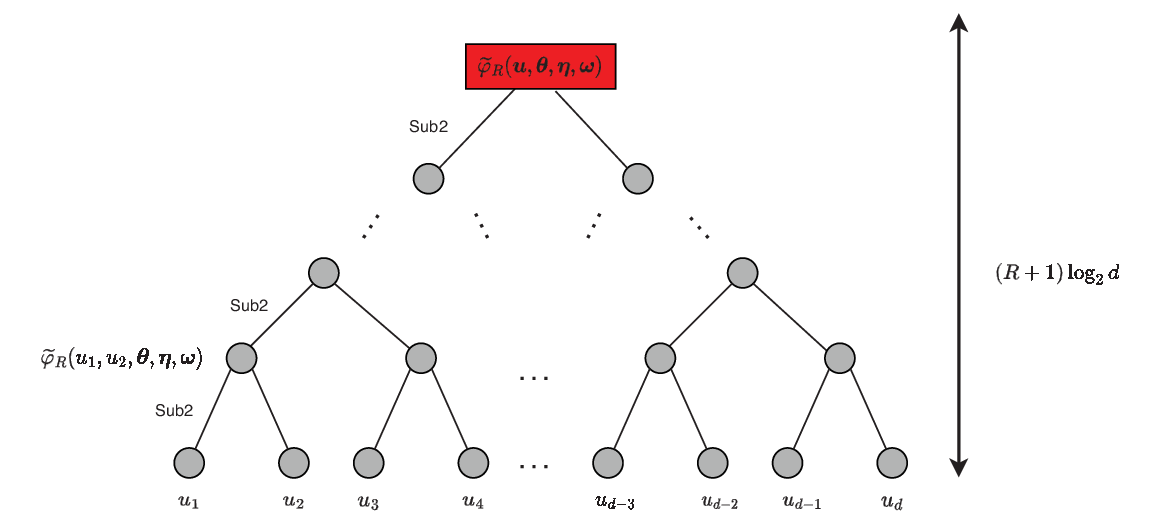}$
\caption{The construction of $\widetilde{\varphi}_R(\boldsymbol{u}, \boldsymbol{\theta}, \boldsymbol{\eta}, \boldsymbol{\omega})$ from the Sub2's, denoted as Sub3.}
\label{Fig:phtilda}
\end{figure}

\textbf{Step 4:} with d-dimensional inputs $\boldsymbol{x} = (x_1, x_2, \dots, x_d)$, letting \textbf{\ }$u_j=\phi _{\ell _{j},s_{j}}(x_{j})$ for $1\leq j\leq d$, and \textcolor{black}{ $\boldsymbol{u} = \boldsymbol{\phi}_{\boldsymbol{\ell}, \boldsymbol{s}} = (\phi _{\ell _{1},s_{1}}(x_{1}), \dots, \phi _{\ell _{d},s_{d}}(x_{d}))^\top$,} then the output from the SDRN is built based on the linear combination of \textcolor{black}{ all sub-ReLU networks $\widetilde{\varphi }_R(\boldsymbol{\phi}_{\boldsymbol{\ell}, \boldsymbol{s}}, \boldsymbol{\theta }, \boldsymbol{\eta }, \boldsymbol{\omega })$, and can be expressed as
\begin{equation*}
f(\boldsymbol{x},\boldsymbol{\theta },\boldsymbol{\eta },\boldsymbol{\omega }%
,\boldsymbol{\gamma }) = \sum\nolimits_{|\boldsymbol{\ell |}_{1}\leq
m}\sum\nolimits_{\boldsymbol{s}\in I_{\boldsymbol{\ell }}}\gamma _{_{\boldsymbol{\ell }, \boldsymbol{s}}}\textcolor{black}{\widetilde{\varphi }_R(\boldsymbol{\phi} _{\boldsymbol{\ell },\boldsymbol{s}},\boldsymbol{\theta },\boldsymbol{\eta }, \boldsymbol{\omega })} 
\end{equation*}}%
, where $\boldsymbol{\gamma }=\{\gamma _{_{\boldsymbol{\ell },\boldsymbol{s}%
}}:\boldsymbol{s}\in I_{\boldsymbol{\ell }},|\boldsymbol{\ell |}_{1}\leq m\}^{\top }$, Figure \ref{Fig:f} shows the resulting network architecture.

\begin{figure}[tbp]
\centering
\vspace{0cm} $\includegraphics[scale=0.5]{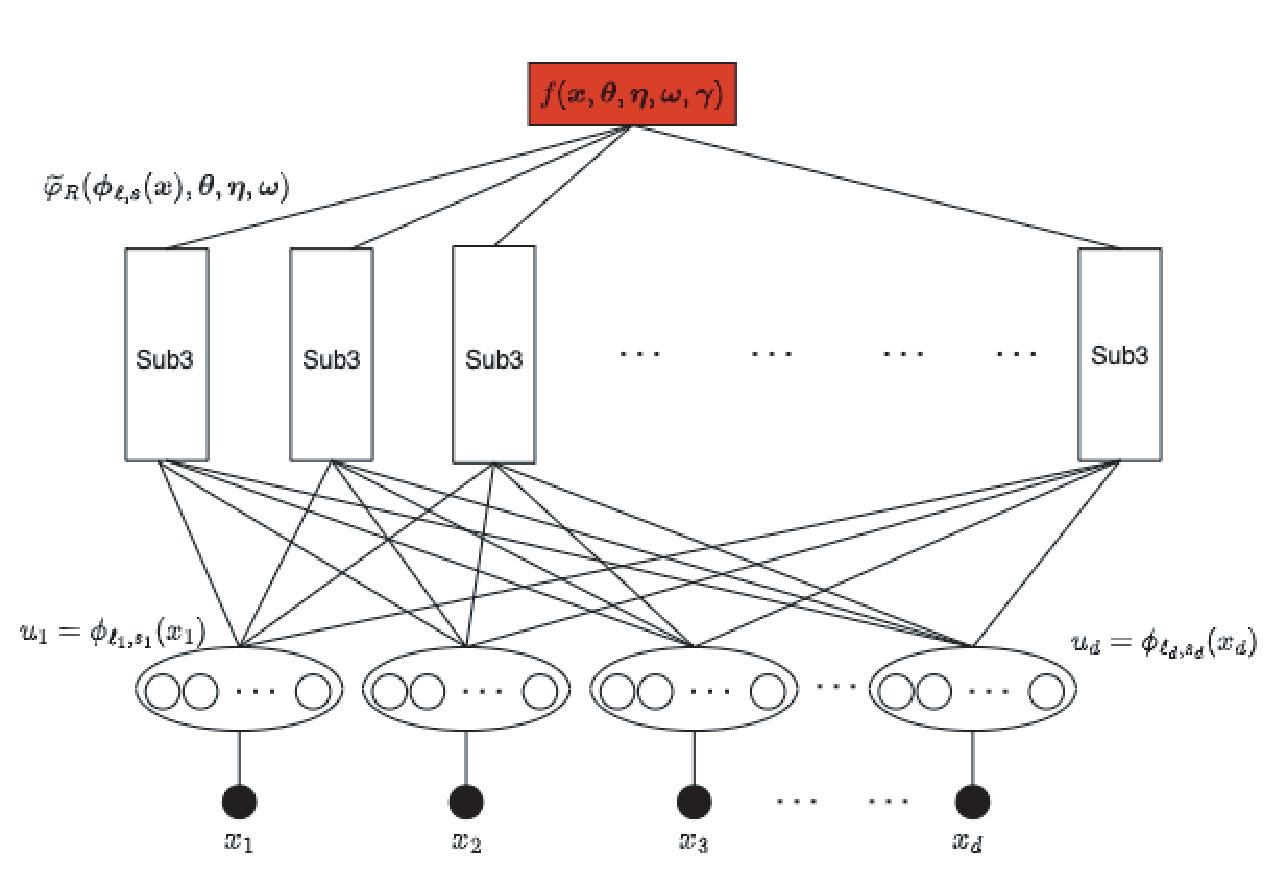}$
\caption{The construction of $f(\boldsymbol{x},\boldsymbol{\theta },\boldsymbol{\eta },\boldsymbol{\omega }
,\boldsymbol{\gamma })$ from the Sub3's.}
\label{Fig:f}
\end{figure}

\textcolor{black}{We provide a pseudocode for constructing the SDRN based on the previous 4 steps in Algorithm \ref{alg:pseudocode}.}
\begin{algorithm}
\caption{\textcolor{black}{Pseudocode to construct SDRN, refer to Section \ref{sec: SDRN architecture} for detailed explanation.  Let $\boldsymbol{u}=\boldsymbol{\phi}_{\boldsymbol{\ell}, \boldsymbol{s}}(\boldsymbol{x}) = (\phi_{l_1, s_1}(x_1), \phi_{l_2, s_2}(x_2), \dots, \phi_{l_d, s_d}(x_d))$  be a $d$-dimensional vector and denote the last element in $\boldsymbol{u}$ as $\boldsymbol{u}[-1]$, and $u$, $u_1$, $u_2$ are scalars.}}\label{alg:pseudocode}    
\begin{algorithmic}
    \Require Input $\boldsymbol{x}_{n \times d}$: $d$-dimensional data with sample size $n$ 
    \Require $m$: control the number of basis functions
    \Require  $\boldsymbol{\Theta }=(\boldsymbol{\theta }^{\top }, \boldsymbol{\eta }^{\top },
    \boldsymbol{\omega }^{\top },\boldsymbol{\gamma }^{\top })^{\top }$: parameter vector
\State $R=3\max(\lfloor 0.2 \log_2 n \rfloor, m)$ (number of hidden layers in $\varphi_R(u, \boldsymbol{\theta}, \boldsymbol{\eta})$)
\State
\Function{$\varphi_R$}{$u, \boldsymbol{\theta}, \boldsymbol{\eta}$} (Sub1)
\State $s_r \gets \eta_0 \cdot u$ (Initialize the starting sum value)
\State $h_r \gets u$ (layer's input)
\For{$r$ from $1$ to $R$} (Iterate from the first hidden layer to the last)
    \State $g_r \gets \sigma(\boldsymbol{\theta}_{r1} \cdot h_r + \boldsymbol{\theta}_{r0})$  (Compute the layer's output)
    \State $s_r \gets s_r + \boldsymbol{\eta}_r^{\top} \cdot g_r$ (Add the weighted output of this layer)
    \State  $h_r \gets g_r$ (Update layer's input for the next layer)
\EndFor
    \State \textbf{return} $s_r$
\EndFunction
\State
\Function{$\widetilde{\varphi}_R$}{$u_1, u_2, \boldsymbol{\theta}, \boldsymbol{\eta}, \boldsymbol{\omega}$}
 (Sub2)
\State \textbf{return}  $ \omega_1 \varphi_R(u_1, \boldsymbol{\theta}, \boldsymbol{
\eta}) + \omega_2 \varphi_R(u_2, \boldsymbol{\theta}, \boldsymbol{
\eta}) + \omega_3 \varphi_R(\frac{u_1+u_2}{2}, \boldsymbol{\theta}, \boldsymbol{
\eta})$ 
\EndFunction

\State
    \Function{$\widetilde{\varphi}_R$}{$\boldsymbol{u}, \boldsymbol{\theta}, \boldsymbol{\eta}, \boldsymbol{\omega}$} (Sub3: binary tree structure)
    \State $p \gets$ length($\boldsymbol{u}$) ($p$ represents the number of elements in $\boldsymbol{u}$)
    \While {$p \ge 2$}
    \State $\boldsymbol{u}^{\prime} \gets [\text{ }]$ (Initialize an empty vector $\boldsymbol{u}^{\prime}$ to store elements in next layer)
    \For{$j$ from $1$ to $\lfloor p/2 \rfloor$}
    \State $\boldsymbol{u}^{\prime}[j] \gets \widetilde{\varphi}_R(\boldsymbol{u}[2j-1], \boldsymbol{u}[2j], \boldsymbol{\theta}, \boldsymbol{\eta}, \boldsymbol{\omega})$ 
    \State (Call Sub2 to combine the pair of elements for the next layer $\boldsymbol{u}^{\prime}$)
    \EndFor
    \If{$p$ is odd}
    \State $\boldsymbol{u}^{\prime} \gets [\boldsymbol{u}^{\prime} \quad \boldsymbol{u}[-1]]$ 
    \State(If there's an unpaired element left, carry it over to $\boldsymbol{u}^{\prime}$)
    \EndIf
    \State $\boldsymbol{u} \gets \boldsymbol{u}^{\prime}$ (Replace $\boldsymbol{u}$ with the new layer, continue to the next layer)
    \State $p \gets \text{length}(\boldsymbol{u})$ (Update $p$ to the current length of $\boldsymbol{u}$)
    \EndWhile
    \State \textbf{return}  $\boldsymbol{u}$  
    \EndFunction
\State
\State \textbf{Output} $f(\boldsymbol{x}, \boldsymbol{\theta}, \boldsymbol{\eta}, \boldsymbol{\omega}, \boldsymbol{\gamma}) \gets \sum\nolimits_{|\boldsymbol{\ell |}_{1}\leq
m}\sum\nolimits_{s\in I_{\boldsymbol{\ell }}}\gamma _{_{\boldsymbol{\ell }, \boldsymbol{s}}}\widetilde{\varphi }_R(\boldsymbol{\phi} _{\boldsymbol{\ell },\boldsymbol{s}}(\boldsymbol{x}),\boldsymbol{\theta },\boldsymbol{\eta }, \boldsymbol{\omega })$
\State (Call Sub3 for each $\boldsymbol{\phi}_{\boldsymbol{\ell}, \boldsymbol{s}}(\boldsymbol{x})$, then linearly combine $\widetilde{\varphi}_R(\boldsymbol{\phi}_{\boldsymbol{\ell}, \boldsymbol{s}}(\boldsymbol{x}), \boldsymbol{\theta}, \boldsymbol{\eta}, \boldsymbol{\omega}))$
\end{algorithmic}
\end{algorithm}

\subsection{SDRN estimator}

We denote the vector of the whole parameters in $f(\boldsymbol{x},%
\boldsymbol{\theta },\boldsymbol{\eta },\boldsymbol{\omega },\boldsymbol{%
\gamma })$ as $\boldsymbol{\Theta }=(\boldsymbol{\theta }^{\top },%
\boldsymbol{\eta }^{\top },\boldsymbol{\omega }^{\top },\boldsymbol{\gamma }%
^{\top })^{\top }$. Then the SDRN estimator can be written as $f(\boldsymbol{x},%
\boldsymbol{\Theta })=f(\boldsymbol{x},\boldsymbol{\theta },\boldsymbol{\eta
},\boldsymbol{\omega },\boldsymbol{\gamma })$.  Since the ReLU network
          function $\widetilde{f}$ in (\ref{EQ:ReLUapproximator}) is constructed in
the same way as above, there exists a set of parameters $\boldsymbol{\Theta }%
^{0}=(\boldsymbol{\theta }^{0\top },\boldsymbol{\eta }^{0\top },\boldsymbol{%
\omega }^{0\top },\boldsymbol{\gamma }^{0\top })^{\top }$ such that $\widetilde{f}$
can be written $\widetilde{f}(\boldsymbol{x},\boldsymbol{\Theta }^{0})$. Define the ReLU network class as
\textcolor{black}{
\begin{eqnarray}
\mathcal{F(}R ,m,B_{0},B_{1}) &=&\left\{ f:\mathcal{X}\rightarrow \mathbb{%
R},f(\boldsymbol{x},\boldsymbol{\Theta })=\sum\nolimits_{|\boldsymbol{\ell |}%
_{1}\leq m}\sum\nolimits_{s\in I_{\boldsymbol{\ell }}}\gamma _{_{\boldsymbol{%
\ell },\boldsymbol{s}}}\widetilde{\varphi }_R(\boldsymbol{\phi} _{\boldsymbol{\ell },%
\boldsymbol{s}}(\boldsymbol{x}),\boldsymbol{\theta },\boldsymbol{\eta },%
\boldsymbol{\omega }),\right.  \notag \\
&&\left. \gamma _{_{\boldsymbol{\ell },\boldsymbol{s}}}\in
\mathbb{R}, ||f||_{\infty }\leq B_{0},|\boldsymbol{\Theta }|_{2}\leq
B_{1}\right\} ,  \label{DEF:F}
\end{eqnarray}%
}
with $B_{0}\geq \max (||f_{0}||_{\infty },||\widetilde{f}||_{\infty })$ and $%
B_{1}\geq |\boldsymbol{\Theta }^{0}|_{2}$. Then $\widetilde{f}\in \textcolor{black}{ \mathcal{F(%
}R ,m,B_{0},B_{1})}$.

Define the empirical risk as $\mathcal{E}_{n}(f;\boldsymbol{\Theta }%
)=n^{-1}\sum_{i=1}^{n}\rho (f(\boldsymbol{X}_{i},\boldsymbol{\Theta }),Y_{i})
$, and the regularized empirical risk as $\mathcal{E}_{n}^{P}(f;\boldsymbol{%
\Theta })=n^{-1}\sum_{i=1}^{n}\rho (f(\boldsymbol{X}_{i},\boldsymbol{\Theta }%
),Y_{i})+2^{-1}\lambda \boldsymbol{\Theta }^{\top }\boldsymbol{\Theta }$,
where $\lambda >0$ is a tuning parameter for the $L_{2}$ (ridge) penalty.
The $L_{2}$ penalty is often used to prevent over-fitting. When $\lambda =0$%
, the regularized empirical risk is reduced to be the empirical risk. Then,
the penalized SDRN estimator $\widehat{f}$ of $f_{0}$ satisfies%
\begin{equation}
\mathcal{E}_{n}^{P}(\widehat{f};\widehat{\boldsymbol{\Theta }})\leq
\min_{f\in \textcolor{black}{ \mathcal{F(%
}R ,m,B_{0},B_{1})}}\{\mathcal{E}_{n}^{P}(f;%
\boldsymbol{\Theta })\}+\varpi _{n},  \label{EQ:ReLUestimator}
\end{equation}%
where $\widehat{f}(\boldsymbol{x})=f(\boldsymbol{x},\widehat{\boldsymbol{%
\Theta }})$, and $\varpi _{n}=o(1)$ satisfying a condition given in Theorem %
\ref{THM:sampliingerror1}. Note that the penalized SDRN estimator $\widehat{f%
}$ obtained from (\ref{EQ:ReLUestimator}) does not need to be the global
minimizer of the objective function $\mathcal{E}_{n}^{P}(f;\boldsymbol{%
\Theta })$; it can be any local solution such that the difference of the
objective function values evaluated at the local solution and the global
minimizer is bounded by a small term $\varpi _{n}$. We adopt the Adam algorithm given in
\cite{KingmaBa2015} for obtaining the estimate of $\boldsymbol{\Theta }$. This algorithm considers first-order gradient-based optimization,
and it is straightforward to implement and has little memory requirements. We use the default settings for the hyperparameters used in the Adam algorithm given in \cite{KingmaBa2015}.

\section{Theory\label{SEC:ReLUestimator}}

For a given estimator $\widehat{f}$, we define the overall error as $%
\mathcal{E}(\widehat{f})-\mathcal{E}(f_{0})$, which is used to measure how
close the estimator $\widehat{f}$ to the true target function $f_{0}$. Let
\begin{equation}
f^{\ast }=\arg \min_{f\in \textcolor{black}{ \mathcal{F(%
}R ,m,B_{0},B_{1})}}\mathcal{E}(f)%
\text{, where }\mathcal{E}(f)=\int_{\mathcal{X\times Y}}\rho (f(\boldsymbol{x%
},\boldsymbol{\Theta }),y)d\mu (\boldsymbol{x,}y).  \label{EQ:f0RN}
\end{equation}%
Then the overall error of the estimator $\widehat{f}$ can be split into
the approximation error $\mathcal{E}(f^{\ast })-\mathcal{E}(f_{0})$ and the
sampling error $\mathcal{E}(\widehat{f})-\mathcal{E}(f^{\ast })$ such that
\begin{equation*}
\underset{\text{overall error}}{\underbrace{\mathcal{E}(\widehat{f})-%
\mathcal{E}(f_{0})}}=\underset{\text{approximation error}}{\underbrace{%
\mathcal{E}(f^{\ast })-\mathcal{E}(f_{0})}}+\underset{\text{estimation error}%
}{\underbrace{\mathcal{E}(\widehat{f})-\mathcal{E}(f^{\ast })}}.
\end{equation*}%
We will establish the upper bounds for the approximation error and the
estimation error, respectively, as follows.

We introduce the following Bernstein condition that is required for
obtaining the probability bound for the estimation error of our SDRN
estimator.

\begin{assumption}
\label{ass4}There exists a constant $0<a_{\rho }<\infty $ such that
\begin{equation}
a_{\rho }||f-f^{\ast }||_{2}^{2}\leq \mathcal{E}(f)-\mathcal{E}(f^{\ast })
\label{Bernstein}
\end{equation}%
for any $f\in \textcolor{black}{ \mathcal{F(%
}R ,m,B_{0},B_{1})}$.
\end{assumption}

\begin{remark}
The Bernstein condition given in (\ref{Bernstein}) for Lipschitz loss
functions is used in the literature in order to establish probability bounds
of estimators obtained from empirical risk minimization \cite{ACL19}. A
more general form is $a_{\rho }||f-f^{\ast }||_{2}^{2\kappa }\leq \mathcal{E}%
(f)-\mathcal{E}(f^{\ast })$ for some $\kappa \geq 1$. The parameter $\kappa $
can affect the estimator's rate of convergence. For proof convenience, we
let $\kappa =1$ which is satisfied by many commonly used loss functions. We
will give a detailed discussion on this Bernstein condition, and will
present different examples in Section \ref{discussion_ass}. 
\end{remark}

\begin{remark}
\label{rho}From the Lipschitz condition given in Assumption \ref{ass2}, we
have that there exists a constant $0<M_{\rho }<\infty $ such that $|\rho
\left( f(\boldsymbol{x}),y\right) |\leq M_{\rho }$, for almost every $(%
\boldsymbol{x},y)\in \mathcal{X}\times \mathcal{Y}$ and any $f\in \textcolor{black}{ \mathcal{F(%
}R ,m,B_{0},B_{1})}$.
\end{remark}

Another condition is given below and it is used for controlling the
approximation error from the ReLU networks.

\begin{assumption}
\label{ass5}There exists a constant $0<b_{\rho }<\infty $ such that
\begin{equation}
\mathcal{E}(f)-\mathcal{E}(f_{0})\leq b_{\rho }||f-f_{0}||_{2}^{2}
\label{Bernstein2}
\end{equation}%
for any $f\in \textcolor{black}{ \mathcal{F(%
}R ,m,B_{0},B_{1})}$.
\end{assumption}

\begin{remark}
Assumption \ref{ass5} is introduced for controlling the approximation error $%
\mathcal{E}(f^{\ast })-\mathcal{E}(f_{0})$, but it is not required for
establishing the upper bound of the sampling error $\mathcal{E}(\widehat{f})-%
\mathcal{E}(f^{\ast })$. The approximation error $\mathcal{E}(f^{\ast })-%
\mathcal{E}(f_{0})$ can be well controlled based on the result from
Proposition \ref{THM:error_ReLU} together with Assumption \ref{ass5}.
Without this assumption, the approximation error will have a slower rate.
Assumption \ref{ass5} is satisfied by the quadratic, logistic, quantile and
Huber loss functions under mild conditions. More discussions on this
assumption will be provided in Section \ref{discussion_ass}. 
\end{remark}

Under Condition (\ref{Bernstein2}) given in Assumption \ref{ass5}, by the
definition of $f^{\ast }$ given in (\ref{EQ:f0RN}) and Proposition \ref%
{THM:error_ReLU}, the approximation error
\begin{equation*}
\mathcal{E}(f^{\ast })-\mathcal{E}(f_{0})\leq \mathcal{E}(\widetilde{f})-%
\mathcal{E}(f_{0})\leq b_{\rho }||\widetilde{f}-f_{0}||_{2}^{2}.
\end{equation*}%
Since $f_{0}$ satisfies Assumption \ref{ass1}, then $||D^{\boldsymbol{2}%
}f_{0}||_{L^{2}}\leq C_{f}$ for some constant $C_{f}\in (0,\infty )$. Next
proposition presents an upper bound for the approximation error when the
unknown function $f_{0}$ is approximated by the SDRN obtained from the ERM
in (\ref{EQ:f0RN}).

\begin{proposition}
\label{THM:error_ReLUbound}Under Assumptions \ref{ass1}, \ref{ass3} and \ref%
{ass5}, and $R\geq m$ and $d\geq 2$, one has%
\begin{equation*}
\mathcal{E}(f^{\ast })-\mathcal{E}(f_{0})\leq \zeta _{m,d},
\end{equation*}%
where
\begin{equation}
\zeta _{m,d}=4^{-1}b_{\rho }(3+c_{\mu }/3)^{2}2^{-4m}\{(2/3)(m+3)\}^{2\left(
d-1\right) }||D^{\boldsymbol{2}}f_0||_{L^{2}}^{2},  \label{XiRmd}
\end{equation}
in which $c_{\mu }$ and $b_{\rho }$ are given in Assumptions \ref{ass3} and %
\ref{ass5}, respectively.
\end{proposition}
Note that without Assumption \ref{ass5}, we obtain a looser bound for $%
\mathcal{E}(f^{\ast })-\mathcal{E}(f_{0})=\mathcal{O(}\zeta _{m,d}^{1/2})$
based on the result $\mathcal{E}(\widetilde{f})-\mathcal{E}(f_{0})\leq
C_{\rho }||\widetilde{f}-f_{0}||_{2}$ which is directly implied from
Assumption \ref{ass2}.

Next we establish the bound for the sampling error $\mathcal{E}(\widehat{f})-%
\mathcal{E}(f^{\ast })$. Let $\mathcal{N}(\delta ,\mathcal{F},||\cdot
||_{\infty })$ be the covering number, that is, the minimal number of $%
||\cdot ||_{\infty }$- balls with radius $\delta $ that covers $\mathcal{F}$
and whose centers reside in $\mathcal{F}$. In the theorem below, we provide
an upper bound for the estimation error $\mathcal{E}(\widehat{f})-\mathcal{E}%
(f^{\ast })$.

\begin{theorem}
\label{THM:sampliingerror1}Under Assumptions \ref{ass1}-\ref{ass4}, we have
that for any $\epsilon >0$ and $\varpi _{n}+\lambda B_{1}^{2}<(1/2)\epsilon $%
,
\begin{equation*}
P\left\{ \mathcal{E}(\widehat{f})-\mathcal{E}(f^{\ast })>(3/2)\epsilon
\right\} \leq \mathcal{N}(\sqrt{2}C_{\rho }^{-1}\epsilon /8,\textcolor{black}{ \mathcal{F(%
}R ,m,B_{0},B_{1})},||\cdot ||_{\infty })\exp \left( -n\epsilon /C^{\ast
}\right)
\end{equation*}%
, where $C^{\ast }=64(C_{\rho }^{2}a_{\rho }^{-1}+4M_{\rho }/3)$, in which $%
C_{\rho },a_{\rho }$ and $M_{\rho }$ are constants given in Assumptions \ref%
{ass2} and \ref{ass4} and Remark \ref{rho}.
\end{theorem}

\begin{theorem}
\label{THM:sampliingerror2}Under the same assumptions as given in Theorem %
\ref{THM:sampliingerror1}, there exist constants $c,C\in (0,\infty )$ such
that
\begin{equation*}
P\left( \mathcal{E}(\widehat{f})-\mathcal{E}(f^{\ast })>\frac{3C^{\ast
}CWL\log (W)}{2n}\max (1,\log \frac{C^{\ast \ast }n}{cWL\log (W/L)\varsigma }%
)\right) \leq \varsigma .
\end{equation*}%
where $C^{\ast }$ is given in Theorem \ref{THM:sampliingerror1}, $C^{\ast \ast }=16C_{\rho }B_0C^{\ast -1}$, $\textcolor{black}{\varsigma \asymp  (\frac{16C_{\rho }B_0}{\epsilon })^{WL\log (W)} \exp \left(\frac{-n\epsilon}{C^{\ast }}\right)} $, in which
the number of parameters in the sparse deep ReLU network is $W\ \asymp |V_{m}^{\left( 1\right) }| +R$ and the number of layers is $L\asymp R\log
_{2}d$.
\end{theorem}

Based on the upper bound for the estimation error given in Theorem \ref%
{THM:sampliingerror2}, and the bound for the approximation error given in (%
\ref{XiRmd}), we can further obtain the risk rate of the SDRN estimator $%
\widehat{f}$ presented in the following theorems.

\begin{theorem}
\label{THM:rate}Under Assumptions \ref{ass1}-\ref{ass5}, $2^{m}\asymp n^{1/5}
$, $R\asymp \log _{2}n$ and $m\leq R$, when $d=\mathcal{O}\left( (\log
_{2}n)^{1-\kappa }\right) $ for an arbitrary small constant $\kappa >0$,
then the penalized SDRN estimator $\widehat{f}$ given in (\ref%
{EQ:ReLUestimator}) with \textcolor{black}{$\varpi _{n}=\mathcal{O}(n^{-\frac{4}{5}+ \frac{\nu}{2}}(\log
_{2}n)^{\frac{3\kappa}{2} +2})\mathcal{\ }$} and \textcolor{black}{$\lambda =\mathcal{O}(n^{-\frac{4}{5}+\frac{\nu}{2}}(\log _{2}n)^{\frac{3\kappa}{2} +2})$} has the risk rate
\begin{equation*}
\mathcal{E}(\widehat{f})-\mathcal{E}(f_{0})=o_{p}(n^{-4/5+\nu }(\log
_{2}n)^{-2}), \text{ for an arbitrarily small $\nu >0$}. 
\end{equation*}%
The approximation error
satisfies $\mathcal{E}(f^{\ast })-\mathcal{E}(f_{0})=o(n^{-4/5+\nu }(\log
_{2}n)^{-2})$, and the estimation error satisfies $\mathcal{E}(\widehat{f})-%
\mathcal{E}(f^{\ast })=\mathcal{O}_{p}(n^{-4/5+\nu /2}(\log
_{2}n)^{7/2-\kappa /2})$. The ReLU network that is used to construct the
estimator $\widehat{f}$ has depth $\mathcal{O}[\log _{2}n \allowbreak \{\log _{2}(\log
_{2}n)\}]$, the number of computational units $\mathcal{O}\{\left( \log
_{2}n\right) ^{3/2(1-\kappa )}n^{1/5+\nu /2}\}$, and the number of weights $%
\mathcal{O}\{\left( \log _{2}n\right) ^{3/2(1-\kappa )}n^{1/5+\nu /2}\}$.
\end{theorem}

\begin{proposition}
\label{THM:rate_fixed}Under the same conditions in Theorem \ref%
{THM:sampliingerror2}, when $d\ $is fixed, then the penalized SDRN estimator
$\widehat{f}$ given in (\ref{EQ:ReLUestimator}) with $\varpi _{n} =\mathcal{O}%
(n^{-4/5+\nu /2} (\log _{2}n)^{3\kappa /2+2})\mathcal{\ }$ and $\lambda =%
\mathcal{O}(n^{-4/5+\nu /2}(\log _{2}n)^{3\kappa /2+2})$ has the risk rate
\begin{equation*}
\mathcal{E}(\widehat{f})-\mathcal{E}(f_{0})=\mathcal{O}_{p}(n^{-4/5}(\log
_{2}n)^{\left( d+3\right) \vee \left( 2d-2\right) }).
\end{equation*}%
Moreover, the approximation error satisfies $\mathcal{E}(f^{\ast })-\mathcal{%
E}(f_{0})=\mathcal{O}(n^{-4/5}(\log _{2}n)^{2d-2})$, and the estimation
error satisfies $\mathcal{E}(\widehat{f})-\mathcal{E}(f^{\ast })=\mathcal{O}%
_{p}(n^{-4/5}(\log _{2}n)^{d+3})$. The ReLU network that is used to
construct the estimator $\widehat{f}$ has depth $\mathcal{O}(\log _{2}n)$,
the number of computational units $\mathcal{O}\{\left( \log _{2}n\right)
^{d}n^{1/5}\}$, and the number of weights $\mathcal{O}\{\left( \log
_{2}n\right) ^{d}n^{1/5}\}$.
\end{proposition}
\begin{remark}
We focus on deriving the optimal risk rate for the SDRN estimator of the
unknown function $f_{0}$ when it belongs to the Korobov space of mixed
derivatives of order $\beta =2$. Then the derived rate can be written as $%
n^{-2\beta /(2\beta +1)}(\log _{2}n)^{2d}$ when $d$ is fixed. It is possible
to derive a similar estimator for a smoother regression function that has
mixed derivatives of order $\beta >2$ when Jacobi-weighted Korobov spaces %
\cite{SW10} are considered. This can be an interesting topic for the future
work.
\end{remark}

\begin{remark}
It is worth noting that for the classical nonparametric regression
estimators such as spline estimators \cite{S82}, the optimal minimax risk
rate is $n^{-4/(4+d)}$, if the regression function belongs to the Sobolev
spaces $S^{2,p}(\mathcal{X})$. This rate suffers from the curse of
dimensionality as $d$ increases.

\cite{BK2019} showed that their least squares neural network estimator can
achieve the rate $n^{-2\beta /(2\beta +d^{\ast })}$ (up to a log factor), if
the regression function satisfies a $\beta $-smooth generalized hierarchical
interaction model of order $d^{\ast }$. When $\beta =2$, the rate is $%
n^{-4/(4+d^{\ast })}$. The rates mentioned above require $d$ to be fixed.
\cite{BK2019} consider a smooth activation function, while \cite{Schmidt2020}
have established a similar optimal rate for ReLU activation function.

Proposition \ref{THM:rate_fixed} shows that when $f_{0}$ belongs to the Korobov spaces
$W^{2,p}(\mathcal{X)}$, our SDRN estimator has the risk rate $n^{-4/5}(\log
_{2}n)^{2d+1}$ and it achieves the optimal minimax rate (up to a log factor)
as one-dimensional nonparametric regression, if the dimension $d$ is fixed.
The effect of $d$ is passed on to a logarithm order, so the curse of
dimensionality can be alleviated. When $d$ increases with $n$ with an order $%
(\log _{2}n)^{1-\kappa }$, the risk rate is slightly slower than $n^{-4/5}$.

\cite{mao2022approximation} derived an approximation error for deep convolutional neural networks (DCNNs) when the target function belongs to Korobov spaces, but they did not provide an estimation error. Estimators obtained from ERM have two errors: the approximation error and the estimation error. The mean squared error used to evaluate the overall performance of a machine learning estimator comes from both errors. When their approximation error is in the same order as (similar to) ours, their network size needs to be significantly larger than ours, leading to a larger estimation error. Specifically, to reach the approximation accuracy, the total number of free parameters $\mathcal{N}$ in their DCNNs given in equation (2.2) of  \cite{mao2022approximation}  is
\begin{equation}
\label{eq: para_DCNN}
    \mathcal{N} \le 13385d^2(\log_2d)^2(log_2N)^2N\text{, with}\quad N \ge 2^{16}.
\end{equation}
By the construction of our proposed SDRN in Section \ref{sec: SDRN architecture}, the total number of free parameters in SDRN is bounded by
\begin{align}
|\boldsymbol{\Theta }|& =|\boldsymbol{\gamma }|+|\boldsymbol{\theta }|+|%
\boldsymbol{\eta }|+|\boldsymbol{\omega }|=|V_{m}^{(1)}|+|\boldsymbol{\theta
}|+|\boldsymbol{\eta }|+|\boldsymbol{\omega }|  \notag  \label{eq: para_SDRN}
\\
& \leq \sum\nolimits_{|\boldsymbol{\ell |}_{1}\leq m}2^{\sum_{j=1}^{d}\ell
_{j}\wedge 2-d}+12R+1,
\end{align}%
where $m=\left\lfloor 0.2\log _{2}n\right\rfloor$ and $R=3\max
(\left\lfloor 0.2\log _{2}n\right\rfloor ,m)$ used in our numerical
analysis. We see that our SDRN requires much fewer free parameters than DCNNs to achieve the same order of the approximation error. For example, for $n=2000$ and $d=10$, the network size of our SDRN is bounded by 67657, while it is $14770624\times(log_2N)^2N$ with $N \ge 2^{16}$ for DCNNs. Since the estimation error depends on the model complexity, our SDRN has a smaller estimation error and overall better performance.

\end{remark}

\begin{proposition}
\label{THM:rate_fixed_lower}Under the same conditions in Theorem \ref%
{THM:rate}, when the sample size is sufficiently large, the lower bound for
the overall error of the penalized SDRN estimator $\widehat{f}$ given in (%
\ref{EQ:ReLUestimator}) is given as
\begin{equation*}
P\left( \mathcal{E}(\widehat{f})-\mathcal{E}(f_{0})>c_{1}n^{-4/5}(%
{ \log _{2}n)^{2}\log _{2}d}\right) \geq 1/100,
\end{equation*}%
for some constant $c_{1}>0$.
\end{proposition}

\begin{remark}
Proposition \ref{THM:rate_fixed_lower} presents a lower bound for the
overall error of the SDRN estimator $\widehat{f}$. When $d$ is fixed, the
rate of the lower bound is comparable to the upper bound rate given in
Proposition \ref{THM:rate_fixed}, which is $n^{-4/5}(\log _{2}n)^{\left(
d+3\right) \vee \left( 2d-2\right) }$, but it is smaller than the upper
bound as one can show that $({\log _{2}n)^{2}\log _{2}d<}(\log
_{2}n)^{\left( d+3\right) \vee \left( 2d-2\right) }$. Thus, our SDRN
estimator achieves a nearly tight optimal risk rate.
\end{remark}


\section{Discussions on Assumptions \protect\ref{ass4} and \protect\ref{ass5}%
}

\label{discussion_ass}

We first state a general condition given in Assumption \ref{ass6} presented
below. We will show that if a loss function satisfies this condition, then
it will satisfy Assumption \ref{ass4} (Bernstein condition) and Assumption %
\ref{ass5}.

\begin{assumption}
\label{ass6}For all $y\in \mathcal{Y}$, the loss function $\rho \left( \cdot
,y\right) $ is strictly convex and it has a bounded second derivative such
that $\rho ^{\prime \prime }\left( \cdot ,y\right) \in \lbrack 2a_{\rho
},2b_{\rho }]$ almost everywhere, for some constants $0<a_{\rho }\leq
b_{\rho }<\infty $.
\end{assumption}

Assumption \ref{ass6} is satisfied by a variety of classical loss functions
such as quadratic loss and logistic loss. For example, for the quadratic
loss $\rho \left( f(\boldsymbol{x}),y\right) =(y-f(\boldsymbol{x}))^{2}$,
clearly $\rho ^{\prime \prime }\left( \cdot ,y\right) =2$, so $a_{\rho
}=b_{\rho }=1$.

Let $f_{_{0}}$ solve $\int_{\mathcal{Y}}\rho ^{\prime }(f_{0}(\boldsymbol{x}%
),y)d\mu (y|\boldsymbol{x})=0$ and $f_{0}\in W^{2,p}(\mathcal{X})$. Then $%
f_{_{0}}$ is the target function that minimizes the expected risk given in (%
\ref{EQ:f0}). Lemma \ref{LEM:f-f0} given below will show that Assumptions %
\ref{ass4} and \ref{ass5} are implied from Assumption \ref{ass6}.

\begin{lemma}
\label{LEM:f-f0}Under Assumption \ref{ass6}, for any $f\in \textcolor{black}{ \mathcal{F(%
}R ,m,B_{0},B_{1})}$, one has $a_{\rho }||f-f^{\ast
}||_{2}^{2}\leq \mathcal{E}(f)-\mathcal{E}(f^{\ast })$ and $\mathcal{E}(f)-%
\mathcal{E}(f_{0})\leq b_{\rho }||f-f_{0}||_{2}^{2}$.
\end{lemma}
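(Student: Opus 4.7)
The plan is to prove both inequalities by a second-order Taylor expansion of $\rho(\cdot,y)$, using the two-sided bound on $\rho''$ from Assumption \ref{ass6}, and then invoking first-order optimality of $f_{RL}^0$ and $f_0$ at their respective minima to handle the linear term.

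For the first inequality, I would fix $f \in \mathcal{F}(\widetilde{\phi},m,B)$ and, for each $(\boldsymbol{x},y)$, apply Taylor's theorem with the Lagrange remainder to $\rho(\cdot,y)$ around $f_{RL}^0(\boldsymbol{x})$:
\begin{equation*}
\rho(f(\boldsymbol{x}),y) = \rho(f_{RL}^0(\boldsymbol{x}),y) + \rho'(f_{RL}^0(\boldsymbol{x}),y)\bigl(f(\boldsymbol{x}) - f_{RL}^0(\boldsymbol{x})\bigr) + \tfrac{1}{2}\rho''(\xi(\boldsymbol{x},y),y)\bigl(f(\boldsymbol{x}) - f_{RL}^0(\boldsymbol{x})\bigr)^2,
\end{equation*}
for some $\xi(\boldsymbol{x},y)$ between $f(\boldsymbol{x})$ and $f_{RL}^0(\boldsymbol{x})$. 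Integrating against $\mu$ and using $\rho'' \geq 2a_\rho$, the quadratic term produces exactly $a_\rho\|f - f_{RL}^0\|_2^2$. The remaining step is to argue that the linear term is nonnegative. Since $\mathcal{F}(\widetilde{\phi},m,B)$ is convex (the coefficient vector $\boldsymbol{\gamma}$ lies in a convex set determined by the sup-norm constraint) and $f_{RL}^0$ minimizes $\mathcal{E}$ on this set, the directional derivative of $\mathcal{E}$ at $f_{RL}^0$ in the direction $f - f_{RL}^0$ is nonnegative. Interchanging differentiation and expectation (justified by the bound on $\rho''$ and the boundedness of $f, f_{RL}^0$), this directional derivative equals $\int \rho'(f_{RL}^0(\boldsymbol{x}),y)\bigl(f(\boldsymbol{x}) - f_{RL}^0(\boldsymbol{x})\bigr)\,d\mu(\boldsymbol{x},y) \geq 0$, which closes the first inequality.

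For the second inequality, I would Taylor-expand $\rho(\cdot,y)$ around $f_0(\boldsymbol{x})$ in the same way and integrate. Using the upper bound $\rho'' \leq 2b_\rho$, the quadratic term is at most $b_\rho\|f - f_0\|_2^2$. The linear term now vanishes exactly: conditioning on $\boldsymbol{x}$, the defining equation of $f_0$ stated just above the lemma, $\int_{\mathcal{Y}} \rho'(f_0(\boldsymbol{x}),y)\,d\mu(y|\boldsymbol{x}) = 0$, gives $\mathbb{E}[\rho'(f_0(\boldsymbol{X}),Y)\mid \boldsymbol{X}] = 0$, so by the tower property the whole linear term is zero, regardless of whether $f \in \mathcal{F}(\widetilde{\phi},m,B)$ lies in an unconstrained neighborhood of $f_0$.

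The only real subtlety is the first part: justifying the sign of the linear term requires the optimality of $f_{RL}^0$ on the convex class $\mathcal{F}(\widetilde{\phi},m,B)$ together with enough regularity to differentiate under the integral. Once those are in place, the rest is a routine application of Taylor's theorem with the two-sided control $\rho'' \in [2a_\rho, 2b_\rho]$.
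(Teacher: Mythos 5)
Your proposal is correct and follows essentially the same route as the paper: a second-order Taylor expansion around $f_{RL}^{0}$ (resp.\ $f_{0}$), the two-sided bound $\rho''\in[2a_{\rho},2b_{\rho}]$ for the quadratic term, first-order optimality of $f_{RL}^{0}$ on the convex class $\mathcal{F}(\widetilde{\phi},m,B)$ (the paper isolates this as Lemma \ref{LEM:convexity} and passes the limit inside the integral by dominated convergence) to sign the linear term, and the first-order condition $\int_{\mathcal{Y}}\rho'(f_{0}(\boldsymbol{x}),y)\,d\mu(y|\boldsymbol{x})=0$ to kill it in the second inequality. The only cosmetic difference is that the paper writes the remainder in integral form $\int_{0}^{1}2^{-1}\rho''(t_{0}+(t-t_{0})\omega,y)(t-t_{0})^{2}\,d\omega$ rather than Lagrange form, which is the safer choice since $\rho''$ is only controlled almost everywhere.
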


It is easy to see that the quantile and Huber loss functions do not satisfy
Assumption \ref{ass6}. In the lemmas below we will show that under mild
conditions, Assumptions \ref{ass4} and \ref{ass5} are satisfied by the
quantile and Huber loss functions.

\begin{lemma}
\label{LEM:quantile}Assume that for all $\boldsymbol{x}\in \mathcal{X}$, it
is possible to define a conditional density function \textcolor{black}{$\mu ^{\prime }(u|\boldsymbol{x})$} of $Y|\boldsymbol{X}=%
\boldsymbol{x}$ such that $1/C_{1}\leq \mu ^{\prime }(u|\boldsymbol{x})\leq
1/C_{2}$ for some $C_{1}\geq C_{2}>0$ for all $u\in \{u\in \mathbb{R}$: $%
|u-f^{\ast }(\boldsymbol{x})|\leq 2B_{0}$ or $|u-f_{0}(\boldsymbol{x})|\leq
2B_{0}\}$. Then for any $f\in \textcolor{black}{ \mathcal{F(%
}R ,m,B_{0},B_{1})}$,
the quantile loss given in (\ref{def:quantile}) satisfies $a_{\rho
}||f-f^{\ast }||_{2}^{2}\leq \mathcal{E}(f)-\mathcal{E}(f^{\ast })$ and $%
\mathcal{E}(f)-\mathcal{E}(f_{0})\leq b_{\rho }||f-f_{0}||_{2}^{2}$ with $%
a_{\rho }=(2C_{1})^{-1}$ and $b_{\rho }=(2C_{2})^{-1}$.
\end{lemma}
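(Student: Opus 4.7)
The plan is to apply Knight's identity for the quantile loss: for any $a, v \in \mathbb{R}$,
$$\rho(a + v, y) - \rho(a, y) = v(I\{y \leq a\} - \tau) + \int_0^v \bigl(I\{y \leq a + s\} - I\{y \leq a\}\bigr)\, ds.$$
Taking conditional expectations given $\boldsymbol{X}$, the integral term becomes $\int_0^v [F(a + s|\boldsymbol{X}) - F(a|\boldsymbol{X})]\, ds$, where $F(\cdot|\boldsymbol{x})$ is the conditional CDF of $Y$ given $\boldsymbol{X} = \boldsymbol{x}$. By the density bounds $1/C_1 \leq \mu'(t|\boldsymbol{x}) \leq 1/C_2$ from the hypothesis, a direct computation (splitting into $v>0$ and $v<0$) shows this integral is sandwiched between $v^2/(2C_1)$ and $v^2/(2C_2)$.

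For the upper bound $\mathcal{E}(f) - \mathcal{E}(f_0) \leq b_\rho \|f - f_0\|_2^2$, I would set $a(\boldsymbol{X}) = f_0(\boldsymbol{X})$ and $v(\boldsymbol{X}) = f(\boldsymbol{X}) - f_0(\boldsymbol{X})$. Since $f_0$ is the conditional $\tau$-quantile (from the first-order optimality condition stated before the lemma), $F(f_0(\boldsymbol{X})|\boldsymbol{X}) = \tau$, so the first-order term in Knight's identity has zero mean. The remainder is therefore at most $E[v(\boldsymbol{X})^2/(2C_2)] = \|f - f_0\|_2^2/(2C_2)$, giving $b_\rho = 1/(2C_2)$. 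The density bound applies on the relevant range because $\|f\|_\infty, \|f_0\|_\infty \leq B$ forces $|v(\boldsymbol{x})| \leq 2B$, matching the hypothesis.

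For the lower bound $a_\rho \|f - f_{RL}^0\|_2^2 \leq \mathcal{E}(f) - \mathcal{E}(f_{RL}^0)$, I would anchor instead at $a(\boldsymbol{X}) = f_{RL}^0(\boldsymbol{X})$ with $v(\boldsymbol{X}) = f(\boldsymbol{X}) - f_{RL}^0(\boldsymbol{X})$. Knight's identity now leaves a first-order term $E[v(\boldsymbol{X})(F(f_{RL}^0(\boldsymbol{X})|\boldsymbol{X}) - \tau)]$, which is not automatically zero but must be nonnegative by first-order optimality: since $\mathcal{F}(\widetilde{\phi}, m, B)$ is convex and contains $f$, the map $t \mapsto \mathcal{E}(f_{RL}^0 + t(f - f_{RL}^0))$ is minimized at $t = 0$ over $[0, 1]$, so its right derivative at $t = 0$ is nonnegative, and that right derivative equals the above expectation. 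The integral remainder then contributes at least $\|f - f_{RL}^0\|_2^2/(2C_1)$, yielding $a_\rho = 1/(2C_1)$.

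The main delicate step is the first-order optimality argument: the quantile loss is nondifferentiable at zero, so one must verify that the directional right derivative of $\mathcal{E}$ exists and equals the stated expectation. The boundedness of the conditional density ensures $P(Y = f_{RL}^0(\boldsymbol{X}) + tg(\boldsymbol{X})|\boldsymbol{X}) = 0$ for every $t \geq 0$, so the difference quotient $[\rho(f_{RL}^0 + tg, Y) - \rho(f_{RL}^0, Y)]/t$ converges almost surely to $g(\boldsymbol{X})(I\{Y \leq f_{RL}^0(\boldsymbol{X})\} - \tau)$. The Lipschitz property gives the integrable dominating bound $|g(\boldsymbol{X})| \leq 2B$, so dominated convergence delivers the exchange of derivative and expectation, closing the argument.
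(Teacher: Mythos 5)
Your proof is correct and is essentially the paper's own argument in different notation: Knight's identity, once you take conditional expectations over $y$, is exactly the paper's second-order Taylor expansion of the conditional risk $g(\boldsymbol{x},u)$ with integral remainder, with the density bounds controlling the remainder in the same way. Both proofs also dispose of the first-order term identically --- exact vanishing at $f_{0}$ from the first-order condition $F(f_{0}(\boldsymbol{x})|\boldsymbol{x})=\tau$, and nonnegativity at $f_{RL}^{0}$ from the directional-derivative/convexity argument (the paper's Lemma \ref{LEM:convexity}) justified by dominated convergence.
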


\begin{lemma}
\label{LEM:Huber}Assume that for all $\boldsymbol{x}\in \mathcal{X}$, $%
1/c_{1}\leq \mu (u+\delta |\boldsymbol{x})-\mu (u-\delta |\boldsymbol{x}%
)\leq 1/c_{2}$ for some $c_{1}\geq c_{2}>0$ for all $u\in \{u\in \mathbb{R}$%
: $|u-f^{\ast }(\boldsymbol{x})|\leq 2B_{0}$ or $|u-f_{0}(\boldsymbol{x}%
)|\leq 2B_{0}\}$, where $\mu (u|\boldsymbol{x})$ is the conditional
cumulative function of $Y$ given $Y|\boldsymbol{X}=\boldsymbol{x}$. Then for
any \textcolor{black}{$f\in \mathcal{F(}R ,m,B_{0},B_{1})$}, the Huber loss
given in (\ref{def:huber}) satisfies $a_{\rho }||f-f^{\ast }||_{2}^{2}\leq
\mathcal{E}(f)-\mathcal{E}(f^{\ast })$ and $\mathcal{E}(f)-\mathcal{E}%
(f_{0})\leq b_{\rho }||f-f_{0}||_{2}^{2}$ with $a_{\rho }=(2c_{1})^{-1}$ and
$b_{\rho }=(2c_{2})^{-1}$.
\end{lemma}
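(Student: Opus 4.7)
The plan is to adapt the reasoning behind Lemma \ref{LEM:f-f0}, working around the fact that the Huber loss does not satisfy Assumption \ref{ass6}: its second derivative $\rho''(t,y) = \mathbf{1}\{|y-t| \leq \delta\}$ vanishes whenever $|y-t| > \delta$, so it has no uniform positive lower bound. The key observation is that $\rho(\cdot,y)$ is still $C^1$ with an absolutely continuous derivative, so the integral form of Taylor's theorem gives, for any $f, g \in \mathcal{F}(\widetilde{\phi}, m, B)$,
\[
\rho(f(\boldsymbol{x}),y) - \rho(g(\boldsymbol{x}),y) = \rho'(g(\boldsymbol{x}),y)\bigl(f(\boldsymbol{x})-g(\boldsymbol{x})\bigr) + \int_{g(\boldsymbol{x})}^{f(\boldsymbol{x})} \bigl(f(\boldsymbol{x})-t\bigr)\,\rho''(t,y)\,dt.
\]
I would apply this with $g = f_0$ for the upper bound and with $g = f_{RL}^0$ for the lower bound, then integrate in $(\boldsymbol{x},y)$ against $d\mu(y|\boldsymbol{x})\,d\mu_X(\boldsymbol{x})$.

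To dispose of the linear term I would distinguish the two cases. When $g = f_0$, first-order optimality over $\mathcal{L}^p(\mathcal{X})$ (together with $C^1$-ness of Huber) gives $\int \rho'(f_0(\boldsymbol{x}),y)\,d\mu(y|\boldsymbol{x}) = 0$ a.e.\ in $\boldsymbol{x}$, so the linear term vanishes identically. When $g = f_{RL}^0$, the class $\mathcal{F}(\widetilde{\phi},m,B)$ is convex, hence $\alpha f + (1-\alpha) f_{RL}^0 \in \mathcal{F}$ for every $\alpha \in [0,1]$; minimality of $f_{RL}^0$ combined with convexity of $\rho$ and dominated convergence (valid since $|\rho'| \leq \delta$) yields the one-sided inequality $\int\!\int \rho'(f_{RL}^0(\boldsymbol{x}),y)\bigl(f(\boldsymbol{x}) - f_{RL}^0(\boldsymbol{x})\bigr)\,d\mu \geq 0$, which is all that is needed when proving a lower bound on $\mathcal{E}(f)-\mathcal{E}(f_{RL}^0)$.

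For the remainder term, Fubini combined with $\rho''(t,y) = \mathbf{1}\{|y-t|\leq \delta\}$ gives $\int \rho''(t,y)\,d\mu(y|\boldsymbol{x}) = \mu(t+\delta|\boldsymbol{x}) - \mu(t-\delta|\boldsymbol{x})$. Since $\|f\|_\infty, \|g\|_\infty \leq B$, every $t$ on the segment between $g(\boldsymbol{x})$ and $f(\boldsymbol{x})$ satisfies $|t - g(\boldsymbol{x})| \leq 2B$, so the hypothesized two-sided density bound $c_1^{-1} \leq \mu(t+\delta|\boldsymbol{x}) - \mu(t-\delta|\boldsymbol{x}) \leq c_2^{-1}$ applies on the full inner integration interval. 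Because $(f(\boldsymbol{x}) - t)\,dt$ is nonnegatively oriented regardless of the ordering of $f(\boldsymbol{x})$ and $g(\boldsymbol{x})$, a direct calculation pins the remainder integral into $\bigl[(f(\boldsymbol{x})-g(\boldsymbol{x}))^2/(2c_1),\,(f(\boldsymbol{x})-g(\boldsymbol{x}))^2/(2c_2)\bigr]$. Integrating against $\mu_X$ then delivers both claimed inequalities with $a_\rho = (2c_1)^{-1}$ and $b_\rho = (2c_2)^{-1}$.

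The main obstacle is the first-order treatment of $f_{RL}^0$: invoking $\nabla\mathcal{E}(f_{RL}^0)=0$ on the finite-dimensional parameter space would require the box constraint $\|f\|_\infty \leq B$ to be inactive at the optimum, which the hypotheses do not guarantee. The convex-combination directional-derivative argument sidesteps this entirely, and since the lower-bound side only needs the linear term to be nonnegative rather than zero, this one-sided conclusion is sufficient. A minor bookkeeping point is that the sign-free identity $\int_{g(\boldsymbol{x})}^{f(\boldsymbol{x})}(f(\boldsymbol{x})-t)\,dt = (f(\boldsymbol{x})-g(\boldsymbol{x}))^2/2$ requires a short case split on whether $f(\boldsymbol{x})$ is greater or less than $g(\boldsymbol{x})$, but the oriented integral $\int_g^f(f-t)w(t)\,dt$ remains nonnegative with the stated two-sided bounds in either configuration.
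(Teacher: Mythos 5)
Your argument is correct and is essentially the paper's own: the paper omits the proof of Lemma \ref{LEM:Huber}, stating that it follows the same procedure as Lemma \ref{LEM:quantile}, and your proposal is exactly that adaptation --- the directional-derivative/convexity treatment of the linear term at $f_{RL}^{0}$ (the paper's Lemma \ref{LEM:convexity}), the first-order condition $\int_{\mathcal{Y}}\rho'(f_{0}(\boldsymbol{x}),y)\,d\mu(y|\boldsymbol{x})=0$ at $f_{0}$, and the identification of the integrated second derivative with $\mu(u+\delta|\boldsymbol{x})-\mu(u-\delta|\boldsymbol{x})$ on the relevant $2B$-window. The only (immaterial) difference is bookkeeping: you Taylor-expand $\rho(\cdot,y)$ pointwise and then integrate via Fubini, whereas the paper first integrates over $y$ to form $g(\boldsymbol{x},u)$ and Taylor-expands that.
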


The proofs of Lemmas \ref{LEM:f-f0}-\ref{LEM:Huber} are provided in Section %
\ref{proof_LEM:f-f0} of the Appendix.

\begin{remark}
    \textcolor{black}{We impose the upper and lower bounds for the conditional density function for $u$ near $f^{\ast}(\boldsymbol{x})$ and $f_0(\boldsymbol{x})$. The upper bounds hold because $\mathcal{X} = [0,1]^d$ is compact. The lower bounds ensure the density function is non-zero when $u$ is near $f^{\ast}$ and $f_0$. These mild conditions are easily satisfied.}
\end{remark}

\section{Simulation studies}

\label{sec:simulation} In this section, we conduct simulation studies to
assess the finite-sample performance of the proposed methods.

\subsection{Date generating process}

\label{subsec:simu_models}  To illustrate the methods, we generate data from the following nonlinear models: 
\begin{align*}
\text{Model 1}& :\mathbb{E}(Y_{i}|X_{i})=X_{i1}^{\ast 2}X^{\ast}_{i2} + \cos (2.5(X^{\ast}_{i3}+X^{\ast}_{i4}) - 2) + 2X_{i5}^{\ast 2} + \frac{2X^{\ast}_{i5}}{X_{i3}^{\ast 2}+X_{i4}^{\ast 4}+2}; \\
\text{Model 2}& :\mathbb{E}(Y_{i}|X_{i})=\mathbb{P}(Y_{i}=1|X_{i})=\frac{%
e^{\eta _{i}}}{1+e^{\eta _{i}}}; \\
& \eta _{i}= 4\cos(2X^{\ast}_{i4}X^{\ast}_{i5} - X^{\ast}_{i2}) - 9 X^{\ast 2}_{i3} \sqrt{X^{\ast}_{i4}X^{\ast}_{i5} + X^{\ast}_{i1}} - 11X^{\ast}_{i3} + 12X^{\ast 2}_{i1};
\end{align*}%
where $X_i=(X_{i1},...,X_{id})^{\top }$ is the $d$-dimensional vector of the covariates, for $1 \le i \le n$. In Models 1 and 2, we let $X^{\ast}_{ij}=\frac{5}{d} \sum_{j^\prime = d(j -1)/5+1}^{dj /5} X_{ij^\prime}$, so each $X^{\ast}_{ij}$ is the average of $d/5$ covariates   for $1 \le j \le 5$. 

For Model 1, we generate the responses from $Y_{i}=E(Y_{i}\mid X_{i})+\epsilon _{i}$, where $\epsilon _{i}$ are independently generated from
the standard normal distribution and Laplace distribution, respectively, for
$1\leq i\leq n$.  For each setting, we run $n_{rep}=100$ replications. Let $n=2000,4000$ and $d=5,10,100.$ When $d=5,10$, we generate the covariates from $X_{i}\sim \mathcal{U}\left( [0,1]^{d}\right)$. When $d=100$, we generate the covariates from $X_{ij}=\Phi(Z_{ij})$, where $\Phi$ is the cdf of the standard normal and $Z_{ij} = F_i^\top L_j + \zeta_{ij}$, in which $\zeta_{ij} \sim \mathcal{N} \left(0, 0.5^2 \right)$, $F_i\sim \mathcal{N} \left( 0, \Sigma \right)$ and $L_j\sim \mathcal{N} \left( 0, \Sigma \right)$ with $\Sigma = \{0.5^{|k-k^\prime|}\}_{1 \le k, k^\prime \le 15}$. We keep $L_j$ fixed for each replication.

The nonlinear functions in Model 1 are designed based on the nonlinear patterns of the response changing with the covariates in the Boston housing data application analyzed in Section \ref{sec:real}. Figure \ref{simulation_plot} shows the scatterplot of three simulated covariates from Model 1; we can observe similar nonlinear patterns of age, NOX, and RM in Figure \ref{boston_scater} in Section \ref{sec:real}.

\begin{figure}[th]
\begin{center}
$%
\begin{array}{cc}
\includegraphics[width=14cm]{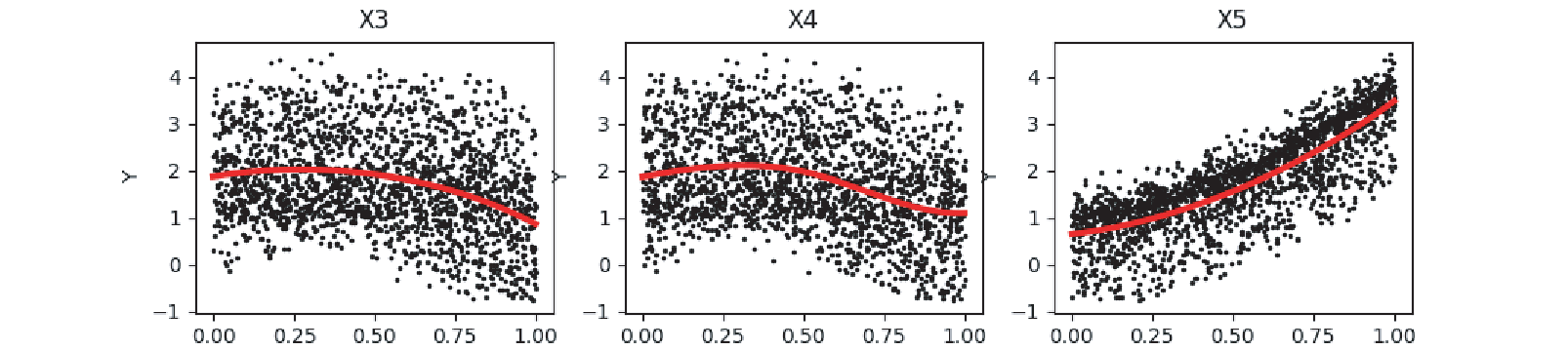} 
\end{array}%
$%
\end{center}
\caption{Scatterplot of three simulated covariates, where the red line represents the fitted mean curve by using cubic B-splines.}
\label{simulation_plot}
\end{figure}

We evaluate the numerical results based on the same test dataset $(y_{i}^{\ast },x_{i}^{\ast })$ for $1\leq i\leq n$ generated from our considered models.  For Model 1, let $\widehat{f}(x_{i}^{\ast })$ be the estimate of the regression function $f(x_{i}^{\ast })$.
We report
\begin{eqnarray*}
\text{average bias}^{2} &=&\frac{1}{n}\sum\nolimits_{i=1}^{n}\{\frac{1}{%
n_{rep}}\sum\nolimits_{j=1}^{n_{rep}}\widehat{f}(x_{i}^{\ast
})-f(x_{i}^{\ast })\}^{2}, \\
\text{average variance} &=&\frac{1}{n}\sum\nolimits_{i=1}^{n}\{\frac{1}{%
n_{rep}}\sum\nolimits_{j=1}^{n_{rep}}\widehat{f}(x_{i}^{\ast })^{2}-(\frac{1%
}{n_{rep}}\sum\nolimits_{j=1}^{n_{rep}}\widehat{f}(x_{i}^{\ast }))^{2}\}, \\
\text{average mse} &\text{=}&\frac{1}{n}\sum\nolimits_{i=1}^{n}\frac{1}{%
n_{rep}}\sum\nolimits_{j=1}^{n_{rep}}\{\widehat{f}(x_{i}^{\ast
})-f(x_{i}^{\ast })\}^{2}.
\end{eqnarray*}

For Model 2, we report the average values of square root mse (RMSE) for the estimated conditional probability, accuracy, sensitivity, specificity, precision, recall, and F1 score calculated based on the $n_{rep}=100$ simulation replicates. Let $\widehat{p}(x_i^{\ast}) = \widehat{p}(y_i^{\ast} = 1 | x_i^{\ast})$ be the estimated conditional probability of $p(x_i^{\ast}) = p(y_i^{\ast} = 1 | x_i^{\ast})$. The average RMSE is calculated by
\begin{equation*}
\text{average RMSE = } \sqrt{\frac{1}{n}\sum\nolimits_{i=1}^{n}\frac{1}{%
n_{rep}}\sum\nolimits_{j=1}^{n_{rep}}\{\widehat{p}(x_i^{\ast}) - p(x_i^{\ast})\}^{2}}.
\end{equation*} 

\subsection{Simulation results}

We use $m=\max (\left\lfloor 0.2\log _{2}n\right\rfloor
+c,0)$ and $R=3\max (\left\lfloor 0.2\log _{2}n\right\rfloor ,m)$ for
the SDRN estimator, where the constant $c$ is a tuning parameter.  
The choices of $m$ and $R$
satisfy the conditions in Theorem \ref{THM:rate}. In our procedure, we have two tuning parameters $c$ and $\lambda$. The constant $c$ determines the complexity of the neural networks, and $\lambda$ is used in the ridge penalty term.  Both of them control the fitting result. A larger value of $c$ and a smaller value of $\lambda$ can yield larger variance and smaller bias, while a smaller $c$ and a larger  $\lambda$ lead to smaller variance and larger bias. We choose the optimal values for  $c$ and  $\lambda$ by minimizing the prediction errors in the test dataset.

 For the large dimension setting with  $p=100$, to ensure robust performance, we first perform dimension reduction using PCA on the covariate matrix to reduce the high-dimensional features into lower-dimensional latent vectors, which are the leading principal components, and then use the principal components that explain the variance greater than 
 \textcolor{black}{80\% for Model 1 and 75\% for Model 2} as the predictors in our regression analysis. Then the vector of the principal components would be the inputs fitting into our SDRN model, so its dimension becomes the actual dimension of the inputs for the model.  
 Dimensionality reduction is still required for machine learning algorithms when the dimension of features is very large compared to the sample size.   

Table \ref{tab:model1} reports numerical results for the SDRN estimates from the optimal fitting with the optimal tuning parameter values for Model 1. Specifically, it shows the average mean squared
error (MSE), average $\text{bias}^{2}$ and average variance of the SDRN
estimates obtained from the quadratic and quantile $(\tau =0.5,\ 0.25)$ loss functions, respectively, based on the 100 simulation replications. Table \ref{tab:model1} shows that in general, the SDRN  estimates achieve a good balance between the bias$^{2}$ and variance at the optimal values of the tuning parameters. Moreover, we see that the MSE values decrease as the sample size increases. This corroborates with our theoretical results given in Theorem \ref{THM:rate}. The MSE values also increase gradually as the dimension $d$ becomes larger.  Our method can be applied to mean regression as well as quantile regression at any given quantile level. We observe that when the error terms are generated from a normal distribution, the SDRN estimate obtained from the quadratic loss (mean regression) has the smallest MSE, but the estimate obtained from the quantile loss at $\tau=0.5$ (median regression) has a comparable MSE value comparing to the estimate from the quadratic loss. However, when the errors are generated from the Laplace distribution which has heavy tails, we see that the SDRN estimate from the median regression using the quantile loss with  $\tau=0.5$ yields obviously smaller MSE values than the estimate from the mean regression using the quadratic loss.

\begin{table}[tbp]
\caption{The average MSE, $\text{bias}^{2}$ and variance of the SDRN
estimators obtained from the quadratic and quantile $\left( \protect\tau %
=0.5,\ 0.25\right) $ loss functions based on the 100 simulation replications for Model 1.}
\label{tab:model1} \centering
\vskip 0.4cm
\scalebox{0.6}{
\begin{tabular}{lcccccc|cccccc}
\hline
 & \multicolumn{6}{c|}{n=2000} & \multicolumn{6}{c}{n = 4000} \\ \hline
 & \multicolumn{3}{c|}{Normal error} & \multicolumn{3}{c|}{Laplace error} & \multicolumn{3}{c|}{Normal error} & \multicolumn{3}{c}{Laplace error} \\ \hline
 & quadratic & $\tau =$ 0.5 & \multicolumn{1}{c|}{ $\tau =$ 0.25} & quadratic & $\tau =$ 0.5 & $\tau =$ 0.25 & quadratic & $\tau =$ 0.5 & \multicolumn{1}{c|}{ $\tau =$ 0.25} & quadratic & $\tau =$ 0.5 & $\tau =$ 0.25 \\ \hline
d = 5 &  &  & \multicolumn{1}{c|}{} &  &  &  &  &  & \multicolumn{1}{c|}{} &  &  &  \\ \hline
bias2 & 0.0125 & 0.0142 & \multicolumn{1}{c|}{0.0179} & 0.0131 & 0.0144 & 0.0169 & 0.0128 & 0.0113 & \multicolumn{1}{c|}{0.0119} & 0.0130 & 0.0118 & 0.0130 \\
var & 0.0098 & 0.0116 & \multicolumn{1}{c|}{0.0124} & 0.0209 & 0.0102 & 0.0220 & 0.0052 & 0.0080 & \multicolumn{1}{c|}{0.0091} & 0.0097 & 0.0072 & 0.0137 \\
mse & {\textcolor{red}{0.0223}} & 0.0257 & \multicolumn{1}{c|}{0.0304} & 0.0341 & {\textcolor{red}{0.0246}} & 0.0389 & {\textcolor{red}{0.0180}} & 0.0193 & \multicolumn{1}{c|}{0.0211} & 0.0226 & {\textcolor{red}{0.0190}} & 0.0267 \\ \hline
d=10 & \multicolumn{3}{c|}{} & \multicolumn{3}{c|}{} & \multicolumn{3}{c|}{} & \multicolumn{3}{c}{} \\ \hline
bias2 & 0.0293 & 0.0322 & \multicolumn{1}{c|}{0.0352} & 0.0356 & 0.0312 & 0.0376 & 0.0262 & 0.0285 & \multicolumn{1}{c|}{0.0250} & 0.0293 & 0.0252 & 0.0264 \\
var & 0.0167 & 0.0182 & \multicolumn{1}{c|}{0.0168} & 0.0237 & 0.0161 & 0.0227 & 0.0110 & 0.0095 & \multicolumn{1}{c|}{0.0126} & 0.0192 & 0.0119 & 0.0175 \\
mse & {\textcolor{red}{0.0460}} & 0.0504 & \multicolumn{1}{c|}{0.0520} & 0.0592 & {\textcolor{red}{0.0472}} & 0.0603 & {\textcolor{red}{0.0372}} & 0.0379 & \multicolumn{1}{c|}{0.0376} & 0.0486 & {\textcolor{red}{0.0372}} & 0.0440 \\ \hline
d=100 & \multicolumn{3}{c|}{} & \multicolumn{3}{c|}{} & \multicolumn{3}{c|}{} & \multicolumn{3}{c}{} \\ \hline
bias2 & 0.0352 & 0.0379 & \multicolumn{1}{c|}{0.0386} & 0.0388 & 0.0372 & 0.0406 & 0.0351 & 0.0354 & \multicolumn{1}{c|}{0.0360} & 0.0351 & 0.0350 & 0.0377 \\
var & 0.0200 &  0.0199 & \multicolumn{1}{c|}{0.0206} & 0.0222 & 0.0199 & 0.0250 & 0.0155 & 0.0168 & \multicolumn{1}{c|}{0.0168} & 0.0191 & 0.0164 & 0.0197\\
mse & \textcolor{red}{0.0552} & 0.0578 & \multicolumn{1}{c|}{0.0592} & 0.0610 & \textcolor{red}{0.0571} & 0.0655 & \textcolor{red}{0.0506} & 0.0521 & \multicolumn{1}{c|}{0.0528} & 0.0542 & \textcolor{red}{0.0514} & 0.0575 \\ \hline
\end{tabular}}
\end{table}



Next, we compare the performance of our proposed SDRN
estimator with that of four other popular machine learning methods, including
the fully-connected feedforward neural networks (FNN), the
gradient boosted machines (GBM), the random forests (RF), and the generalized additive models (GAM). For FNN, we employ ReLU as the activation function. For GAM, a cubic regression
spline basis is used. We report the results from the optimal fitting with the optimal tuning parameters minimizing the MSE value in the test dataset. Table \ref{tab:model1 compare} reports the average MSE, $\text{bias%
}^{2}$ and variance for the five methods based on the 100 replicates when $%
n=2000$, $d=5, 10$. Across all five methods, the quadratic loss and the quantile $(\tau =0.5)$ loss are used for the normal and Laplace errors, respectively. We observe
that our SDRN\ has the smallest MSE values in each case. Among all
methods, the GAM method has the largest bias due to model misspecification.

We use Model 2 to illustrate the performance of the SDRN estimate for classification. Table \ref{tab:model2} shows the average RMSE of the estimated conditional probabilities and the metrics to evaluate the classification performance, including the average values of accuracy, precision, recall, F1 score, and specificity based on 100 simulation replications for Model 2. The evaluation is performed on the test dataset, while the parameter estimates are obtained using the training dataset. The RMSE value becomes smaller when the sample size is larger, indicating that the estimated conditional probabilities are approaching the true values when the sample size is larger. This result corroborates with our theoretical property given in Proposition  \ref{THM:rate_fixed}. However, the RMSE value increases slightly from $d=5$ to $d=10$, because the effect of the dimension $d$ is only reflected on the logarithm order given in Proposition \ref{THM:rate_fixed}. We also observe that the classification accuracy is improved as the sample size becomes larger.  

We then compare the performance of the proposed SDRN with that of the four popular machine learning methods including FNN, GBM, RF and GAM for the classification task in Model 2.  Table \ref{tab:model2 compare} reports the numerical results of the five methods obtained from the optimal fitting. The optimal values of all tuning parameters are chosen by minimizing the prediction errors in the test data. We observe that SDRN outperforms the other four methods in terms of RMSE, accuracy, recall, 
and F1 score. The F1 score conveys the balance between precision and recall. Overall, FNN has the second-best performance following SDRN. Its accuracy is slightly smaller than that of SDRN, but its RMSE value is significantly larger than that of SDRN when $d=10$, indicating that FNN needs to have a greater network complexity than SDRN to achieve similar classification accuracy.

\begin{table}[tbp]
\caption{The average MSE, $\text{bias}^{2}$ and variance of the five methods
obtained from the quadratic loss for normal error and quantile $\left(
\protect\tau =0.5\right) $ loss for Laplace error based on the 100
simulation replications for Model 1 when $n=2000$, $d=5, 10$.}
\label{tab:model1 compare}\centering
\vskip 0.4cm
\scalebox{0.8}{
\begin{tabular}{cccccc|ccccc}
\hline
\multicolumn{6}{c|}{Quadratic (Normal)} & \multicolumn{5}{c}{Quantile 0.5 (Laplace)} \\ \hline
d=5 & SDRN & FNN & GBM & RF & GAM & SDRN & FNN & GBM & RF & GAM \\ \hline
bias2 & 0.0125 & 0.0062 & 0.0157 & 0.0386 & 0.1296 & 0.0144 & 0.0050 & 0.0238 & 0.0500 & 0.1251 \\
var & 0.0098 & 0.0198 & 0.0358 & 0.0523 & 0.0066 & 0.0102 & 0.0247 & 0.0327 & 0.0656 & 0.0098 \\
mse & 0.0223 & 0.0260 & 0.0514 & 0.0909 & 0.1362 & 0.0246 & 0.0298 & 0.0565 & 0.1157 & 0.1350 \\ \hline
\multicolumn{1}{l}{d=10} & SDRN & FNN & GBM & RF & GAM & SDRN & FNN & GBM & RF & GAM \\ \hline
bias2 & 0.0293 & 0.0167 & 0.0355 & 0.0518 & 0.0712 & 0.0312 & 0.0149 & 0.0428 & 0.0544 & 0.0603 \\
var & 0.0167 & 0.0306 & 0.0318 & 0.0515 & 0.0136 & 0.0161 & 0.0331 & 0.0286 & 0.0726 & 0.0142 \\
mse & 0.0460 & 0.0473 & 0.0673 & 0.1033 & 0.0848 & 0.0472 & 0.0480 & 0.0714 & 0.1270 & 0.0745 \\ \hline
\end{tabular}
}
\end{table}

\begin{table}[tbp]
\caption{The average of RMSE, accuracy, sensitivity, precision, recall, and F1 score of the SDRN estimate based on the 100 simulation replications when $n=2000$ and $n=4000$ for Model 2.}
\label{tab:model2}\centering
\vskip 0.4cm
\scalebox{0.6}{
\begin{tabular}{lcccccc|cccccc}
\hline
 & \multicolumn{6}{c|}{n=2000} & \multicolumn{6}{c}{n=4000} \\ \hline
 & RMSE & Accuracy & Precision & Recall & F1 & Specificity & RMSE & Accuracy & Precision & Recall & F1 & Specificity \\ \hline
d=5 & 0.0617 & 0.9212 & 0.9112 & 0.9216 & 0.9163 & 0.9208 & 0.0553 & 0.9227 & 0.9156 & 0.9220 & 0.9188 & 0.9234 \\
d=10 & 0.0667 & 0.8673 & 0.8594 & 0.8533 & 0.8562 & 0.8794 & 0.0589 & 0.8731 & 0.8613 & 0.8643 & 0.8627 & 0.8806 \\
d=100 & 0.1838 & 0.7248 & 0.7091 & 0.6600 & 0.6832 & 0.7778 & 0.1812 & 0.7435 & 0.7347 & 0.6757 & 0.7034 & 0.7991 \\ \hline
\end{tabular}
}
\end{table}

\begin{table}[tbp]
\caption{The average of RMSE, accuracy, sensitivity, precision, recall, and F1 score of the five methods based on the 100 simulation replications for Model 2 when $n=2000$, $d=5, 10$.}
\label{tab:model2 compare}\centering
\vskip 0.4cm
\scalebox{0.8}{
\begin{tabular}{c|ccccc|ccccc}
\hline
\multicolumn{1}{l|}{} & \multicolumn{5}{c|}{d=5} & \multicolumn{5}{c}{d=10} \\ \hline
 & SDRN & FNN & GAM & GBM & RF & SDRN & FNN & GAM & GBM & RF \\ \hline
RMSE & 0.0617 & 0.0688 & 0.0702 & 0.0886 & 0.0896 & 0.0667 & 0.0824 & 0.0850 & 0.1020 & 0.1382 \\
Accuracy & 0.9212 & 0.9182 & 0.9174 & 0.9115 & 0.9135 & 0.8673 & 0.8660 & 0.8648 & 0.8610 & 0.8608 \\
Precision & 0.9112 & 0.9086 & 0.9072 & 0.8990 & 0.9007 & 0.8594 & 0.8603 & 0.8591 & 0.8558 & 0.8569 \\
Recall & 0.9216 & 0.9179 & 0.9175 & 0.9138 & 0.9165 & 0.8533 & 0.8486 & 0.8471 & 0.8416 & 0.8397 \\
F1 & 0.9163 & 0.9131 & 0.9123 & 0.9063 & 0.9085 & 0.8562 & 0.8543 & 0.8530 & 0.8486 & 0.8481 \\
Specificity & 0.9208 & 0.9185 & 0.9172 & 0.9094 & 0.9108 & 0.8794 & 0.8811 & 0.8801 & 0.8777 & 0.8789 \\ \hline
\end{tabular}
}
\end{table}

\section{Real data application\label{sec:real}}

In this section, we illustrate our proposed method by using two datasets, the Boston housing data with continuous responses and the BUPA liver disorders data with binary responses. We use the two datasets to illustrate the prediction and classification performance of our SDRN method on continuous and binary responses, respectively. Each dataset is randomly split into 75\% training data and 25\% test
data. The training data is used to fit the model, while the test data is
used to evaluate the performance. Then, we compare our SDRN with five
methods, including LM/GLM (linear model/generalized linear model), FNN, GBM,
RF and GAM. For all methods, the tuning parameters are selected by 5-fold
cross validations based on a grid search.

\subsection{Boston housing data}

The Boston housing data set \cite{harrison1978hedonic} contains 506 census tracts of Boston from the 1970
census. Each census tract represents one observation. Thus, there are 506
observations and 14 attributes in the dataset, where MEDV (the median value
of owner-occupied homes) is the response variable. Following \cite%
{fan2005profile}, seven explanatory variables are considered: CRIM (per
capita crime rate by town), RM (average number of rooms per dwelling), TAX
(full-value property-tax rate per USD 10,000), NOX (nitric oxides
concentration in parts per 10 million), PTRATIO (pupil-teacher ratio by
town), AGE (proportion of owner-occupied units built prior to 1940) and
LSTAT (percentage of the lower status of the population). Since the value of the
MEDV variable is censored at 50.0 (corresponding to a median price of
\$50,000), we remove the 16 censored observations and use the remaining 490
observations for analysis.

\begin{figure}[th]
\begin{center}
$%
\begin{array}{cc}
\includegraphics[width=6.5cm]{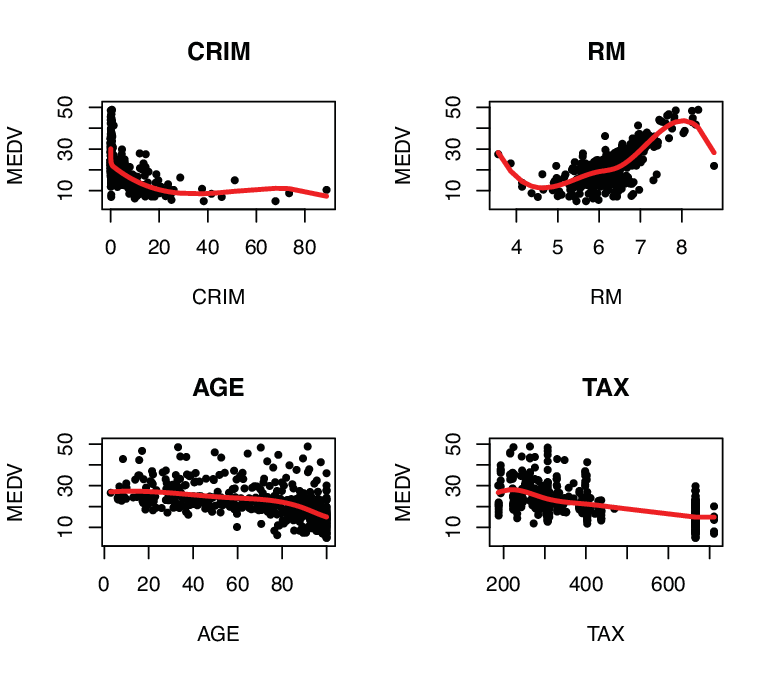} & %
\includegraphics[width=6.5cm]{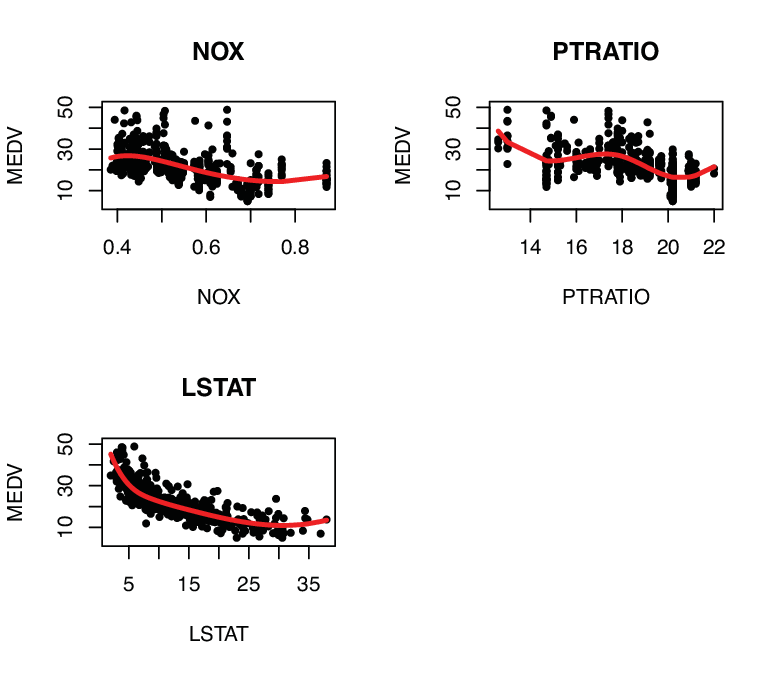}%
\end{array}%
$%
\end{center}
\caption{Scatter plot of MEDV versus each covariate, where the red line
represents the fitted mean curve by using cubic B-splines.}
\label{boston_scater}
\end{figure}

For preliminary analysis of nonlinear patterns, Figure \ref{boston_scater}
shows the scatter plots of the response MEDV against each covariate with the
red lines representing the fitted mean curves by using cubic B-splines. We
observe that the MEDV value has a clear nonlinear changing pattern with
these covariates. The MEDV value has an overall increasing pattern with RM,
whereas it decreases as CRIM, NOX, PTRATIO, TAX, and LSTAT increase. The MEDV
value starts decreasing slowly as AGE increases. However, when the AGE
passes 60, it starts dropping dramatically.

Next, we use our SDRN method with quadratic loss to fit a mean regression of
this data and compare it with LM, FNN, GBM, RF and GAM methods. Table \ref%
{compare_boston} shows the mean squared prediction error (MSPE) from the six
methods. We observe that SDRN outperforms other methods with the smallest
MSPE. The LM method has the largest MSPE, as it cannot capture the nonlinear
relationships between MEDV and the covariates. GAM has the second-largest
MSPE due to its restrictive additive structure without allowing interaction
effects. 

To explore the nonlinear patterns between MEDV and each covariate, in Figure \ref{fig:fitted MEDV} we plot the estimated mean function of MEDV versus each covariate (solid lines), and the estimated median
function of MEDV versus each covariate (dashed lines), obtained from our
 SDRN method with the quadratic loss and the quantile $(\tau =0.5)$ loss,
respectively. The plots are constructed by regressing the estimated mean and median functions using cubic B-splines on each covariate.

We see that overall the fitted MEDV has a negative relationship with all covariates except RM. The estimated MEDV decreases slowly when AGE increases from 0 to 60, and then it begins to drop progressively after AGE passes
60. It suggests that the percentage of old houses in a neighborhood overall has an adverse effect on the house price, and its effect can be prominent when the percentage exceeds a certain level such as $60\%$.   We see that the MEDV value decreases slightly when TAX increases from 200 to 300, and then its value has a steady drop when TAX increases from 300 to over 600, indicating that TAX overall has a negative impact on the house price as well. The MEDV value decreases quickly when  LSTAT increases from 0 to 10, and then it levels off. This result is expected as it is difficult for the lower-status population to afford expensive houses. Moreover, the MEDV value drops gradually as the NOX level rises. This can be explained by that when the houses in a neighborhood are more expensive, this area is often less dense and has more green spaces. MEDV drops sharply as CRIM increases from 0 to 20, and it remains stable from 20 to over 80. MEDV has a high value when the PTRATIO value is small (less than 10), indicating that children in rich neighborhoods may go to private schools. The MEDV value remains stable when PTRATIO increases from 15 to 18, and then it drops steadily when PTRATIO increases from 18 to 20. Lastly, we see that overall MEDV has a steadily increasing pattern with RM. In Figure  \ref{fig:fitted MEDV}, we plot the estimated mean and median functions of MEDV using SDRN versus each covariate. In general, we see that the estimated mean and median functions have similar nonlinear patterns. There is a slight difference between the two curves in the tail parts for CRIM and RM. The difference may be caused by the outliers that can affect the fitting result of mean regression. 

\begin{figure}[tbp]
    \centering
    \includegraphics[width = 11cm]{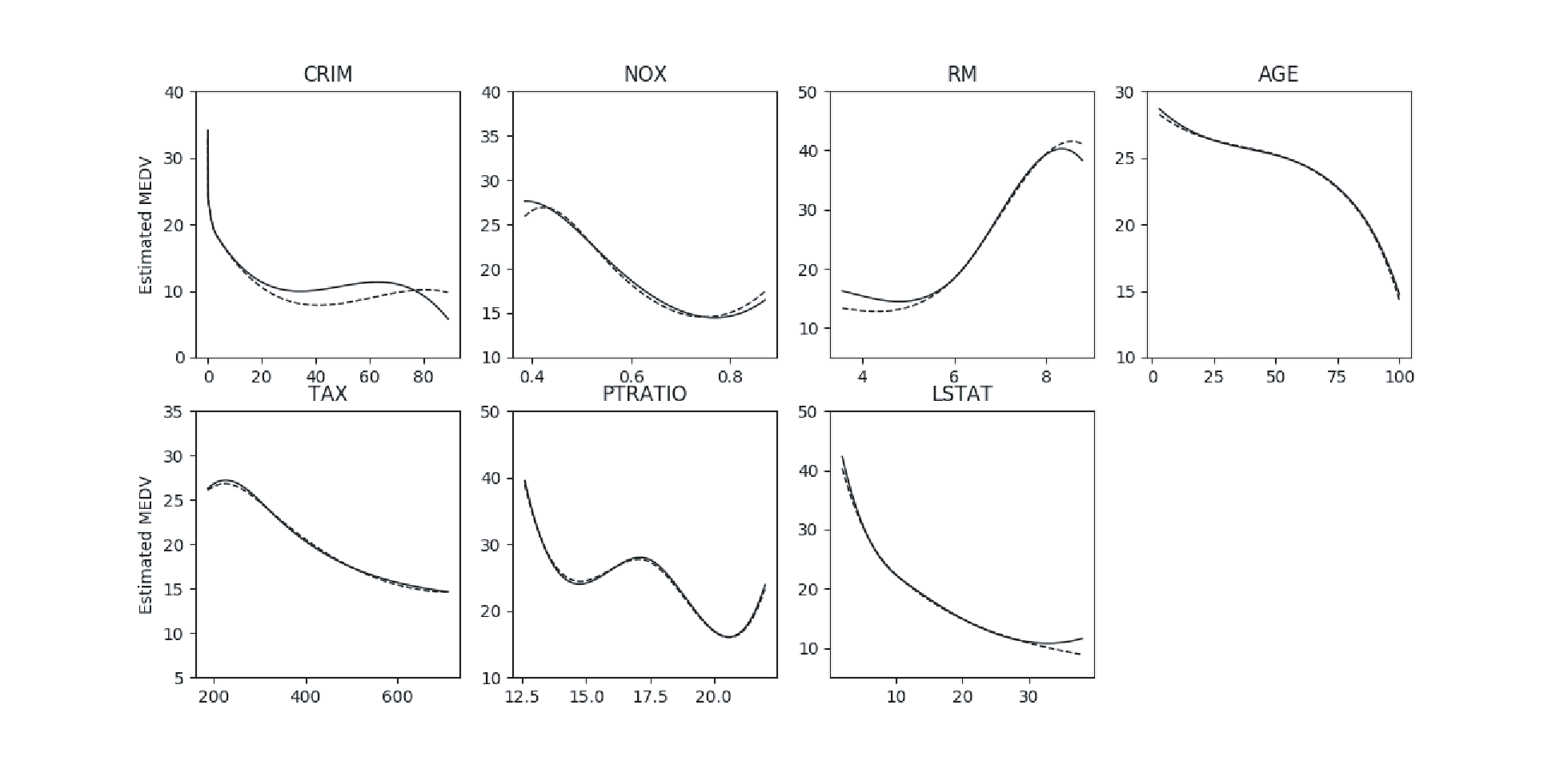}
    \caption{The estimated mean (solid lines) and median (dashed lines) curves of MEDV from SDRN against each covariate.}
    \label{fig:fitted MEDV}
\end{figure}




\begin{table}[tbp]
\caption{The mean squared prediction error (MSPE) from six different methods
using quadratic loss for the Boston housing data.}
\label{compare_boston}\centering
\vskip 0.4cm
\begin{tabular}{c|cccccc}
\hline
& \textbf{SDRN} & LM & FNN & GBM & RF & GAM \\ \hline
MPSE & \textbf{7.626} & 15.554 & 7.801 & 7.831 & 9.955 & 12.908 \\ \hline
\end{tabular}%
\end{table}

\subsection{BUPA data}

The BUPA Liver Disorders dataset is available at the UCI Machine Learning
Repository \cite{uci}. It has 345 rows and 7 columns, with each row
constituting the record of a single male individual. The first 5 variables
are blood tests that are considered to be sensitive to liver disorders due
to excessive alcohol consumption; they are mean corpuscular volume (mcv),
alkaline phosphotase (alkphos), alanine aminotransferase (sgpt), aspartate
aminotransferase (sgot) and gamma-glutamyl transpeptidase (gammagt). We use
them as covariates. The 6th variable is the number of half-point equivalents
of alcoholic beverages drunk per day. Following \cite%
{mcdermott2016diagnosing}, we dichotomize it to a binary response by letting
$Y_{i}=1$ if the number of drinks is greater than 3, otherwise $Y_{i}=0$.
The 7th column in the dataset was created by BUPA researchers for training and
test data selection.

We use this example to compare the performance of the classification among different methods.  
Table \ref{compare_bupa} shows the accuracy,
precision, recall, F1, and AUC (area under the ROC curve) for the group with the number of drinks
greater than 3 for the six methods with logistic loss. We see that
SDRN has the largest accuracy, recall, F1, and AUC. The recall, F1, and AUC for GLM are much smaller than other methods, possibly due to model misspecification and nonlinearity of the dataset. 

To explore the nonlinear patterns, Figure \ref{fit_bupa}
shows the scatter plots and the fitted mean curve of the estimated conditional probabilities versus the mcv, alkphos, sgpt, sgot, and gammagt, respectively. The fitted mean curves (red lines) are drawn by regressing the estimated conditional probabilities on each predictor using cubic B-splines. We can observe the estimated conditional probability rises as mcv, gammagt increase, which suggests that the mcv and gammagt levels can be strong indicators of alcohol consumption. Both sgpt and sgot are enzymes related to liver health and seem to have similar patterns in correlation with alcohol consumption, as depicted in Figure \ref{fit_bupa}. We can see the estimated conditional probability increases quickly as the level of sgpt and sgot are elevated and remain to be high as these two enzymes pass a certain value. Considering that higher levels of sgpt and sgot are associated with liver damage and excessive chronic alcohol consumption can lead to liver cell damage, it is plausible to observe a positive correlation between the levels of these two enzymes and alcohol consumption. The
estimated conditional probability has a quadratic nonlinear relationship with alkphos.
Abnormal\ (either low or high)\ levels of alkphos are connected to a few
health problems. Low levels of alkphos indicate a deficiency in zinc and
magnesium, or a rare genetic disease called hypophosphatasia, which affects
bones and teeth. High levels of alkphos can be an indicator of liver disease
or bone disorder.

\begin{table}[tbp]
\caption{Accuracy, Precision, Recall, F1, and AUC for the group with the
number of drinks greater than 3 of the BUPA data for different methods with
logistic loss.}
\label{compare_bupa}\centering
\vskip 0.4cm
\begin{tabular}{c|cccccc}
\hline
& \textbf{SDRN} & GLM & FNN & GBM & RF & GAM \\ \hline
Accuracy & \textbf{0.678} & 0.655 & 0.667 & 0.632 & 0.667 & 0.644 \\
Precision & \textbf{0.636} & 0.652 & 0.618 & 0.561 & 0.643 & 0.583 \\
Recall & \textbf{0.568} & 0.405 & 0.568 & 0.622 & 0.486 & 0.568 \\
F1 & \textbf{0.600} & 0.500 & 0.592 & 0.590 & 0.554 & 0.575 \\
AUC & \textbf{0.664} & 0.623 & 0.654 & 0.631 & 0.643 & 0.634 \\ \hline
\end{tabular}%
\end{table}

\begin{figure}[tbp]
    \centering
    \includegraphics[width = 14cm]{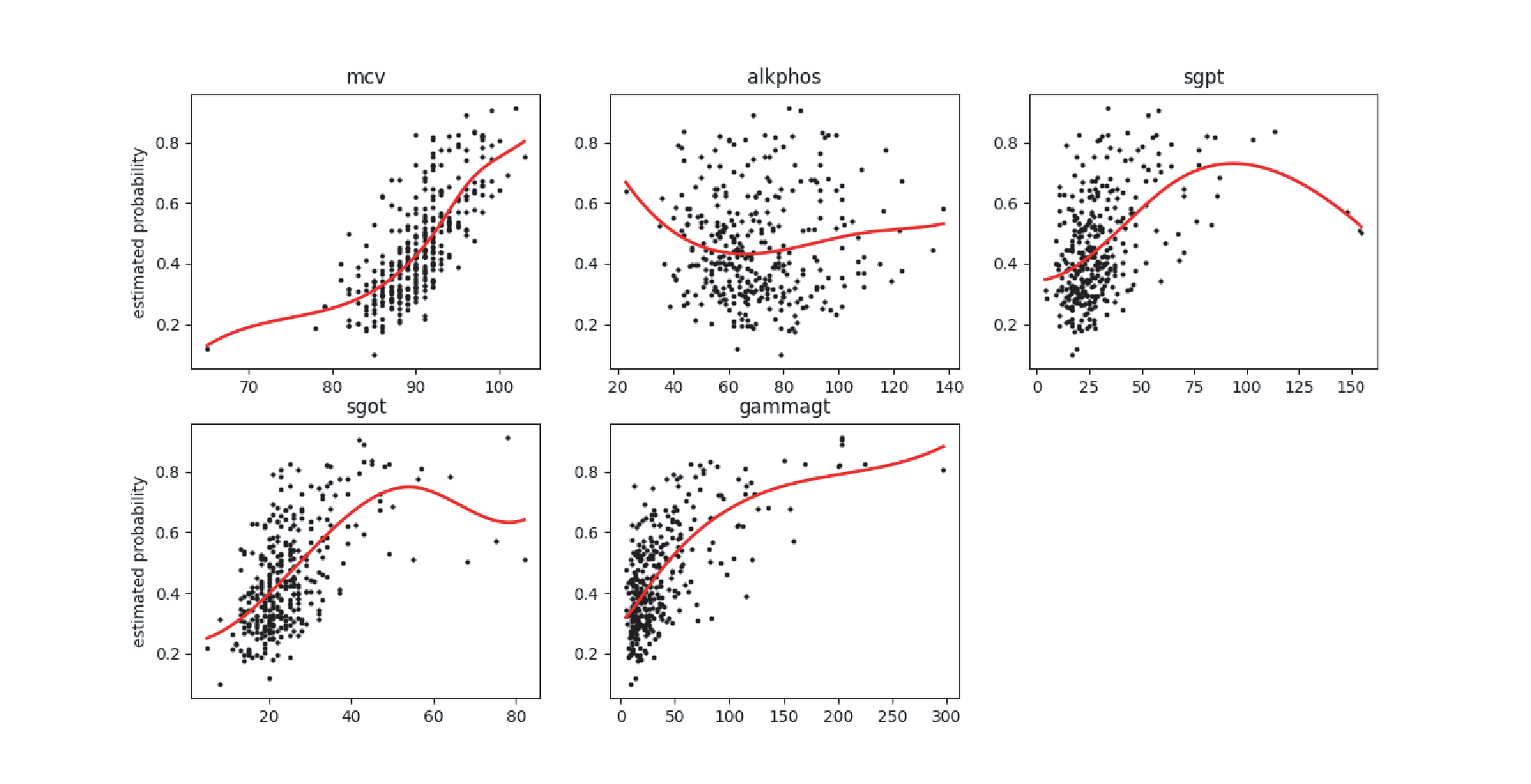}
    \caption{The estimated conditional probabilities from SDRN versus mcv, alkphos, sgot and gammagt for the BUPA data.}
    \label{fit_bupa}
\end{figure}


\section{Discussion\label{sec:Discussion}}

In this paper, we propose a sparse deep ReLU\ network estimator (SDRN)
obtained from empirical risk minimization with a Lipschitz loss function
satisfying mild conditions. Our framework can be applied to a variety of
regression and classification problems in machine learning. In general, deep
neural networks are effective tools for lessening the curse of
dimensionality under the condition that the target functions have certain
special properties. We show that our SDRN estimator achieves a fast convergence rate when the regression function belongs to the Korobov spaces with mixed partial derivatives. 

We derive non-asymptotic excess risk bounds for the SDRN estimator. Our
framework allows the dimension of the feature space to increase with the
sample size at a rate slightly slower than $\log (n)$. We further show
that our SDRN estimator can achieve the same optimal minimax rate (up to
logarithmic factors) as one-dimensional nonparametric regression when the
dimension is fixed, and the dimensionality effect is passed on to a
logarithmic factor, so the curse of dimensionality is alleviated. The SDRN
estimator has a slightly slower rate when the dimension increases with the sample
size. Moreover, the depth and the total number of nodes and weights of the
network need to increase with the sample size with certain rates established
in the paper. These statistical properties provide an important theoretical
basis and guidance for the analytic procedures in data analysis.
Practically, we illustrate the proposed method through simulation studies
and several real data applications. The numerical studies support our
theoretical results.

Our proposed method provides a reliable solution for mitigating the curse of
dimensionality for modern data analysis. Meanwhile, it has opened up several
interesting new avenues for further work. One extension is to derive a
similar estimator for smoother regression functions with mixed derivatives
of order greater than two; Jacobi-weighted Korobov spaces \cite{SW10,LTY2020} may
be considered for this scenario. Our method can be extended to other
settings such as semiparametric models, longitudinal data and $L_{1}$
penalized regression. Moreover, it can be a promising tool for the estimation of
the propensity score function or the outcome regression function used in
treatment effect studies. These interesting topics deserve thorough
investigations for future research.


\section*{Acknowledgments}
We thank the editor and the anonymous reviewer for their valuable comments and suggestions for greatly improving the manuscript.
This research was supported in part by the U.S. NSF grants DMS-17-12558, DMS-20-14221 and DMS-23-10288 and the UCR Academic Senate CoR Grant.



\section*{Appendix}
\label{appendix}

\renewcommand{\thetheorem}{{\sc A.\arabic{theorem}}} 
\renewcommand{\theproposition}{{\sc A.\arabic{proposition}}}
\renewcommand{\thelemma}{{\sc A.\arabic{lemma}}} 
\renewcommand{\thecorollary}{{\sc A.\arabic{corollary}}}
\renewcommand{\theequation}{A.\arabic{equation}} %
\renewcommand{\thesection}{{\it A.\arabic{section}}} 
\renewcommand{\thesubsection}{{\it A.\arabic{subsection}}}
\renewcommand{\thetable}{{\it A.\arabic{table}}} 
\renewcommand{\theassumption}{{\sc A.\arabic{assumption}}}

\setcounter{equation}{0}
\setcounter{lemma}{0}
\setcounter{proposition}{0} 
\setcounter{theorem}{0}
\setcounter{subsection}{0}
\setcounter{corollary}{0}
\setcounter{table}{0}

\renewcommand{\theHtable}{thetable}
\renewcommand{\theHproposition}{ A.\arabic{proposition}}
\renewcommand{\theHlemma}{ A.\arabic{lemma}}

In the Appendix, we provide the discussions
of sparse grids approximation and the technical proofs.

\subsection{Sparse grids approximation \label{SEC:discrete}}

In this section, we introduce a hierarchical basis of piecewise linear
functions. We have discussed a connection between this hierarchical basis
and a ReLU network in Section \ref{SEC:NN}. For any functions in the Korobov
spaces satisfying Assumption \ref{ass1}, they have a unique representation
in this hierarchical basis. To approximate functions of one variable $x$ on $%
[0,1]$, a simple choice of a basis function is the standard hat function $%
\phi (x)$:%
\begin{equation*}
\phi (x)=\left\{
\begin{array}{cc}
1-|x|, & \text{if }x\in \lbrack -1,1] \\
0, & \text{otherwise.}%
\end{array}%
\right.
\end{equation*}%
To generate a one-dimensional hierarchical basis, we consider a family of
grids $\Omega _{\ell }$ of level $\ell $ characterized by a grid size $%
h_{\ell }=2^{-\ell }$ and $2^{\ell }+1$ points $x_{\ell ,s}=sh_{\ell }$ for $%
0\leq s\leq 2^{\ell }$. On each $\Omega _{\ell }$, the piecewise linear
basis functions $\phi _{\ell ,s}$ are given as
\begin{equation*}
\phi _{\ell ,s}(x)=\phi (\frac{x-x_{\ell ,s}}{h_{\ell }})\text{, }0\leq
s\leq 2^{\ell },
\end{equation*}%
on the support $\left[ x_{\ell ,s}-h_{\ell },x_{\ell ,s}+h_{\ell }\right]
\cap \left[ 0,1\right] $. Note that $||\phi _{\ell ,s}||_{\infty }\leq 1$
for all $\ell $ and $s$. The hierarchical increment spaces $W_{\ell }$ on
each $\Omega _{\ell }$ are given by
\begin{equation*}
W_{\ell }=\text{span}\{\phi _{\ell ,s}:s\in I_{\ell }\},
\end{equation*}%
where $I_{\ell }=\{s\in \mathbb{N}:0\leq s\leq 2^{\ell }$; $s$ are odd
numbers for $\ell \geq 1\}$. We can see that for each $\ell \geq 1$, the
supports of all basis functions $\phi _{\ell ,s}$ spanning $W_{\ell }$ are
mutually disjoint. Then the hierarchical space of functions up to level $L$
is
\begin{equation*}
V_{L}=\bigoplus\limits_{0\leq \ell \leq L}W_{\ell }=\text{span}\{\phi _{\ell
,s}:s\in I_{\ell },0\leq \ell \leq L\}.
\end{equation*}

To approximate functions of $d$-dimensional variables $\boldsymbol{x=}%
\mathbf{(}x_{1},\ldots ,x_{d})^{\top }$ on $\mathcal{X}=[0,1]^{d}$, \ we
employ a tensor product construction of the basis functions. We consider a
family of grids $\Omega _{\boldsymbol{\ell }}$ of level $\boldsymbol{\ell }%
=(\ell _{1},...,\ell _{d})^{\top }$ with interior points $\boldsymbol{x}_{%
\boldsymbol{\ell },\boldsymbol{s}}=\boldsymbol{s}\cdot \boldsymbol{h}_{%
\boldsymbol{\ell }}$, where $\boldsymbol{h}_{\boldsymbol{\ell }}=(h_{\ell
_{1}},...,h_{\ell _{d}})^{\top }$ with $h_{\ell _{j}}=2^{-\ell _{j}}$ and $%
\boldsymbol{s}=(s_{1},...,s_{d})^{\top }$ for $0\leq s_{j}\leq 2^{\ell _{j}}$
and $j=1,...,d$. On each $\Omega _{\boldsymbol{\ell }}$, the basis functions
$\phi _{\boldsymbol{\ell },\boldsymbol{s}}$ are given as
\begin{equation*}
\phi _{\boldsymbol{\ell },\boldsymbol{s}}(\boldsymbol{x})=\prod%
\limits_{j=1}^{d}\phi _{\ell _{j},s_{j}}(x_{j})\text{, }\boldsymbol{0}%
_{d}\leq \boldsymbol{s}\leq 2^{\boldsymbol{\ell }},
\end{equation*}%
and they satisfy $||\phi _{\boldsymbol{\ell },\boldsymbol{s}}||_{\infty
}\leq 1$. The hierarchical increment spaces $W_{\boldsymbol{\ell }}$ are
given by%
\begin{equation*}
W_{\boldsymbol{\ell }}=\text{span}\{\phi _{\boldsymbol{\ell },\boldsymbol{s}%
}(\boldsymbol{x}):\boldsymbol{s}\in I_{\boldsymbol{\ell }}\},
\end{equation*}%
where $I_{\boldsymbol{\ell }}=I_{\ell _{1}}\times \cdots \times I_{\ell
_{d}} $, and $I_{\ell j}=\{s_{j}\in \mathbb{N}:0\leq s_{j}\leq 2^{\ell
_{j}},s_{j}$ are odd numbers for $\ell _{j}\geq 1\}$. Then the hierarchical
space of functions up to level $\boldsymbol{L}=(L_{1},...,L_{d})^{\top }$ is
\begin{equation*}
V_{\boldsymbol{L}}=\bigoplus\limits_{\boldsymbol{0}\leq \boldsymbol{\ell }%
\leq \boldsymbol{L}}W_{\boldsymbol{\ell }}=\text{span}\{\phi _{\boldsymbol{%
\ell },\boldsymbol{s}}:\boldsymbol{s}\in I_{\boldsymbol{\ell }},\boldsymbol{0%
}_{d}\leq \boldsymbol{\ell }\leq \boldsymbol{L}\}.
\end{equation*}%
For any function $f\in W^{2,p}(\mathcal{X)}$, it has a unique expression in
the hierarchical basis \cite{BG04}:
\begin{equation}
f(\boldsymbol{x})=\sum_{\boldsymbol{0}_{d}\leq \boldsymbol{\ell }\leq
\boldsymbol{\infty }}\sum_{s\in I_{\boldsymbol{\ell }}}\gamma _{_{%
\boldsymbol{\ell },\boldsymbol{s}}}^{0}\phi _{\boldsymbol{\ell },\boldsymbol{%
s}}(\boldsymbol{x})=\sum_{\boldsymbol{0}_{d}\leq \boldsymbol{\ell }\leq
\boldsymbol{\infty }}g_{\boldsymbol{\ell }}(\boldsymbol{x}),  \label{EQ:f}
\end{equation}%
where $g_{\boldsymbol{\ell }}(\boldsymbol{x})=\sum_{s\in I_{\boldsymbol{\ell
}}}\gamma _{_{\boldsymbol{\ell },\boldsymbol{s}}}^{0}\phi _{\boldsymbol{\ell
},\boldsymbol{s}}(\boldsymbol{x})\in W_{\boldsymbol{\ell }}$. The
hierarchical coefficients $\gamma _{_{\boldsymbol{\ell },\boldsymbol{s}%
}}^{0}\in \mathbb{R}$ are given as (Lemma 3.2 of \cite{BG04}):
\begin{equation}
\gamma _{_{\boldsymbol{\ell },\boldsymbol{s}}}^{0}=\int_{\mathcal{X}%
}\prod\limits_{j=1}^{d}\left( -2^{-(\boldsymbol{\ell }_{j}+1)}\phi _{\ell
_{j},s_{j}}(x_{j})\right) D^{\boldsymbol{2}}f(\boldsymbol{x})d\boldsymbol{x},
\label{gammaexpression}
\end{equation}%
where $\boldsymbol{2}=2\boldsymbol{1}_{d}$, and satisfy (Lemma 3.3 of \cite%
{BG04})%
\begin{equation}
|\gamma _{_{\boldsymbol{\ell },\boldsymbol{s}}}^{0}|\leq 6^{-d/2}2^{-(3/2)|%
\boldsymbol{\ell }|_{1}}(\int\nolimits_{\boldsymbol{x}_{\boldsymbol{\ell },%
\boldsymbol{s}}-\boldsymbol{h}_{\boldsymbol{\ell }}}^{\boldsymbol{x}_{%
\boldsymbol{\ell },\boldsymbol{s}}+\boldsymbol{h}_{\boldsymbol{\ell }}}|D^{%
\boldsymbol{2}}f(\boldsymbol{x})|^{2}d\boldsymbol{x})^{1/2}\leq
6^{-d/2}2^{-(3/2)|\boldsymbol{\ell }|_{1}}||D^{\boldsymbol{2}}f||_{L^{2}}.
\label{EQ:gammals}
\end{equation}%
Moreover, the above result leads to (Lemma 3.4 of \cite{BG04})
\begin{equation}
||g_{\boldsymbol{\ell }}||_{L^{2}}\leq 3^{-d}2^{-2|\boldsymbol{\ell }%
|_{1}}(\int\nolimits_{\mathcal{X}}|D^{\boldsymbol{2}}f(\boldsymbol{x})|^{2}d%
\boldsymbol{x})^{1/2}=3^{-d}2^{-2|\boldsymbol{\ell }|_{1}}||D^{\boldsymbol{2}%
}f||_{L^{2}}.  \label{EQ:gsup}
\end{equation}%
Assumption \ref{ass3} and (\ref{EQ:gsup}) imply that
\begin{equation}
||g_{\boldsymbol{\ell }}||_{2}\leq c_{\mu }||g_{\boldsymbol{\ell }%
}||_{L^{2}}\leq c_{\mu }3^{-d}2^{-2|\boldsymbol{\ell }|_{1}}||D^{\boldsymbol{%
2}}f||_{L^{2}}.  \label{EQ:gsup2}
\end{equation}

In practice, one can use a truncated version to approximate the function $%
f(\cdot )$ given in (\ref{EQ:f}), so that
\begin{equation*}
f(\boldsymbol{x})\approx \sum\nolimits_{0\leq |\boldsymbol{\ell |}_{\infty
}\leq m}\sum\nolimits_{s\in I_{\boldsymbol{\ell }}}\gamma _{_{\boldsymbol{%
\ell },\boldsymbol{s}}}^{0}\phi _{\boldsymbol{\ell },\boldsymbol{s}}(%
\boldsymbol{x})=\sum\nolimits_{0\leq |\boldsymbol{\ell |}_{\infty }\leq m}g_{%
\boldsymbol{\ell }}(\boldsymbol{x}),
\end{equation*}%
which is constructed based on the space with full grids: $V_{m}^{\left(
\infty \right) }=\bigoplus\limits_{0\leq |\boldsymbol{\ell |}_{\infty }\leq
m}W_{\boldsymbol{\ell }}= \allowbreak $span$\{\phi _{\boldsymbol{\ell },\boldsymbol{s}%
}:s\in I_{\boldsymbol{\ell }},0\leq |\boldsymbol{\ell }|_{\infty }\leq m\}$.
The dimension of the space $V_{m}^{\left( \infty \right) }$ is $%
|V_{m}^{\left( \infty \right) }|=(2^{m}+1)^{d}$, which increases with $d$ in
an exponential order. For dimension reduction, we consider the hierarchical
space with sparse grids:
\begin{equation*}
V_{m}^{\left( 1\right) }=\bigoplus\limits_{|\boldsymbol{\ell |}_{1}\leq m}W_{%
\boldsymbol{\ell }}=\text{span}\{\phi _{\boldsymbol{\ell },\boldsymbol{s}%
}:s\in I_{\boldsymbol{\ell }},|\boldsymbol{\ell }|_{1}\leq m\}.
\end{equation*}%
The function $f(\cdot )$ given in (\ref{EQ:f}) can be approximated by
\begin{equation}
f_{m}(\boldsymbol{x})=\sum\nolimits_{|\boldsymbol{\ell |}_{1}\leq
m}\sum\nolimits_{s\in I_{\boldsymbol{\ell }}}\gamma _{_{\boldsymbol{\ell },%
\boldsymbol{s}}}^{0}\phi _{\boldsymbol{\ell },\boldsymbol{s}}(\boldsymbol{x}%
)=\sum\nolimits_{|\boldsymbol{\ell |}_{1}\leq m}g_{\boldsymbol{\ell }}(%
\boldsymbol{x}).  \label{EQ:fm}
\end{equation}%
Clearly, when $d=1$, the dimension of the hierarchical space with sparse
grids is the same as that of the space with full grids. The dimensionality
issue does not exist. For the rest of this paper, we assume that $d\geq 2$.
Table \ref{TAB:number} provides the number of basis functions for the
hierarchical space with sparse grids $V_{m}^{\left( 1\right) }$ and the
space with full grids $V_{m}^{\left( \infty \right) }$ when the dimension of
the covariates $d$ increases from 2 to 8 and the $m$ value increases from 0
to 4. We see that the number of basis functions for the space with sparse
grids is dramatically reduced compared to the space with full grids, when
the dimension $d$ or $m$ value become larger, so that the dimensionality
problem can be lessened.

\begin{table}[tbph]
\caption{The number of basis functions for the space with sparse grids and
the space with full grids.}
\label{TAB:number} \centering
\vskip 0.4cm
\scalebox{0.8}{
\begin{tabular}{cccccc|ccccc}
\cline{1-11}
& \multicolumn{5}{c}{Sparse grids} & \multicolumn{5}{|c}{Full grids} \\
\cline{1-11}
& $m=0$ & $m=1$ & $m=2$ & $m=3$ & $m=4$ & $m=0$ & $m=1$ & $m=2$ & $m=3$ & $m=4$ \\ \cline{1-11}
$d=2$ & $4$ & $8$ & $17$ & $37$ & $81$ & $4$ & $9$ & $25$ & $81$ & $289$ \\
$d=3$ & $8$ & $20$ & $50$ & $123$ & $297$ & $8$ & $27$ & $125$ & $729$ & $4913$ \\
$d=4$ & $16$ & $48$ & $136$ & $368$ & $961$ & $16$ & $81$ & $625$ & $6561$ &
$83521$ \\
$d=5$ & $32$ & $112$ & $352$ & $1032$ & $2882$ & $32$ & $243$ & $3125$ & $59049$ & $1419857$ \\
$d=6$ & $64$ & $256$ & $880$ & $2768$ & $8204$ & $64$ & $729$ & $15625$ & $531441$ & $24137569$ \\
$d=7$ & $128$ & $576$ & $2144$ & $7184$ & $22472$ & $128$ & $2187$ & $78125$
& $4782969$ & $410338673$ \\
$d=8$ & $256$ & $1280$ & $5120$ & $18176$ & $59744$ & $256$ & $6561$ & $390625$ & $43046721$ & $6975757441$ \\ \cline{1-11}
\end{tabular}
} 
\end{table}

The following proposition provides the approximation error of the
approximator $f_{m}\left( \cdot \right) $ obtained from the sparse grids to
the true unknown function $f\in W^{2,p}(\mathcal{X)}$.

\begin{proposition}
\label{PROP:error_sparse}For any $f\in W^{2,p}(\mathcal{X)}$, $2\leq p\leq
\infty $, under Assumption \ref{ass3}, one has that 
\begin{equation}
||f_{m}-f||_{2}\leq 4^{-1}c_{\mu }2^{-2m}(3/2)^{-d}(m+3)^{d-1}||D^{%
\boldsymbol{2}}f||_{L^{2}}.  \label{EQ:fm-f}
\end{equation}
\end{proposition}

Proposition \ref{PROP:error_sparse} shows that the approximator error
decreases as the $m$ value increases.

\textcolor{black}{In the following proposition, we provide an upper and a lower bounds for the
dimension (cardinality) of the space $V_{m}^{\left( 1\right) }.$
\begin{proposition}
\label{PROP:cardinality} The dimension of the space $V_{m}^{\left( 1\right)
} $ satisfies that for $d\geq 2$,
\textcolor{black}{
\begin{align*}
 2^{d-1}(2^{m}+1) \leq |V_{m}^{\left( 1\right) }| & \leq 4\frac{d}{(d-1)!}%
2^{d}2^{m}(m+d)^{d-1} \\
& \leq 4\sqrt{\frac{2}{\pi }}\frac{d}{\sqrt{d-1}}%
2^{m}\left( 2e\frac{m+d}{d-1}\right) ^{d-1}.
\end{align*}
}
\end{proposition}
\begin{remark}
\cite{BG04} gives an asymptotic order for the cardinality of $V_{m}^{\left(
1\right) }$ which is $|V_{m}^{\left( 1\right) }|=\mathcal{O(}%
c(d)2^{m}m^{d-1})$, where $c(d)$ is a constant depending on $d$, and the
form of $c(d)$ has not been provided. \cite{MD19} numerically demonstrated
how quickly $c(d)$ can increase with $d$. In Proposion \ref{PROP:cardinality}%
, we give an explicit form for the upper bound of $|V_{m}^{\left( 1\right)
}| $ that has not been derived in the literature. From this explicit form, we
can more clearly see how the dimension of $V_{m}^{\left( 1\right) }$
increases with $d$.
\end{remark}
}

\subsection{Proof of Proposition \protect\ref{PROP:error_sparse}} 

This section provides the proof of Proposition \ref{PROP:error_sparse}.
Based on (\ref{EQ:f}) and (\ref{EQ:fm}), one has
\begin{equation*}
||f_{m}-f||_{2}=||\sum_{\boldsymbol{0}_{d}\leq \boldsymbol{\ell }\leq
\boldsymbol{\infty }}g_{\boldsymbol{\ell }}(\boldsymbol{x})-\sum_{|%
\boldsymbol{\ell |}_{1}\leq m}g_{\boldsymbol{\ell }}(\boldsymbol{x}%
)||_{2}=||\sum_{|\boldsymbol{\ell |}_{1}>m}g_{\boldsymbol{\ell }}(%
\boldsymbol{x})||_{2}.
\end{equation*}%
By (\ref{EQ:gsup2}) and Assumption 3, one has%
\begin{eqnarray*}
||\sum_{|\boldsymbol{\ell |}_{1}>m}g_{\boldsymbol{\ell }}(\boldsymbol{x}%
)||_{2} &\leq &\sum_{|\boldsymbol{\ell |}_{1}>m}||g_{\boldsymbol{\ell }%
}||_{2}\leq \sum_{|\boldsymbol{\ell |}_{1}>m}c_{\mu }3^{-d}2^{-2|\boldsymbol{%
\ell }|_{1}}||D^{\boldsymbol{2}}f||_{L^{2}} \\
&=&c_{\mu }3^{-d}||D^{\boldsymbol{2}}f||_{L^{2}}\sum_{|\boldsymbol{\ell |}%
_{1}>m}2^{-2|\boldsymbol{\ell }|_{1}}.
\end{eqnarray*}%
Then, one has that for arbitrary $s\in \mathbb{N}$,
\begin{eqnarray*}
\sum_{|\boldsymbol{\ell |}_{1}>m}2^{-s|\boldsymbol{\ell }|_{1}}
&=&\sum_{k^{\prime }=m+1}^{\infty }2^{-sk^{\prime }}\binom{k^{\prime }+d-1}{%
d-1}\\
&=&\sum_{k=0}^{\infty }2^{-s\left( k+m+1\right) }\binom{k+m+1+d-1}{d-1} \\
&=& \textcolor{black}{2^{-s\left( m+1\right) }\sum_{k=0}^{\infty }2^{-sk}\binom{k+m+1+d-1}{d-1}} \\
& \leq & 2^{-s\left( m+1\right) }2A(d,m),
\end{eqnarray*}%
where $A(d,m)=\sum\nolimits_{k=0}^{d-1}\binom{m+d}{k}$, where the last
inequality follows from Lemma 3.7 of \cite{BG04}.%
\begin{eqnarray*}
||f_{m}-f||_{2} &\leq &\sum_{|\boldsymbol{\ell |}_{1}>m}||g_{\boldsymbol{%
\ell }}||_{2}\leq c_{\mu }3^{-d}||D^{\boldsymbol{2}}f||_{L^{2}}2^{-2\left(
m+1\right) }2A(d,m) \\
&=&2^{-1}c_{\mu }2^{-2m}3^{-d}A(d,m)||D^{\boldsymbol{2}}f||_{L^{2}}.
\end{eqnarray*}%
Moreover,
\begin{eqnarray*}
A(d,m) &=&\sum\nolimits_{k=0}^{d-1}\binom{d-1}{k}\frac{(m+d)!(d-1-k)!}{%
(m+d-k)!(d-1)!} \\
&=&\sum\nolimits_{k=0}^{d-1}\binom{d-1}{k}\left( \frac{m+d}{d-1}\right)
\left( \frac{m+d-1}{d-2}\right) \cdots \left( \frac{m+d-k+1}{d-k}\right) .
\end{eqnarray*}%
Since $\left( \frac{m+d-j+1}{d-j}\right) >\left( \frac{m+d-j^{\prime }+1}{%
d-j^{\prime }}\right) $ for any $1\leq j^{\prime }<j\leq d-1$,  then $\left(
\frac{m+2}{1}\right) >\left( \frac{m+d-j^{\prime }+1}{d-j^{\prime }}\right) $
for any $1\leq j^{\prime }\leq d-1$ by letting $j=d-1$, and thus%
\begin{equation*}
A(d,m)\leq \sum\nolimits_{k=0}^{d-1}\binom{d-1}{k}\left( m+2\right)
^{k}=\left( m+3\right) ^{d-1}.
\end{equation*}
Therefore,%
\begin{align*}
||f_{m}-f||_{2} & \leq 2^{-1}c_{\mu }2^{-2m}3^{-d}2^{d-1}(m+3)^{d-1}||D^{%
\boldsymbol{2}}f||_{L^{2}} \\ & \textcolor{black}{ =4^{-1}c_{\mu }2^{-2m}(3/2)^{-d}(m+3)^{d-1}||D^{%
\boldsymbol{2}}f||_{L^{2}}}.
\end{align*}
\normalsize

\subsection{Proof of Proposition \protect\ref{PROP:cardinality}}
In this section, we provide the proof of Proposition \ref{PROP:cardinality}. 
The dimension of $W_{\boldsymbol{\ell }}$ satisfies
\begin{equation*}
|W_{\boldsymbol{\ell }}|\leq \prod\nolimits_{j=1}^{d}2^{\ell _{j}\vee
2-1}=2^{\sum_{j=1}^{d}\ell _{j}\vee 2-d}.
\end{equation*}%
Thus,%
\begin{eqnarray*}
|V_{m}^{\left( 1\right) }| &\leq &\sum\nolimits_{|\boldsymbol{\ell |}%
_{1}\leq m}2^{\sum_{j=1}^{d}\ell _{j}\vee 2-d}\leq \sum\nolimits_{|%
\boldsymbol{\ell |}_{1}\leq m}2^{|\boldsymbol{\ell |}_{1}+d}=%
\sum_{k=0}^{m}2^{k+d}\binom{d-1+k}{d-1} \\
&=&2^{d}\sum_{k=0}^{m}2^{k}\binom{d-1+k}{d-1}=2^{d}\left\{
(-1)^{d}+2^{m+1}\sum_{k=0}^{d-1}\binom{m+d}{k}(-2)^{d-1-k}\right\} ,
\end{eqnarray*}%
where the last equality follows from (3.62) of \cite{BG04}. We assume that $d
$ is even. The result for odd $d$ can be proved similarly. Then
\begin{equation*}
\sum_{k=0}^{d-1}\binom{m+d}{k}(-2)^{d-1-k}=\sum_{v=0}^{d/2-1}2^{2v}\left\{
\binom{m+d}{d-(1+2v)}-2\binom{m+d}{d-(2+2v)}\right\} .
\end{equation*}%
Moreover,
\begin{eqnarray*}
&&\binom{m+d}{d-(1+2v)}-2\binom{m+d}{d-(2+2v)} \\
&=&\frac{(m+d)!}{(d-(1+2v))!(m+1+2v)!}-2\frac{(m+d)!}{(d-(2+2v))!(m+2v+2)!}
\\
&=&\frac{(m+d)!}{(d-(2+2v))!(m+1+2v)!}\left\{ \frac{1}{d-(1+2v)}-\frac{2}{%
m+2v+2}\right\}  \\
&=&\frac{(m+d)!(m+6v-2d+4)}{(d-(1+2v))!(m+2v+2)!}\\
&=&\frac{(m+d)(m+d-1)\times \cdot \cdot \cdot \times (m+2v+3)(m+6v-2d+4)}{%
(d-(1+2v))!} \\
&\leq &\frac{(m+d)^{(d-(1+2v))}}{(d-(1+2v))!}.
\end{eqnarray*}%
Thus,
\begin{eqnarray*}
\sum_{k=0}^{d-1}\binom{m+d}{k}(-2)^{d-1-k} &\leq &\sum_{v=0}^{d/2-1}2^{2v}%
\frac{(m+d)^{(d-(1+2v))}}{(d-(1+2v))!}\\
& \leq & \textcolor{black}{\frac{(m+d)^{(d-1)}}{(d-1)!}%
\sum_{v=0}^{d/2-1}\frac{2^{2v}}{(m+d)^{2v}}} \\
&\leq &\frac{(m+d)^{(d-1)}}{(d-1)!}\frac{m+d}{m+d-2}\leq \frac{(m+d)^{d-1}}{%
(d-1)!}d,
\end{eqnarray*}%
for $m+d\geq 3$ and $d\geq 3$. It is easy to verify that $\sum_{k=0}^{d-1}%
\binom{m+d}{k}(-2)^{d-1-k}\leq \textcolor{black}{ (m+d)^{(d-1)}d/{(d-1)!}}$ when $d\leq 2$
or $m+d\leq 2$.

By stirling's formula,
\begin{equation*}
(d-1)!\geq \sqrt{2\pi }(d-1)^{d-1/2}e^{-(d-1)}.
\end{equation*}%
Therefore, for $d\geq 2$,
\begin{equation*}
\sum_{k=0}^{d-1}\binom{m+d}{k}(-2)^{d-1-k}\leq \frac{(m+d)^{d-1}}{(d-1)!}%
d\leq \frac{(m+d)^{d-1}d}{\sqrt{2\pi }(d-1)^{d-1/2}e^{-(d-1)}},
\end{equation*}%
and hence%
\begin{eqnarray*}
|V_{m}^{\left( 1\right) }| &\leq &2^{d+1}2^{m+1}\frac{(m+d)^{d-1}}{(d-1)!}%
d\leq 2^{d+1}2^{m+1}\frac{(m+d)^{d-1}d}{\sqrt{2\pi }(d-1)^{d-1/2}e^{-(d-1)}}
\\
&=&4\sqrt{\frac{2}{\pi }}\frac{d}{\sqrt{d-1}}2^{m}\left( 2e\frac{m+d}{d-1}%
\right) ^{d-1}.
\end{eqnarray*}%
Moreover, let $\boldsymbol{\ell }_{-1}=(\ell _{2},\ldots ,\ell _{d})^{\top }$%
. Then, $|V_{0}^{\left( 1\right) }|=\sum\nolimits_{|\boldsymbol{\ell |}%
_{1}=0}\prod\nolimits_{j=1}^{d}2=2^{d}$, and for $m\geq 1$,
\begin{eqnarray*}
|V_{m}^{\left( 1\right) }| &\geq &\sum\nolimits_{|\boldsymbol{\ell |}%
_{1}=0}\prod\nolimits_{j=1}^{d}2+\sum\nolimits_{1\leq \ell _{1}\leq m,|%
\boldsymbol{\ell }_{-1}\boldsymbol{|}_{1}=0}(\prod\nolimits_{j=2}^{d}2)2^{%
\ell _{1}-1}\\
&=& \textcolor{black}{2^{d}+2^{d-1}\sum\nolimits_{1\leq \ell _{1}\leq m}2^{\ell
_{1}-1} =2^{d}+2^{d-1}(2^{m}-1)}\\
&=& 2^{d-1}(2^{m}-1+2)\geq 2^{d-1}(2^{m}+1).
\end{eqnarray*}%
Therefore, $|V_{m}^{\left( 1\right) }|\geq 2^{d-1}(2^{m}+1)$ for any $m\geq 0
$.

\subsection{Proof of Proposition \protect\ref{THM:error_ReLU}}

In this section, we provide the proof of Proposition \ref{THM:error_ReLU}.
It is clear that $||\widetilde{f}-f||_{2}=||\widetilde{f}%
-f_{m}+f_{m}-f||_{2}\leq ||\widetilde{f}-f_{m}||_{2}+||f_{m}-f||_{2}$. The
rate of $||f_{m}-f||_{2}$ is provided in Proposition \ref{PROP:error_sparse}%
. Next we derive the rate of $||\widetilde{f}-f_{m}||_{2}$ as follows. By (%
\ref{EQ:fm}) and (\ref{EQ:ReLUapproximator}), we have
\begin{equation*}
||\widetilde{f}-f_{m}||_{2}\leq \sup_{\boldsymbol{x}\in \mathcal{X}}\sum_{|%
\boldsymbol{\ell |}_{1}\leq m}\sum_{s\in I_{\boldsymbol{\ell }}}|\gamma _{%
\boldsymbol{\ell },\boldsymbol{s}}^{0}||\textcolor{black}{\widetilde{\varphi}_R({\phi }_{\boldsymbol{\ell },%
\boldsymbol{s}}(\boldsymbol{x}))}-\phi _{\boldsymbol{\ell },\boldsymbol{s}}(%
\boldsymbol{x})|.
\end{equation*}%
Since a given $\boldsymbol{x}$ belongs to at most one of the disjoint
supports for $\phi _{\boldsymbol{\ell },\boldsymbol{s}}(\boldsymbol{x})$,
this result together with (\ref{EQ:phitilda}) lead to%
\begin{equation*}
||\widetilde{f}-f_{m}||_{2}\leq 3\cdot 2^{-2R-2}(d-1)\sum_{|\boldsymbol{\ell
|}_{1}\leq m}|\gamma _{\boldsymbol{\ell },\boldsymbol{s}_{\boldsymbol{\ell }%
}}^{0}|,
\end{equation*}%
for some $\boldsymbol{s}_{\boldsymbol{\ell }}$. Moreover, by (\ref%
{EQ:gammals}), we have%
\begin{eqnarray*}
\sum_{|\boldsymbol{\ell |}_{1}\leq m}|\gamma _{_{\boldsymbol{\ell },%
\boldsymbol{s}_{\boldsymbol{\ell }}}}^{0}| &\leq &\sum_{|\boldsymbol{\ell |}%
_{1}\leq m}6^{-d/2}2^{-(3/2)|\boldsymbol{\ell }|_{1}}||D^{\boldsymbol{2}%
}f||_{L^{2}} \\
&=& \textcolor{black}{ 6^{-d/2}||D^{\boldsymbol{2}}f||_{L^{2}}\sum_{|\boldsymbol{\ell |}_{1}\leq
m}2^{-(3/2)|\boldsymbol{\ell }|_{1}} }\\
&=&6^{-d/2}||D^{\boldsymbol{2}%
}f||_{L^{2}}\sum\nolimits_{k=0}^{m}2^{-(3/2)k}\binom{k+d-1}{d-1}.
\end{eqnarray*}%

Since $\sum\nolimits_{k=0}^{\infty }\left( \frac{1}{\sqrt{8}}\right)
^{k}\left( 1-\frac{1}{\sqrt{8}}\right) ^{d}\binom{k+d-1}{d-1}=1$, it implies
that
\begin{equation*}
\sum\nolimits_{k=0}^{m}2^{-(3/2)k}\binom{k+d-1}{d-1}\leq
\sum\nolimits_{k=0}^{\infty }2^{-(3/2)k}\binom{k+d-1}{d-1}=\left( 1-\frac{1}{%
\sqrt{8}}\right) ^{-d},
\end{equation*}%
and thus
\begin{equation}
\sum_{|\boldsymbol{\ell |}_{1}\leq m}|\gamma _{\boldsymbol{\ell },%
\boldsymbol{s}_{\boldsymbol{\ell }}}^{0}|\leq \left\{ \sqrt{6}(1-\frac{1}{\sqrt{8%
}})\right\} ^{-d}||D^{\boldsymbol{2}}f||_{L^{2}}\leq (\sqrt{3/2})^{-d}||D^{%
\boldsymbol{2}}f||_{L^{2}}.  \label{EQ:Gammals}
\end{equation}%
Therefore,
\begin{eqnarray*}
||\widetilde{f}-f_{m}||_{2} &\leq &3\cdot 2^{-2R-2}(d-1)(\sqrt{3/2}%
)^{-d}||D^{\boldsymbol{2}}f||_{L^{2}} \\
&=&(3/4)2^{-2R}(d-1)(\sqrt{3/2})^{-d}||D^{\boldsymbol{2}}f||_{L^{2}}.
\end{eqnarray*}%

The above result and (\ref{EQ:fm-f}) lead to
\begin{eqnarray*}
||\widetilde{f}-f||_{2} &\leq &||\widetilde{f}-f_{m}||_{2}+||f_{m}-f||_{2} \\
&\leq &\left\{ (3/4)2^{-2R}(d-1)(\sqrt{3/2})^{-d}+4^{-1}c_{\mu
}2^{-2m}(3/2)^{-d}(m+3)^{d-1}\right\} ||D^{\boldsymbol{2}}f||_{L^{2}} \\
&\leq &\left\{ (3/2)2^{-2R}+4^{-1}c_{\mu
}2^{-2m}(3/2)^{-d}(m+3)^{d-1}\right\} ||D^{\boldsymbol{2}}f||_{L^{2}} \\
&=&\left\{ (3/2)2^{-2R}+6^{-1}c_{\mu }2^{-2m}\{(2/3)(m+3)\}^{d-1}\right\}
||D^{\boldsymbol{2}}f||_{L^{2}}.
\end{eqnarray*}%

Moreover, the ReLU network used to construct the approximator $\widetilde{f}$
has depth \linebreak $\mathcal{O}(R\log _{2}d)$, \textcolor{black}{ the number of weights $\mathcal{O%
}(Rd)\times |V_{m}^{\left( 1\right) }|$, and the computational units $\mathcal{O}%
(Rd)\times |V_{m}^{\left( 1\right) }|$,. By the upper bound for $%
|V_{m}^{\left( 1\right) }|$ established in Proposition (\ref{PROP:cardinality}), we have
that the number of weights is $\mathcal{O}\left(
2^{m}d^{3/2}R\left( 2e\frac{m+d}{d-1}\right) ^{d-1}\right) $, and the number of the computational units is $\mathcal{O}(Rd)\times
\mathcal{O}\left( 2^{m}\frac{d}{\sqrt{d-1}}\left( 2e\frac{m+d}{d-1}\right)
^{d-1}\right) =\mathcal{O}\left( 2^{m}d^{3/2}R\left( 2e\frac{m+d}{d-1}%
\right) ^{d-1}\right) $}.

\subsection{Proofs of Proposition \protect\ref{THM:error_ReLUbound}}

Under Condition (\ref{Bernstein2}) given in Assumption \ref{ass5}, by the
definition of $f^{\ast }$ given in (\ref{EQ:f0RN}) and Proposition \ref%
{THM:error_ReLU}, the approximation error satisfies that for $d\geq 2$ and $%
R\geq m$,
\begin{eqnarray*}
\mathcal{E}(f^{\ast })-\mathcal{E}(f_{0}) &\leq &\mathcal{E}(\widetilde{f})-%
\mathcal{E}(f_{0})\leq b_{\rho }||\widetilde{f}-f_{0}||_{2}^{2} \\
&\leq &b_{\rho }\left\{ (3/2)2^{-2R}+6^{-1}c_{\mu
}2^{-2m}\{(2/3)(m+3)\}^{d-1}\right\} ^{2}||D^{\boldsymbol{2}%
}f_{0}||_{L^{2}}^{2} \\
&\leq &b_{\rho }\left\{ (3/2)2^{-2m}+6^{-1}c_{\mu
}2^{-2m}\{(2/3)(m+3)\}^{d-1}\right\} ^{2}||D^{\boldsymbol{2}%
}f_{0}||_{L^{2}}^{2} \\
&\leq &\zeta _{m,d},
\end{eqnarray*}%
where
\begin{equation*}
\zeta _{m,d}=4^{-1}b_{\rho }(3+c_{\mu }/3)^{2}2^{-4m}\{(2/3)(m+3)\}^{2\left(
d-1\right) }||D^{\boldsymbol{2}}f_{0}||_{L^{2}}^{2}.
\end{equation*}

\subsection{Proofs of theorems \protect\ref{THM:sampliingerror1} and \protect
\ref{THM:sampliingerror2}}

We first introduce a Bernstein inequality which will be used to establish
the bounds in Theorems \ref{THM:sampliingerror1} and \ref%
{THM:sampliingerror2}.

\begin{lemma}
\label{LEM:bernstein}Let $\mathcal{G}$ be a set of scalar-valued functions on%
$\ \mathcal{X}\times \mathcal{Y}$ such that for each $\xi \left( \boldsymbol{%
X},Y\right) \in \mathcal{G}$, $\mathbb{E\{}\xi \left( \boldsymbol{X}%
,Y\right) \}\geq 0$, $\mathbb{E\{}\xi \left( \boldsymbol{X},Y\right)
^{2}\}\leq c_{1}\mathbb{E\{}\xi \left( \boldsymbol{X},Y\right) \}$ and $|\xi
\left( \boldsymbol{X},Y\right) -\mathbb{E\{}\xi \left( \boldsymbol{X}%
,Y\right) \}|\leq c_{2}$ almost everywhere for some constants $%
c_{1},c_{2}\in (0,\infty )$. Then for every $\epsilon >0$ and $0<\alpha \leq
1$, we have
\begin{eqnarray*}
&&P\left\{ \sup_{\xi \in \mathcal{G}}\frac{\mathbb{E\{}\xi \left(
\boldsymbol{X},Y\right) \}-n^{-1}\sum\nolimits_{i=1}^{n}\xi \left(
\boldsymbol{X}_{i},Y_{i}\right) }{\sqrt{\mathbb{E\{}\xi \left( \boldsymbol{X}%
,Y\right) \}+\epsilon }}>4\alpha \sqrt{\epsilon }\right\} \\
&\leq &\mathcal{N}(\alpha \epsilon ,\mathcal{G},||\cdot ||_{\infty })\exp
\left( -\frac{\alpha ^{2}n\epsilon }{2c_{1}+2c_{2}/3}\right) .
\end{eqnarray*}
\end{lemma}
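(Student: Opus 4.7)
The plan is to reduce this uniform tail bound over the infinite class $\mathcal{G}$ to a finite union bound via an $||\cdot ||_{\infty }$-covering argument, and then apply the classical one-sided Bernstein inequality to each element of the cover. First I would fix a minimal $\alpha \epsilon$-cover $\mathcal{G}_{0}\subset \mathcal{G}$, so that $|\mathcal{G}_{0}|=\mathcal{N}(\alpha \epsilon ,\mathcal{G},||\cdot ||_{\infty })$ and every $\xi \in \mathcal{G}$ has a representative $\tilde{\xi}\in \mathcal{G}_{0}$ with $||\xi -\tilde{\xi}||_{\infty }\leq \alpha \epsilon$. Write $h(\xi )=(\mathbb{E}\{\xi \}-n^{-1}\sum_{i=1}^{n}\xi (\boldsymbol{X}_{i},Y_{i}))/\sqrt{\mathbb{E}\{\xi \}+\epsilon }$ for the quantity being supremized.

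The heart of the argument is a discretization step showing that $\{h(\xi )>4\alpha \sqrt{\epsilon }\}$ forces a comparable deviation for the representative $\tilde{\xi}$. From $|\mathbb{E}\{\xi -\tilde{\xi}\}|\leq \alpha \epsilon $ and $|n^{-1}\sum_{i}(\xi -\tilde{\xi})(\boldsymbol{X}_{i},Y_{i})|\leq \alpha \epsilon $, the numerator of $h$ shifts by at most $2\alpha \epsilon $, so on the event in question $\mathbb{E}\{\tilde{\xi}\}-n^{-1}\sum_{i}\tilde{\xi}(\boldsymbol{X}_{i},Y_{i})>4\alpha \sqrt{\epsilon }\sqrt{\mathbb{E}\{\xi \}+\epsilon }-2\alpha \epsilon \geq 2\alpha \sqrt{\epsilon }\sqrt{\mathbb{E}\{\xi \}+\epsilon }$, where the $2\alpha \epsilon $ slack is absorbed using $\sqrt{\epsilon }\leq \sqrt{\mathbb{E}\{\xi \}+\epsilon }$. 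For the denominator, the elementary inequality $\mathbb{E}\{\tilde{\xi}\}+\epsilon \leq (1+\alpha )(\mathbb{E}\{\xi \}+\epsilon )\leq 2(\mathbb{E}\{\xi \}+\epsilon )$, which holds because $\mathbb{E}\{\xi \}\geq 0$ and $\alpha \leq 1$, yields $\sqrt{\mathbb{E}\{\xi \}+\epsilon }\geq \sqrt{(\mathbb{E}\{\tilde{\xi}\}+\epsilon )/2}$. Combining the two gives that $\{h(\xi )>4\alpha \sqrt{\epsilon }\}$ implies $\mathbb{E}\{\tilde{\xi}\}-n^{-1}\sum_{i}\tilde{\xi}(\boldsymbol{X}_{i},Y_{i})>\sqrt{2}\alpha \sqrt{\epsilon (\mathbb{E}\{\tilde{\xi}\}+\epsilon )}$ for some $\tilde{\xi}\in \mathcal{G}_{0}$.

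Finally I would apply the one-sided Bernstein inequality to each fixed $\tilde{\xi}\in \mathcal{G}_{0}$ with variance control $\mathrm{Var}(\tilde{\xi})\leq \mathbb{E}\{\tilde{\xi}^{2}\}\leq c_{1}\mathbb{E}\{\tilde{\xi}\}\leq c_{1}(\mathbb{E}\{\tilde{\xi}\}+\epsilon )$ and almost-sure range bound $c_{2}$, evaluated at the threshold $t=\sqrt{2}\alpha \sqrt{\epsilon (\mathbb{E}\{\tilde{\xi}\}+\epsilon )}$. Using $t\leq \sqrt{2}\alpha (\mathbb{E}\{\tilde{\xi}\}+\epsilon )$, which follows from $\sqrt{\epsilon }\leq \sqrt{\mathbb{E}\{\tilde{\xi}\}+\epsilon }$, the factor $\mathbb{E}\{\tilde{\xi}\}+\epsilon $ cancels from numerator and denominator of the Bernstein exponent $nt^{2}/(2c_{1}(\mathbb{E}\{\tilde{\xi}\}+\epsilon )+2c_{2}t/3)$, producing an exponent at least $2n\alpha ^{2}\epsilon /(2c_{1}+2\sqrt{2}\alpha c_{2}/3)$; with $\alpha \leq 1$ and $2\sqrt{2}<4$ this is in turn bounded below by $\alpha ^{2}n\epsilon /(2c_{1}+2c_{2}/3)$. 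A union bound over $\mathcal{G}_{0}$ then yields the stated inequality. The main technical obstacle is simply bookkeeping the constants so that the two $\alpha \epsilon $ numerator errors and the $\sqrt{2}$-factor comparing $\mathbb{E}\{\xi \}+\epsilon $ to $\mathbb{E}\{\tilde{\xi}\}+\epsilon $ are absorbed into the stated prefactor $4\alpha \sqrt{\epsilon }$ and the target Bernstein denominator $2c_{1}+2c_{2}/3$; the numerical factor $4$ and the restriction $\alpha \leq 1$ are precisely calibrated to make this absorption possible.
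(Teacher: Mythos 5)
Your proposal is correct and follows essentially the same route as the paper's proof: an $\alpha\epsilon$-cover with centers in $\mathcal{G}$, the one-sided Bernstein inequality applied to each center with the variance bound $\sigma^{2}(\tilde{\xi})\leq c_{1}\mathbb{E}\{\tilde{\xi}\}\leq c_{1}(\mathbb{E}\{\tilde{\xi}\}+\epsilon)$, and a union bound, with the factor $4$ absorbing the two discretization errors and the comparison between $\mathbb{E}\{\xi\}+\epsilon$ and $\mathbb{E}\{\tilde{\xi}\}+\epsilon$. The only (immaterial) difference is bookkeeping: the paper applies Bernstein at the self-normalized threshold $\alpha\sqrt{\epsilon}$ after dividing by $\sqrt{\mu(\xi_{j})+\epsilon}$, while you work with the unnormalized deviation and absorb an extra $\sqrt{2}$ into the denominator $2c_{1}+2c_{2}/3$ using $2\sqrt{2}<4$; both yield the stated bound.
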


\begin{proof}
Let $\{\xi _{j}\}_{j=1}^{J}\in \mathcal{G}$ $\ $with $J=\mathcal{N}(\alpha
\epsilon ,\mathcal{G},||\cdot ||_{\infty })$ being such that $\mathcal{G}$
is covered by $||\cdot ||_{\infty }$- balls centered on $\xi _{j}$ with
radius $\alpha \epsilon $. Denote $\mu (\xi )=\mathbb{E\{}\xi \left(
\boldsymbol{X},Y\right) \}$ and $\sigma ^{2}(\xi )=$var$\mathbb{\{}\xi
\left( \boldsymbol{X},Y\right) \}$. For each $j$, by applying the one-side
Bernstein inequality in Corollary 3.6 of \cite{CZ07}, one has
\begin{eqnarray}
&&P\left\{ \frac{\mu (\xi _{j})-n^{-1}\sum\nolimits_{i=1}^{n}\xi _{j}\left(
\boldsymbol{X}_{i},Y_{i}\right) }{\sqrt{\mu (\xi _{j})+\epsilon }}>\alpha
\sqrt{\epsilon }\right\}  \notag \\
&\leq &\exp \left( -\frac{\alpha ^{2}n(\mu (\xi _{j})+\epsilon )\epsilon }{%
2\{\sigma ^{2}(\xi _{j})+c_{2}\alpha \sqrt{\mu (\xi _{j})+\epsilon }\sqrt{%
\epsilon }/3\}}\right) .  \label{Pxi}
\end{eqnarray}%
Since $\sigma ^{2}(\xi _{j})\leq \mathbb{E\{}\xi _{j}\left( \boldsymbol{X}%
,Y\right) ^{2}\}\leq c_{1}\mu (\xi _{j})$, then%
\begin{eqnarray*}
&&\sigma ^{2}(\xi _{j})+c_{2}\alpha \sqrt{\mu (\xi _{j})+\epsilon }\sqrt{%
\epsilon }/3 \\
&\leq &c_{1}\mu (\xi _{j})+c_{2}(\mu (\xi _{j})+\epsilon )/3 \\
&\leq &c_{1}\left( \mu (\xi _{j})+\epsilon \right) +c_{2}(\mu (\xi
_{j})+\epsilon )/3 \\
&=&(c_{1}+c_{2}/3)(\mu (\xi _{j})+\epsilon ).
\end{eqnarray*}%
The above result together with (\ref{Pxi}) implies that
\begin{eqnarray}
&&P\left\{ \frac{\mu (\xi _{j})-n^{-1}\sum\nolimits_{i=1}^{n}\xi _{j}\left(
\boldsymbol{X}_{i},Y_{i}\right) }{\sqrt{\mu (\xi _{j})+\epsilon }}>\alpha
\sqrt{\epsilon }\right\}  \notag \\
&\leq &\exp \left( -\frac{\alpha ^{2}n(\mu (\xi _{j})+\epsilon )\epsilon }{%
2(c_{1}+c_{2}/3)(\mu (\xi _{j})+\epsilon )}\right) =\exp \left( -\frac{%
\alpha ^{2}n\epsilon }{2(c_{1}+c_{2}/3)}\right) .  \label{xij}
\end{eqnarray}%
For each $\xi \in \mathcal{G}$, there exists some $j$ such that $||\xi -\xi
_{j}||_{\infty }\leq \alpha \epsilon $. Then $|\mu (\xi )-\mu (\xi _{j})|$
and $|n^{-1}\sum\nolimits_{i=1}^{n}\xi \left( \boldsymbol{X}%
_{i},Y_{i}\right) -n^{-1}\sum\nolimits_{i=1}^{n}\xi _{j}\left( \boldsymbol{X}%
_{i},Y_{i}\right) |$ are both bounded by $\alpha \epsilon $. Hence,%
\begin{equation*}
\frac{|\mu (\xi )-\mu (\xi _{j})|}{\sqrt{\mu (\xi )+\epsilon }}\leq \alpha
\sqrt{\epsilon },\frac{|n^{-1}\sum\nolimits_{i=1}^{n}\xi \left( \boldsymbol{X%
}_{i},Y_{i}\right) -n^{-1}\sum\nolimits_{i=1}^{n}\xi _{j}\left( \boldsymbol{X%
}_{i},Y_{i}\right) |}{\sqrt{\mu (\xi )+\epsilon }}\leq \alpha \sqrt{\epsilon
}.
\end{equation*}%
This implies that
\begin{eqnarray*}
\mu (\xi _{j})+\epsilon &=&\mu (\xi _{j})-\mu (\xi )+\mu (\xi )+\epsilon \\
&\leq &\alpha \sqrt{\epsilon }\sqrt{\mu (\xi )+\epsilon }+\{\mu (\xi
)+\epsilon \} \\
&\leq &\sqrt{\epsilon }\sqrt{\mu (\xi )+\epsilon }+\{\mu (\xi )+\epsilon \}
\\
&\leq &2\{\mu (\xi )+\epsilon \},
\end{eqnarray*}%
so that $\sqrt{\mu (\xi _{j})+\epsilon }\leq 2\sqrt{\{\mu (\xi )+\epsilon \}}
$. Therefore, $\textcolor{black}{\{\mu (\xi )-n^{-1}\sum\nolimits_{i=1}^{n}\xi \left(
\boldsymbol{X}_{i},Y_{i}\right) \}/ } \linebreak \textcolor{black}{\sqrt{\mu (\xi )+\epsilon }\geq 4\alpha\sqrt{\epsilon }}$ implies that $\{\mu (\xi
_{j})-n^{-1}\sum\nolimits_{i=1}^{n}\xi _{j}\left( \boldsymbol{X}%
_{i},Y_{i}\right) \}/\sqrt{\mu (\xi )+\epsilon }\geq 2\alpha \sqrt{\epsilon }$ and thus $\{\mu (\xi _{j})-n^{-1}\sum\nolimits_{i=1}^{n}\xi _{j}\left(
\boldsymbol{X}_{i},Y_{i}\right) \}/\sqrt{\mu (\xi _{j})+\epsilon }%
\geq \alpha \sqrt{\epsilon }$. This result together with (\ref{xij}) implies
\begin{eqnarray*}
&&P\left\{ \sup_{\xi \in \mathcal{G}}\frac{\mu (\xi
)-n^{-1}\sum\nolimits_{i=1}^{n}\xi \left( \boldsymbol{X}_{i},Y_{i}\right) }{%
\sqrt{\mu (\xi )+\epsilon }}>4\alpha \sqrt{\epsilon }\right\} \\
&\leq &\sum\nolimits_{j=1}^{J}P\left\{ \frac{\mu (\xi
_{j})-n^{-1}\sum\nolimits_{i=1}^{n}\xi _{j}\left( \boldsymbol{X}%
_{i},Y_{i}\right) }{\sqrt{\mu (\xi _{j})+\epsilon }}>\alpha \sqrt{\epsilon }%
\right\} \leq J\exp \left( -\frac{\alpha ^{2}n\epsilon }{2c_{1}+2c_{2}/3}%
\right).
\end{eqnarray*}
\end{proof}

Based on the Bernstein inequality given in Lemma \ref{LEM:bernstein}, we
next provide a probability bound that will be used for establishing an upper
bound for the sampling error $\mathcal{E}(\widehat{f})-\mathcal{E}(f^{\ast
}) $. For notational simplicity, we denote $\mathcal{E}_{n}(f;\boldsymbol{%
\Theta }):=\mathcal{E}_{n}(f)$.

\begin{lemma}
\label{LEM:probbound}Under Assumptions \ref{ass1}-\ref{ass4}, we have that
for any $\epsilon >0$ and $0<\alpha \leq 1,$
\begin{eqnarray*}
&&P\left\{ \sup_{f\in \textcolor{black}{ \mathcal{F(%
}R ,m,B_{0},B_{1})}}\frac{\mathcal{E}(f)-%
\mathcal{E}(f^{\ast })-(\mathcal{E}_{n}(f)-\mathcal{E}_{n}(f^{\ast }))}{%
\sqrt{\mathcal{E}(f)-\mathcal{E}(f^{\ast })+\epsilon }}>4\alpha \sqrt{%
\epsilon }\right\} \\
&\leq &\mathcal{N}(\alpha C_{\rho }^{-1}\epsilon ,\textcolor{black}{ \mathcal{F(%
}R ,m,B_{0},B_{1})}),||\cdot ||_{\infty })\exp \left( -\frac{\alpha ^{2}n\epsilon
}{2C_{\rho }^{2}a_{\rho }^{-1}+8M_{\rho }/3}\right) ,
\end{eqnarray*}%
where $C_{\rho },a_{\rho }$ and $M_{\rho }$ are constants given in
Assumptions \ref{ass2} and \ref{ass4} and Remark \ref{rho}.
\end{lemma}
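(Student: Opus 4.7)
The plan is to cast the supremum as an empirical-vs-expected process over a single function class and then apply the Bernstein-type inequality of Lemma \ref{LEM:bernstein}. Concretely, for each $f\in \mathcal{F}(\widetilde{\phi},m,B)$ define the contrast
\begin{equation*}
\xi_f(\boldsymbol{x},y) \;=\; \rho(f(\boldsymbol{x}),y) - \rho(f_{RL}^{0}(\boldsymbol{x}),y),
\end{equation*}
and let $\mathcal{G} = \{\xi_f : f\in \mathcal{F}(\widetilde{\phi},m,B)\}$. Then $\mathbb{E}\{\xi_f(\boldsymbol{X},Y)\} = \mathcal{E}(f)-\mathcal{E}(f_{RL}^{0}) \ge 0$ by the defining minimality of $f_{RL}^{0}$ in \eqref{EQ:f0RN}, and $n^{-1}\sum_{i=1}^{n}\xi_f(\boldsymbol{X}_i,Y_i) = \mathcal{E}_n(f)-\mathcal{E}_n(f_{RL}^{0})$. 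So the event in the lemma is exactly the event in Lemma \ref{LEM:bernstein} evaluated at the class $\mathcal{G}$.

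The next step is to verify the hypotheses of Lemma \ref{LEM:bernstein} with explicit constants. First, by the Lipschitz Assumption \ref{ass2}, $|\xi_f(\boldsymbol{x},y)| \le C_\rho |f(\boldsymbol{x}) - f_{RL}^{0}(\boldsymbol{x})|$ pointwise, hence
\begin{equation*}
\mathbb{E}\{\xi_f(\boldsymbol{X},Y)^2\} \;\le\; C_\rho^{2}\, \|f-f_{RL}^{0}\|_2^{2}.
\end{equation*}
Combined with the Bernstein condition in Assumption \ref{ass4}, this yields
\begin{equation*}
\mathbb{E}\{\xi_f(\boldsymbol{X},Y)^2\} \;\le\; C_\rho^{2} a_\rho^{-1} \bigl(\mathcal{E}(f)-\mathcal{E}(f_{RL}^{0})\bigr) \;=\; C_\rho^{2} a_\rho^{-1}\, \mathbb{E}\{\xi_f(\boldsymbol{X},Y)\},
\end{equation*}
so one may take $c_1 = C_\rho^{2} a_\rho^{-1}$. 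For the boundedness condition, Remark \ref{rho} gives $|\rho(f(\boldsymbol{x}),y)| \le M_\rho$ for $f\in \mathcal{F}(\widetilde{\phi},m,B)$, so $|\xi_f| \le 2M_\rho$ almost everywhere and consequently $|\xi_f - \mathbb{E}\{\xi_f\}| \le 4M_\rho$; this yields $c_2 = 4M_\rho$. With these choices $2c_1 + 2c_2/3 = 2 C_\rho^{2} a_\rho^{-1} + 8M_\rho/3$, which matches the denominator in the exponent of the lemma.

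The last ingredient is to relate the covering number of $\mathcal{G}$ to that of $\mathcal{F}(\widetilde{\phi},m,B)$. Applying the Lipschitz bound once more,
\begin{equation*}
\|\xi_{f_1} - \xi_{f_2}\|_\infty \;=\; \sup_{(\boldsymbol{x},y)} \bigl|\rho(f_1(\boldsymbol{x}),y) - \rho(f_2(\boldsymbol{x}),y)\bigr| \;\le\; C_\rho\,\|f_1 - f_2\|_\infty,
\end{equation*}
so any $\|\cdot\|_\infty$-covering of $\mathcal{F}(\widetilde{\phi},m,B)$ at radius $\alpha C_\rho^{-1}\epsilon$ induces, via $f\mapsto \xi_f$, a $\|\cdot\|_\infty$-covering of $\mathcal{G}$ at radius $\alpha \epsilon$. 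Thus $\mathcal{N}(\alpha \epsilon,\mathcal{G},\|\cdot\|_\infty) \le \mathcal{N}(\alpha C_\rho^{-1}\epsilon, \mathcal{F}(\widetilde{\phi},m,B), \|\cdot\|_\infty)$. Plugging the three estimates into Lemma \ref{LEM:bernstein} yields the stated bound. I expect no real obstacle: everything reduces to the Lipschitz property, the Bernstein condition, and the definition of $f_{RL}^{0}$; the only place one has to be careful is in keeping the numerical constants consistent (the factor $4M_\rho$ for $|\xi_f - \mathbb{E}\xi_f|$, and transferring the covering radius through the Lipschitz map with the factor $C_\rho^{-1}$).
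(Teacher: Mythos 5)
Your proposal is correct and follows essentially the same route as the paper's own proof: it introduces the contrast class $\mathcal{G}=\{\rho(f(\cdot),\cdot)-\rho(f_{RL}^{0}(\cdot),\cdot)\}$, verifies the hypotheses of Lemma \ref{LEM:bernstein} with $c_{1}=C_{\rho}^{2}a_{\rho}^{-1}$ and $c_{2}=4M_{\rho}$ via Assumption \ref{ass2}, Assumption \ref{ass4} and Remark \ref{rho}, and transfers the covering number from $\mathcal{G}$ to $\mathcal{F}(\widetilde{\phi},m,B)$ through the Lipschitz map with the factor $C_{\rho}^{-1}$. All constants match the statement, and no step is missing.
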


\begin{proof}
Let $\mathcal{G}=\{\xi \left( \boldsymbol{x},y\right) =\rho \left( f(%
\boldsymbol{x}),y\right) -\rho \left( f^{\ast }(\boldsymbol{x}),y\right)
;f\in \textcolor{black}{ \mathcal{F(%
}R ,m,B_{0},B_{1})},(\boldsymbol{x},y)\in \mathcal{X}%
\times \mathcal{Y}\}$. For any $f\in \textcolor{black}{ \mathcal{F(%
}R ,m,B_{0},B_{1})}$,
\begin{equation*}
\mathbb{E\{}\xi \left( \boldsymbol{X},Y\right) \}=\mathbb{E}\{\rho \left( f(%
\boldsymbol{X}),Y\right) \}-\mathbb{E}\{\rho \left( f^{\ast }(\boldsymbol{X}%
),Y\right) \}\geq 0,
\end{equation*}%
based on the definition of $f^{\ast }$ given in (\ref{EQ:f0RN}). By Remark %
\ref{rho}, we have $|\xi \left( \boldsymbol{x},y\right) |\leq 2M_{\rho }$,
for almost every $(\boldsymbol{x},y)\in \mathcal{X}\times \mathcal{Y}$, so
that
\begin{equation*}
|\xi \left( \boldsymbol{X},Y\right) -\mathbb{E\{}\xi \left( \boldsymbol{X}%
,Y\right) \}|\leq 4M_{\rho },
\end{equation*}%
almost surely. Moreover, Assumption \ref{ass2} further implies that $|\xi
\left( \boldsymbol{x},y\right) |\leq C_{\rho }|f(\boldsymbol{x})-f^{\ast }(%
\boldsymbol{x})|$ for almost every $(\boldsymbol{x},y)\in \mathcal{X}\times
\mathcal{Y}$. Then
\begin{equation}
\mathbb{E}\{\xi \left( \boldsymbol{X},Y\right) ^{2}\}\leq C_{\rho }^{2}\int_{%
\mathcal{X}}|f(\boldsymbol{x})-f^{\ast }(\boldsymbol{x})|^{2}d\mu _{X}(%
\boldsymbol{x})=C_{\rho }^{2}||f-f^{\ast }||_{2}^{2}.  \label{EQ:Exi}
\end{equation}%
Moreover, under Condition (\ref{Bernstein}) in Assumption \ref{ass4},%
\begin{equation*}
||f-f^{\ast }||_{2}^{2}\leq a_{\rho }^{-1}\{\mathcal{E}(f)-\mathcal{E}%
(f^{\ast })\}.
\end{equation*}%
Thus%
\begin{equation}
\mathbb{E}\{\xi \left( \boldsymbol{X},Y\right) ^{2}\}\leq a_{\rho
}^{-1}C_{\rho }^{2}\{\mathcal{E}(f)-\mathcal{E}(f^{\ast })\}=a_{\rho
}^{-1}C_{\rho }^{2}\mathbb{E\{}\xi \left( \boldsymbol{X},Y\right) \}.
\label{EQ:meanxi}
\end{equation}%
By the Bernstein inequality given in Lemma \ref{LEM:bernstein}, for every $%
\epsilon >0$ and $0<\alpha \leq 1$, we have
\begin{eqnarray*}
&&P\left\{ \sup_{f\in \textcolor{black}{ \mathcal{F(%
}R ,m,B_{0},B_{1})}}\frac{\mathcal{E}(f)-\mathcal{%
E}(f^{\ast })-(\mathcal{E}_{n}(f)-\mathcal{E}_{n}(f^{\ast }))}{\sqrt{%
\mathcal{E}(f)-\mathcal{E}(f^{\ast })+\epsilon }}>4\alpha \sqrt{\epsilon }%
\right\} \\
&\leq &\mathcal{N}(\alpha \epsilon ,\mathcal{G},||\cdot ||_{\infty })\exp
\left( -\frac{\alpha ^{2}n\epsilon }{2C_{\rho }^{2}a_{\rho }^{-1}+8M_{\rho
}/3}\right) .
\end{eqnarray*}%
Since $|\rho \left( f(\boldsymbol{x}),y\right) -\rho \left( f^{\ast }(%
\boldsymbol{x}),y\right) |\leq C_{\rho }|f(\boldsymbol{x})-f^{\ast }(%
\boldsymbol{x})|$ for almost every $(\boldsymbol{x},y)\in \mathcal{X}\times
\mathcal{Y}$, it follows that
\begin{equation*}
\mathcal{N}(\alpha \epsilon ,\mathcal{G},||\cdot ||_{\infty })\leq \mathcal{N%
}(\alpha C_{\rho }^{-1}\epsilon ,\textcolor{black}{ \mathcal{F(%
}R ,m,B_{0},B_{1})},||\cdot
||_{\infty }).
\end{equation*}
\end{proof}

\begin{proof}[Proof of Theorem \protect\ref{THM:sampliingerror1}]
Let $f=\widehat{f}$, $\Delta =\mathcal{E}(\widehat{f})-\mathcal{E}(f^{\ast })
$ and $\alpha =\sqrt{2}/8$. From the result in Lemma \ref{LEM:probbound}, we
have
\begin{equation}
P\left\{ \frac{\Delta -(\mathcal{E}_{n}(\widehat{f})-\mathcal{E}_{n}(f^{\ast
}))}{\sqrt{\Delta +\epsilon }}>\sqrt{\epsilon /2}\right\} \leq Q,  \label{Q}
\end{equation}%
where
\begin{equation*}
Q=\mathcal{N}(\sqrt{2}C_{\rho }^{-1}\epsilon /8,\textcolor{black}{ \mathcal{F(%
}R ,m,B_{0},B_{1})},||\cdot ||_{\infty })\exp \left( -n\epsilon /C^{\ast
}\right),
\end{equation*}%
in which $C^{\ast }=64(C_{\rho }^{2}a_{\rho }^{-1}+4M_{\rho }/3)$. Let $%
\Delta _{n}=(\mathcal{E}_{n}(\widehat{f})+2^{-1}\lambda \widehat{\boldsymbol{%
\Theta }}^{\top }\widehat{\boldsymbol{\Theta }}-\mathcal{E}_{n}(f^{\ast
})-2^{-1}\lambda \boldsymbol{\Theta }^{\ast \top }\boldsymbol{\Theta }^{\ast
})$. We have $\Delta _{n}\leq \varpi _{n}$, $\widehat{\boldsymbol{\Theta }}%
^{\top }\widehat{\boldsymbol{\Theta }}\leq B_{1}^{2}$ and $\boldsymbol{%
\Theta }^{\ast \top }\boldsymbol{\Theta }^{\ast }\leq B_{1}^{2}$. Then
\begin{eqnarray}
\frac{\Delta -(\mathcal{E}_{n}(\widehat{f})-\mathcal{E}_{n}(f^{\ast }))}{%
\sqrt{\Delta +\epsilon }} &=&\frac{\Delta +(2^{-1}\lambda \widehat{%
\boldsymbol{\Theta }}^{\top }\widehat{\boldsymbol{\Theta }}-2^{-1}\lambda
\boldsymbol{\Theta }^{\ast \top }\boldsymbol{\Theta }^{\ast })-\Delta _{n}}{%
\sqrt{\Delta +\epsilon }}  \notag \\
&\geq &\frac{\Delta -\varpi _{n}}{\sqrt{\Delta +\epsilon }}+\frac{%
2^{-1}\lambda (\widehat{\boldsymbol{\Theta }}^{\top }\widehat{\boldsymbol{%
\Theta }}-\boldsymbol{\Theta }^{\ast \top }\boldsymbol{\Theta }^{\ast })}{%
\sqrt{\Delta +\epsilon }}.  \notag \\
&\geq &\frac{\Delta -\varpi _{n}}{\sqrt{\Delta +\epsilon }}-\frac{%
2^{-1}\lambda 2B_{1}^{2}}{\sqrt{\Delta +\epsilon }} \notag \\
&=&\frac{\Delta }{\sqrt{%
\Delta +\epsilon }}-\frac{(\varpi _{n}+\lambda B_{1}^{2})}{\sqrt{\Delta
+\epsilon }}.  \label{EQ:Delta}
\end{eqnarray}%
The above result and (\ref{Q}) lead to
\begin{equation*}
P\left\{ \Delta >\sqrt{\epsilon /2}\sqrt{\Delta +\epsilon }+(\varpi
_{n}+\lambda B_{1}^{2})\right\} \leq Q.
\end{equation*}%
Moreover,%
\begin{align*}
\text{ \ }\Delta & >\sqrt{\epsilon /2}\sqrt{\Delta +\epsilon }+\varpi
_{n}+\lambda B_{1}^{2} \\
& \Longleftrightarrow (\Delta -(\varpi _{n}+\lambda
B_{1}^{2}))^{2}>(\epsilon /2)(\Delta +\epsilon ) \\
\text{ \ \ \ \ \ \ }& \Longleftrightarrow (\Delta -(\varpi _{n}+\lambda
B_{1}^{2})-\epsilon /4)^{2}>(9/16)\epsilon ^{2} \\
\text{ \ \ \ }& \Longleftrightarrow \Delta -(\varpi _{n}+\lambda
B_{1}^{2})>\epsilon \text{ or }\Delta -(\varpi _{n}+\lambda
B_{1}^{2})<-(1/2)\epsilon .
\end{align*}%
Since $\Delta \geq 0$ and $\varpi _{n}+\lambda B_{1}^{2}<(1/2)\epsilon $,
then $P\left\{ \Delta >\sqrt{\epsilon /2}\sqrt{\Delta +\epsilon }+(\varpi
_{n}+\lambda B_{1}^{2})\right\} \leq Q$ is equivalent to $P(\Delta >\epsilon
+\varpi _{n}+\lambda B_{1}^{2})\leq Q$. Given that $\varpi _{n}+\lambda
B_{1}^{2}<(1/2)\epsilon $, one has $P(\Delta >(3/2)\epsilon )\leq P(\Delta
>\epsilon +\varpi _{n}+\lambda B_{1}^{2})\leq Q.$
\end{proof}

\begin{proof}[Proof of Theorem \protect\ref{THM:sampliingerror2}]
For the sparse deep ReLU network given in Section \ref{SEC:ReLUestimator},
the number of parameters is $W=|V_{m}^{\left( 1\right)
}|+(12R-6)+4+3=|V_{m}^{\left( 1\right) }|+12R+1\asymp |V_{m}^{\left(
1\right) }|+R$ and the number of layers is $L\asymp R\log _{2}d$. Let $%
d(W,L) $ denote the pseudo-dimension of the ReLU network class $\textcolor{black}{ \mathcal{F(%
}R ,m,B_{0},B_{1})}$ defined in \cite{BHLM19}. By the result
given in equation (2) of \cite{BHLM19}, one has that there exist constants
$c$, $C$ such that
\begin{equation}
c\cdot WL\log (W/L)\leq d(W,L)\leq C\cdot WL\log (W).
\label{pseudodimension}
\end{equation}%
By Theorem 12.2 in \cite{AB09}, one has
\begin{equation*}
\mathcal{N}(\sqrt{2}C_{\rho }^{-1}\epsilon /8,\textcolor{black}{ \mathcal{F(%
}R ,m,B_{0},B_{1})},||\cdot ||_{\infty })\leq (\frac{8C_{\rho }B_{0}e}{\sqrt{2}%
\epsilon })^{d(W,L)}\leq (\frac{16C_{\rho }B_{0}}{\epsilon })^{d(W,L)}.
\end{equation*}

Let
\begin{equation}
\varsigma =(\frac{16C_{\rho }B_{\textcolor{black}{0}}}{\epsilon })^{d(W,L)}\exp \left( -n\epsilon
/C^{\ast }\right) .  \label{EQ:omega}
\end{equation}%
By the above results and Theorem \ref{THM:sampliingerror1}, we have
\begin{equation}
P\left( \mathcal{E}(\widehat{f})-\mathcal{E}(f^{\ast })>(3/2)\epsilon
\right) \leq \varsigma .  \label{EQ:inequality_prob}
\end{equation}%
Moreover, (\ref{EQ:omega}) leads to\ $\exp \left( n\epsilon /C^{\ast
}\right) \epsilon ^{d(W,L)}=(16C_{\rho }B_0\varsigma ^{-1/d(W,L)})^{d(W,L)}$
which is equivalent to%
\begin{equation*}
\exp (\varkappa \epsilon )\epsilon =\nu \Longleftrightarrow \exp (\varkappa
\epsilon )(\varkappa \epsilon )=\varkappa \nu ,
\end{equation*}%
where $\varkappa =n/(C^{\ast }d(W,L))$ and $\nu =16C_{\rho }B_0\varsigma
^{-1/d(W,L)}$. Applying the monotone increasing Lambert W-function: $%
W:[0,\infty )\rightarrow \lbrack 0,\infty )$ defined by $W(t\exp (t))=t$ on
both sides of the above equation, we have
\begin{equation*}
W\left( \varkappa \nu \right) =\varkappa \epsilon ,
\end{equation*}%
which is equivalent to $\epsilon =W\left( \varkappa \nu \right) /\varkappa
\leq \max (1,\log (\varkappa \nu ))/\varkappa $, since $W(s)\leq \log (s)$
for all $s\geq e$. Then,
\begin{eqnarray*}
\epsilon &\leq &\max (1,\log (\varkappa \nu ))/\varkappa =\frac{C^{\ast
}d(W,L)}{n}\max (1,\log \frac{16C_{\rho }B_0n}{C^{\ast }d(W,L)\varsigma
^{1/d(W,L)}}) \\
&\leq &\frac{C^{\ast }d(W,L)}{n}\max (1,\log \frac{16C_{\rho }B_0n}{C^{\ast
}d(W,L)\varsigma }).
\end{eqnarray*}%
Therefore, we have
\begin{equation*}
P\left( \mathcal{E}(\widehat{f})-\mathcal{E}(f^{\ast })>(3/2)\frac{C^{\ast
}d(W,L)}{n}\max (1,\log \frac{C^{\ast \ast }n}{d(W,L)\varsigma })\right)
\leq \varsigma ,
\end{equation*}%
for $C^{\ast \ast }=16C_{\rho }B_0C^{\ast -1}$ based on (\ref{EQ:inequality_prob}).
Moreover, the result (\ref{pseudodimension}) implies that
\begin{equation*}
C^{\ast }d(W,L)\max (1,\log \frac{C^{\ast \ast }n}{d(W,L)\varsigma })\leq
C^{\ast }CWL\log (W)\max (1,\log \frac{C^{\ast \ast }n}{cWL\log
(W/L)\varsigma }).
\end{equation*}%
By (\ref{pseudodimension}) and (\ref{EQ:omega}), one has $(\textcolor{black}{\frac{16C_{\rho }B_0}{\epsilon} )^{cWL\log (W/L)}\exp \left(\frac{-n\epsilon}{C^{\ast }}\right) \leq
\varsigma \leq (\frac{16C_{\rho }B_0}{\epsilon})^{CWL\log (W)}} \linebreak \textcolor{black}{\exp \left( \frac{-n\epsilon}{C^{\ast }} \right)} $. Thus, one has
\begin{equation*}
P\left( \mathcal{E}(\widehat{f})-\mathcal{E}(f^{\ast })>\frac{3C^{\ast
}CWL\log (W)}{2n}\max (1,\log \frac{C^{\ast \ast }n}{cWL\log (W/L)\varsigma }%
)\right) \leq \varsigma .
\end{equation*}%
The proof is complete.
\end{proof}

\subsection{Proofs of Theorem \protect\ref{THM:rate}}

Let $2^{m}\asymp n^{1/5}$ and $m\leq R\asymp \log _{2}n$. When $d\lesssim
(\log _{2}n)^{1-\kappa }$ for an arbitrary small constant $\kappa >0\ $,
then for any constant $c\in (0,\infty )$, $\left( c(m+d)\right) ^{2d}\ll
n^{\nu }$ and $\left( c(m+3)\right) ^{2d}\ll n^{\nu }$ for an arbitrary
small $\nu >0$. Therefore, the bias term satisfies
\begin{eqnarray*}
\mathcal{E}(f^{\ast })-\mathcal{E}(f_{0}) &\leq &\zeta _{m,d}\lesssim
n^{-4/5}\{(2/3)(m+3)\}^{-2}\{(2/3)(m+3)\}^{2d} \\
&\ll &n^{-4/5}(\log _{2}n)^{-2}n^{\nu }=n^{-4/5+\nu }(\log _{2}n)^{-2},
\end{eqnarray*}%
where $\zeta _{m,d}$ is given in (\ref{XiRmd}). Then the bias term satisfies
\begin{equation}
\mathcal{E}(f^{\ast })-\mathcal{E}(f_{0})=o(n^{-4/5+\nu }(\log _{2}n)^{-2}).
\label{EQ:biasorder}
\end{equation}%
Moreover, Proposition \ref{PROP:cardinality} leads to
\begin{eqnarray*}
|V_{m}^{\left( 1\right) }| &\lesssim &n^{1/5}d^{1/2}\left( \frac{2e}{d-1}%
\right) ^{d-1}\left( m+d\right) ^{-1}\left( 2e(m+d)\right) ^{d} \\
&\ll &n^{1/5}(\log _{2}n)^{1/2-\kappa /2}(\log _{2}n)^{-1}n^{\nu
/2}=n^{1/5+\nu /2}(\log _{2}n)^{-\kappa /2-1/2},
\end{eqnarray*}%
and $n^{1/5}\lesssim |V_{m}^{\left( 1\right) }|$. Let $\epsilon =n^{-4/5+\nu
/2}(\log _{2}n)^{7/2-\kappa /2}$. By (\ref{pseudodimension}), one has
\begin{eqnarray*}
d(W,L) &\asymp &(|V_{m}^{\left( 1\right) }|+R)L\log (|V_{m}^{\left( 1\right)
}|+R)\\
&\ll & n^{1/5+\nu /2}(\log _{2}n)^{-\kappa /2-1/2}\left( \log _{2}n\right)
^{2}\log _{2}(\log _{2}n) \\
&=&n^{1/5+\nu /2}(\log _{2}n)^{3/2-\kappa /2}\log _{2}(\log _{2}n).
\end{eqnarray*}%
Then $\varsigma $ given in (\ref{EQ:omega}) satisfies
\textcolor{black}{
\begin{eqnarray*}
\varsigma  &\ll &\{n^{4/5-\nu /2}(\log _{2}n)^{\frac{\kappa -7}{2}}\}^{n^{1/5+\nu
/2}(\log _{2}n)^{\frac{3-\kappa}{2}}\log _{2}(\log _{2}n)}\exp \{-n^{1/5+\nu
/2}(\log _{2}n)^{\frac{7-\kappa}{2}}\} \\
&\leq &\exp \{n^{1/5+\nu /2}(\log _{2}n)^{\frac{3-\kappa}{2}}\left( \log
_{2}n\right) \log _{2}(\log _{2}n)\}\exp \{-n^{1/5+\nu /2}(\log
_{2}n)^{\frac{7-\kappa}{2}}\} \\
&=&\exp \{n^{1/5+\nu /2}(\log _{2}n)^{\frac{5-\kappa}{2}}(\log _{2}(\log
_{2}n)-\log _{2}n)\} \\ & \leq & \exp \{-\frac{1}{2}n^{1/5+\nu /2}(\log
_{2}n)^{\frac{5-\kappa}{2}}\}
\end{eqnarray*}%
}
, when $n$ is large. Thus, $\varsigma \rightarrow 0$ as $n\rightarrow \infty $%
. Therefore, the above results and (\ref{EQ:inequality_prob}) lead to
\begin{equation}
\mathcal{E}(\widehat{f})-\mathcal{E}(f^{\ast })=\mathcal{O}_{p}(n^{-4/5+\nu
/2}(\log _{2}n)^{7/2-\kappa /2}).  \label{EQ:errororder}
\end{equation}%
\textcolor{black}{The tuning parameter $\lambda =\mathcal{O}(\epsilon )=\mathcal{O}(n^{-4/5+\nu /2}(\log
_{2}n)^{7/2-\kappa /2})$, and $\varpi _{n}$ need to satisfy $\varpi _{n}=\mathcal{%
O}(\epsilon )=  \mathcal{O}(n^{-4/5+\nu /2} (\log _{2}n)^{7/2-\kappa /2})$.}

From the results in (\ref{EQ:biasorder}) and (\ref{EQ:errororder}), we have
\textcolor{black}{
\begin{align*}
\mathcal{E}(\widehat{f})-\mathcal{E}(f_{0}) & =\mathcal{E}(\widehat{f})-%
\mathcal{E}(f^{\ast })+\mathcal{E}(f^{\ast })-\mathcal{E}(f_{0}) \\
& =\mathcal{O}_{p}(n^{-4/5+\nu /2}(\log _{2}n)^{7/2-\kappa /2})+o(n^{-4/5+\nu
}(\log _{2}n)^{-2})\\
& = o_{p}(n^{-4/5+\nu }(\log _{2}n)^{-2}).
\end{align*}%
}
When $R\asymp \log _{2}n$, the ReLU network that is used to construct the
estimator $\widehat{f}$ has depth $\mathcal{O}(R\log _{2}d)=\mathcal{O}[\log
_{2}n\{\log _{2}(\log _{2}n)\}]$, the number of computational units $%
\mathcal{O}(Rd)\times |V_{m}^{\left( 1\right) }|=\mathcal{O}\{\left( \log
_{2}n\right) ^{3/2(1-\kappa )}n^{1/5+\nu /2}\}$, and the number of weights $%
\mathcal{O}(Rd)\times |V_{m}^{\left( 1\right) }|=\mathcal{O}\{\left( \log
_{2}n\right) ^{3/2(1-\kappa )}n^{1/5+\nu /2}\}$.

\subsection{Proofs of Proposition \protect\ref{THM:rate_fixed}}

Let $2^{m}\asymp n^{1/5}$ and $m\leq R\asymp \log _{2}n$. When $d\ $is a
fixed constant not depending on $n$, the bias term satisfies $\ $
\begin{equation*}
\mathcal{E}(f^{\ast })-\mathcal{E}(f_{0})\leq \zeta _{m,d}\lesssim
n^{-4/5}\{(2/3)(m+3)\}^{2\left( d-1\right) }\lesssim n^{-4/5}(\log
_{2}n)^{2d-2},
\end{equation*}%
where $\zeta _{m,d}$ is given in (\ref{XiRmd}). Then,
\begin{equation}
\mathcal{E}(f^{\ast })-\mathcal{E}(f_{0})=\mathcal{O}(n^{-4/5}(\log
_{2}n)^{2d-2}).  \label{EQ:bias1}
\end{equation}%
Moreover, Proposition \ref{PROP:cardinality} leads to
\begin{equation*}
|V_{m}^{\left( 1\right) }|\lesssim n^{1/5}d^{1/2}\left( m+d\right)
^{-1}\left( 2e(m+d)\right) ^{d}=\mathcal{O}(n^{1/5}(\log _{2}n)^{d-1}),
\end{equation*}%
and $n^{1/5}\lesssim |V_{m}^{\left( 1\right) }|$. Let $\epsilon
=n^{-4/5}(\log _{2}n)^{d+3}$. By (\ref{pseudodimension}), one has $%
d(W,L)\asymp (|V_{m}^{\left( 1\right) }|+R)L\log (|V_{m}^{\left( 1\right)
}|+R)=\mathcal{O}(n^{1/5}(\log _{2}n)^{d+1})$. $\ $Then $\varsigma $ given
in (\ref{EQ:omega}) satisfies
\begin{eqnarray*}
\varsigma  &\lesssim &\{n^{4/5}(\log _{2}n)^{-d-3}\}^{n^{1/5}(\log
_{2}n)^{d+1}}\exp \{-n^{1/5}(\log _{2}n)^{d+3}\} \\
&\lesssim &\exp \{n^{1/5}(\log _{2}n)^{d+2}\}\exp \{-n^{1/5}(\log
_{2}n)^{d+3}\} \\
&=&\exp \{n^{1/5}(\log _{2}n)^{d+2}(1-\log _{2}n)\}\lesssim \exp \{-\frac{1}{%
2}n^{1/5}(\log _{2}n)^{d+2}\}.
\end{eqnarray*}%
Thus, $\varsigma \rightarrow 0$ as $n\rightarrow \infty $. Therefore, the
above results and (\ref{EQ:inequality_prob}) lead to%
\begin{equation}
\mathcal{E}(\widehat{f})-\mathcal{E}(f^{\ast })=\mathcal{O}%
_{p}(n^{-4/5}(\log _{2}n)^{d+3}).  \label{EQ:samplingerror1}
\end{equation}%
The tuning parameter and $\varpi _{n}$ need to satisfy $\varpi _{n}=\mathcal{%
O}(\epsilon )=\mathcal{O}(n^{-4/5}(\log _{2}n)^{d+3})$ and $\lambda =%
\mathcal{O}(\epsilon )=\mathcal{O}(n^{-4/5}(\log _{2}n)^{d+3})$.

From the results in (\ref{EQ:bias1}) and (\ref{EQ:samplingerror1}), we have
\begin{eqnarray*}
\mathcal{E}(\widehat{f})-\mathcal{E}(f_{0}) &=&\mathcal{E}(\widehat{f})-%
\mathcal{E}(f^{\ast })+\mathcal{E}(f^{\ast })-\mathcal{E}(f_{0}) \\
&=&\mathcal{O}_{p}(n^{-4/5}(\log _{2}n)^{d+3})+\mathcal{O}(n^{-4/5}(\log
_{2}n)^{2d-2}) \\
&=&\mathcal{O}_{p}(n^{-4/5}(\log _{2}n)^{\left( d+3\right) \vee \left(
2d-2\right) }).
\end{eqnarray*}%
When $R\asymp \log _{2}(n)$, the ReLU network that is used to construct the
estimator $\widehat{f}$ has depth $\mathcal{O}(R\log _{2}d)=\mathcal{O}(\log
_{2}n)$, the number of computational units
\begin{equation*}
\mathcal{O}(Rd)\times |V_{m}^{\left( 1\right) }|=\mathcal{O}\{\left( \log
_{2}n\right) n^{1/5}(\log _{2}n)^{d-1}\}=\mathcal{O}\{\left( \log
_{2}n\right) ^{d}n^{1/5}\},
\end{equation*}%
and the number of weights $\mathcal{O}(Rd)\times |V_{m}^{\left( 1\right) }|=%
\mathcal{O}\{\left( \log _{2}n\right) ^{d}n^{1/5}\}$.

\subsection{{\protect\normalsize Proofs of Proposition \protect\ref%
{THM:rate_fixed_lower}}}

Let $d^{\ast }(W,L)$ denote the VC-dimension of SDRN, where $W$ is the
number of parameters and $L$ is the number of layers of SDRN. By the
construction of SDRN in Section \ref{sec: SDRN architecture}, one has for $n$
sufficiently large, {\normalsize $W\asymp |V_{m}^{\left( 1\right) }|+R$ and $%
L\asymp R\log _{2}d$. By Proposition \ref{PROP:cardinality} and }$%
2^{m}\asymp n^{1/5}$, one has $n^{1/5}\lesssim |V_{m}^{\left( 1\right) }|$.
\ Moreover, since $R\asymp \log _{2}n$, one has
\begin{equation}
{ L\asymp \log _{2}n\log _{2}d\ }\text{{\normalsize \ and }}%
n^{1/5}\lesssim { W.\ }  \label{EQ:LW}
\end{equation}%
By equation (2) in \cite{BHLM19}, one has $d^{\ast }(W,L)\geq
c^{\ast }\cdot WL\log (W/L)$, for some constant $c^{\ast }${\normalsize $>0$%
. Moreover, (\ref{EQ:LW}) implies that there exists a constant  }$c^{\ast
\ast} >0$ such that $\log (W/L)\geq c^{\ast \ast }%
{ \log _{2}n}$. Therefore, by the above results, one
has that there exists a constant $c_{0}>0$ such that for $n$ sufficiently
large,   
\begin{equation}
d^{\ast }(W,L)-1\geq c_{0}n^{1/5}({ \log _{2}n)^{2}\log _{2}d.}
\label{EQ:dstar}
\end{equation}%
Furthermore, since $\mathcal{E}(f^{\ast })-\mathcal{E}(f_{0})\geq 0$, one
has
\begin{equation}
\mathcal{E}(\widehat{f})-\mathcal{E}(f_{0})=\mathcal{E}(\widehat{f})-%
\mathcal{E}(f^{\ast })+\mathcal{E}(f^{\ast })-\mathcal{E}(f_{0})\geq
\mathcal{E}(\widehat{f})-\mathcal{E}(f^{\ast }).  \label{EQ:fhat-f0}
\end{equation}%
By Theorem 3.6 of {\normalsize \cite{fM2018}, one has
\begin{equation*}
P\left( \mathcal{E}(\widehat{f})-\mathcal{E}(f^{\ast })>\frac{d^{\ast
}(W,L)-1}{32n}\right) \geq 1/100.
\end{equation*}%
The above result together with (}\ref{EQ:dstar}{\normalsize ) and (\ref%
{EQ:fhat-f0}) implies that
\begin{equation*}
P\left( \mathcal{E}(\widehat{f})-\mathcal{E}(f_0)>\frac{c_{0}n^{1/5}(%
{\log _{2}n)^{2}\log _{2}d}}{32n}\right) \geq 1/100.
\end{equation*}%
Let }$c_{1}=c_{0}/32$. Then, one has%
\begin{equation*}
P\left( \mathcal{E}(\widehat{f})-\mathcal{E}(f_0)>c_{1}n^{-4/5}(%
{ \log _{2}n)^{2}\log _{2}d}\right) \geq 1/100.
\end{equation*}%

\subsection{Proofs of Lemmas \protect\ref{LEM:f-f0}-\protect\ref{LEM:Huber}}

\label{proof_LEM:f-f0}

A lemma is presented below and it is used to prove the lemmas given in
Section \ref{discussion_ass}.

\begin{lemma}
\label{LEM:convexity}For any $f\in \textcolor{black}{ \mathcal{F(%
}R ,m,B_{0},B_{1})}$, one
has
\begin{equation*}
\lim_{\delta \rightarrow 0^{+}}\frac{\mathcal{E}(f^{\ast}+\delta
(f-f^{\ast}))-\mathcal{E}(f^{\ast})}{\delta }\geq 0.
\end{equation*}
\end{lemma}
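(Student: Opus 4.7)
The plan is to exploit the fact that $f_{RL}^{0}$ is a minimizer of $\mathcal{E}$ over the set $\mathcal{F}(\widetilde{\phi},m,B)$ and that this set is convex, so that the entire segment from $f_{RL}^{0}$ to any $f \in \mathcal{F}(\widetilde{\phi},m,B)$ stays inside the feasible set. First I would verify convexity of $\mathcal{F}(\widetilde{\phi},m,B)$: every element has the form $f_{RL}(\boldsymbol{x}) = \widetilde{\phi}(\boldsymbol{x})^{\top}\boldsymbol{\gamma}$, which depends linearly on $\boldsymbol{\gamma}$, while the sole side constraint $\|f_{RL}\|_{\infty} \le B$ is the supremum of absolute values of linear functions of $\boldsymbol{\gamma}$ and is therefore a convex constraint. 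Hence for any $\delta \in [0,1]$ and any $f \in \mathcal{F}(\widetilde{\phi},m,B)$, the convex combination
\[
f_{RL}^{0} + \delta(f - f_{RL}^{0}) = (1-\delta)f_{RL}^{0} + \delta f
\]
again lies in $\mathcal{F}(\widetilde{\phi},m,B)$.

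Next, invoking the definition of $f_{RL}^{0}$ in (\ref{EQ:f0RN}) as the minimizer of $\mathcal{E}$ over this set, it follows that
\[
\mathcal{E}\bigl(f_{RL}^{0} + \delta(f - f_{RL}^{0})\bigr) \;\ge\; \mathcal{E}(f_{RL}^{0})
\]
for every $\delta \in (0,1]$. Dividing by $\delta > 0$ preserves the inequality, so the difference quotient is nonnegative for all sufficiently small $\delta > 0$.

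Finally, I would justify that the right-hand limit exists (or at least that the limit inferior suffices). Assumption \ref{ass2} stipulates that $\rho(\cdot,y)$ is convex, so the map $t \mapsto \rho(f_{RL}^{0}(\boldsymbol{x}) + t(f(\boldsymbol{x}) - f_{RL}^{0}(\boldsymbol{x})), y)$ is convex in $t$ for each fixed $(\boldsymbol{x},y)$; integrating against $\mu$ preserves convexity, so $\delta \mapsto \mathcal{E}(f_{RL}^{0} + \delta(f - f_{RL}^{0}))$ is convex on $[0,1]$. A convex function on an interval admits a right-derivative at every interior point (and at the left endpoint as an element of $[-\infty,+\infty)$), so the limit in the statement exists in $[-\infty,+\infty]$. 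Since each quotient in the sequence is nonnegative, the limit is nonnegative, yielding the claim. The only conceptual subtlety is confirming that the constraint $\|f_{RL}\|_\infty \le B$ does not destroy convexity of the feasible set, but as noted this is immediate; there is no genuine obstacle in the proof.
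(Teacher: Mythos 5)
Your argument is correct and follows essentially the same route as the paper: the convex combination $(1-\delta)f_{RL}^{0}+\delta f$ remains in $\mathcal{F}(\widetilde{\phi},m,B)$, so minimality of $f_{RL}^{0}$ from (\ref{EQ:f0RN}) makes each difference quotient nonnegative. Your additional observation that convexity of $\rho(\cdot,y)$ guarantees the right-hand limit actually exists is a small refinement the paper leaves implicit, but it does not change the substance of the proof.
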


\begin{proof}
Let $\delta \in (0,1)$. Based on the definition of $\textcolor{black}{ \mathcal{F(%
}R ,m,B_{0},B_{1})}$ given in (\ref{DEF:F}), we can see that $f_{0}+\delta
(f-f_{0})\in \textcolor{black}{ \mathcal{F(%
}R ,m,B_{0},B_{1})}$. Moreover $\mathcal{E}%
(f^{\ast}+\delta (f-f^{\ast}))-\mathcal{E}(f^{\ast})\geq 0$ by the
definition of $f^{\ast}$ given in (\ref{EQ:f0RN}).
\end{proof}

\begin{proof}[Proof of Lemma \protect\ref{LEM:f-f0}]
Denote $t_{0}=f^{\ast}(\boldsymbol{x})$ and $t=$ $f(\boldsymbol{x})$. By
Taylor's expansion and Assumption \ref{ass6}, we have
\begin{equation*}
\rho \left( t,y\right) -\rho \left( t_{0},y\right) =\rho ^{\prime }\left(
t_{0},y\right) (t-t_{0})+\int_{0}^{1}2^{-1}\rho ^{\prime \prime }\left(
t_{0}+(t-t_{0})\omega ,y\right) (t-t_{0})^{2}d\omega .
\end{equation*}%
Moreover, by the dominated convergence theorem and Lemma \ref{LEM:convexity}%
,
\begin{align*}
& \int\nolimits_{\mathcal{X}\times \mathcal{Y}}\rho ^{\prime }\left(
f^{\ast}(\boldsymbol{x}),y\right) (f(\boldsymbol{x})-f^{\ast}(%
\boldsymbol{x}))d\mu (\boldsymbol{x,}y) \\
& =\int\nolimits_{\mathcal{X}\times \mathcal{Y}}\lim_{\delta \rightarrow
0^{+}}\frac{\rho \left( f^{\ast}+\delta (f-f^{\ast}),y\right) -\rho
\left( f^{\ast},y\right) }{\delta }d\mu (\boldsymbol{x,}y) \\
& =\lim_{\delta \rightarrow 0^{+}}\int\nolimits_{\mathcal{X}\times \mathcal{Y%
}}\frac{\rho \left( f^{\ast}+\delta (f-f^{\ast}),y\right) -\rho \left(
f^{\ast},y\right) }{\delta }d\mu (\boldsymbol{x,}y) \\
& =\lim_{\delta \rightarrow 0^{+}}\frac{\mathcal{E}(f^{\ast}+\delta
(f-f^{\ast}))-\mathcal{E}(f^{\ast})}{\delta }\geq 0.
\end{align*}%
Therefore,
\begin{eqnarray*}
\mathcal{E}(f)-\mathcal{E}(f^{\ast}) &=&\int\nolimits_{\mathcal{X}\times
\mathcal{Y}}\{\rho \left( f(\boldsymbol{x}),y\right) -\rho \left( f^{\ast}(%
\boldsymbol{x}),y\right) \}d\mu (\boldsymbol{x,}y) \\
&\geq &\int\nolimits_{\mathcal{X}\times \mathcal{Y}}a_{\rho }(f(\boldsymbol{x%
})-f^{\ast}(\boldsymbol{x}))^{2}d\mu (\boldsymbol{x,}y)=a_{\rho
}||f-f^{\ast}||_{2}^{2}.
\end{eqnarray*}%
Since $\int_{\mathcal{Y}}\rho ^{\prime }(f_{0}(\boldsymbol{x}),y)d\mu (y|%
\boldsymbol{x})=0$, then \textcolor{black}{$\int\nolimits_{\mathcal{X}\times \mathcal{Y}}\rho
^{\prime }\left( f_{0}(\boldsymbol{x}),y\right) (f(\boldsymbol{x})-f_{0}(%
\boldsymbol{x})) d\mu (\boldsymbol{x,}y) \linebreak =0$}. Thus,
\begin{eqnarray*}
\mathcal{E}(f)-\mathcal{E}(f_{0}) &=&\int\nolimits_{\mathcal{X}\times
\mathcal{Y}}(\rho \left( f(\boldsymbol{x}),y\right) -\rho \left( f_{0}(%
\boldsymbol{x}),y\right) )d\mu (\boldsymbol{x,}y) \\
&\leq &\int\nolimits_{\mathcal{X}\times \mathcal{Y}}b_{\rho }(f(\boldsymbol{x%
})-f_{0}(\boldsymbol{x}))^{2}d\mu (\boldsymbol{x,}y)=b_{\rho
}||f-f_{0}||_{2}^{2}.
\end{eqnarray*}
\end{proof}

\begin{proof}[Proof of Lemma \protect\ref{LEM:quantile}]
In the following, we will show the results in Lemma \ref{LEM:quantile} when
the loss function $\rho \left( f(\boldsymbol{x}),y\right) $ is the quantile
loss given in (\ref{def:quantile}). We follow a proof procedure from \cite%
{ACL19}. We have
\begin{eqnarray*}
\mathcal{E}(f)-\mathcal{E}(f^{\ast}) &=&\int\nolimits_{\mathcal{X}\times
\mathcal{Y}}(\rho \left( f(\boldsymbol{x}),y\right) -\rho \left( f^{\ast}(%
\boldsymbol{x}),y\right) )d\mu (\boldsymbol{x,}y) \\
&=&\int\nolimits_{\mathcal{X}}\int\nolimits_{\mathcal{Y}}(\rho \left( f(%
\boldsymbol{x}),y\right) -\rho \left( f^{\ast}(\boldsymbol{x}),y\right)
)d\mu (y|\boldsymbol{x})d\mu _{X}(\boldsymbol{x})
\end{eqnarray*}%
Then for all $\boldsymbol{x}\in \mathcal{X}$,
\begin{eqnarray*}
&&\int\nolimits_{\mathcal{Y}}\rho \left( f(\boldsymbol{x}),y\right) d\mu (y|%
\boldsymbol{x}) \\
&=&\int\nolimits_{\mathcal{Y}}I\{y>f(\boldsymbol{x})\}(y-f(\boldsymbol{x}%
))d\mu (y|\boldsymbol{x})+(\tau -1)\int\nolimits_{\mathcal{Y}}(y-f(%
\boldsymbol{x}))d\mu (y|\boldsymbol{x}) \\
&=&g(\boldsymbol{x},f(\boldsymbol{x}))+(\tau -1)\int\nolimits_{\mathcal{Y}%
}yd\mu (y|\boldsymbol{x}),
\end{eqnarray*}%
where $g(\boldsymbol{x},u)=\int\nolimits_{\mathcal{Y}}I\{y>u\}(1-\mu (y|%
\boldsymbol{x}))dy+(1-\tau )u$, and $\mathcal{E}(f)-\mathcal{E}%
(f^{\ast})=\int\nolimits_{\mathcal{X}}g(\boldsymbol{x},f(\boldsymbol{x}%
))d\mu _{X}(\boldsymbol{x})-\int\nolimits_{\mathcal{X}}g(\boldsymbol{x}%
,f^{\ast}(\boldsymbol{x}))d\mu _{X}(\boldsymbol{x})$. Denote $%
t_{0}=f^{\ast}(\boldsymbol{x})$ and $t=$ $f(\boldsymbol{x})$. By Taylor's
expansion, we have
\begin{equation*}
g(\boldsymbol{x},t)-g(\boldsymbol{x},t_{0})=g^{\prime }(\boldsymbol{x}%
,t_{0})(t-t_{0})+\int_{0}^{1}2^{-1}g^{\prime \prime }\left( \boldsymbol{x}%
,t_{0}+(t-t_{0})\omega \right) (t-t_{0})^{2}d\omega .
\end{equation*}

Since $(g\left( \boldsymbol{x},f^{\ast}+\delta (f-f^{\ast})\right)
-g(\boldsymbol{x},f^{\ast}(\boldsymbol{x})))/{\delta }\leq (2-\tau )|f(%
\boldsymbol{x})-f^{\ast}(\boldsymbol{x})|$, by the dominated convergence
theorem and Lemma \ref{LEM:convexity},
\begin{align*}
& \int\nolimits_{\mathcal{X}}g^{\prime }(\boldsymbol{x},f^{\ast}(%
\boldsymbol{x}))(f(\boldsymbol{x})-f^{\ast}(\boldsymbol{x}))d\mu _{X}(%
\boldsymbol{x}) \\
& =\int\nolimits_{\mathcal{X}}\lim_{\delta \rightarrow 0^{+}}\frac{g\left(
\boldsymbol{x},f^{\ast}+\delta (f-f^{\ast})\right) -g(\boldsymbol{x}%
,f^{\ast}(\boldsymbol{x}))}{\delta }d\mu _{X}(\boldsymbol{x}) \\
& =\lim_{\delta \rightarrow 0^{+}}\int\nolimits_{\mathcal{X}\times \mathcal{Y%
}}\frac{g\left( \boldsymbol{x},f^{\ast}+\delta (f-f^{\ast})\right) -g(%
\boldsymbol{x},f^{\ast}(\boldsymbol{x}))}{\delta }d\mu _{X}(\boldsymbol{x})
\\
& =\lim_{\delta \rightarrow 0^{+}}\frac{\mathcal{E}(f^{\ast}+\delta
(f-f^{\ast}))-\mathcal{E}(f^{\ast})}{\delta }\geq 0.
\end{align*}%
The above results together with $\partial ^{2}g(\boldsymbol{x},u)/\partial
u^{2}=\mu ^{\prime }(u|\boldsymbol{x})$ imply that
\begin{eqnarray*}
&&\mathcal{E}(f)-\mathcal{E}(f^{\ast }) \\
&=&\int\nolimits_{\mathcal{X}}\{g(\boldsymbol{x},f(\boldsymbol{x}))-g(%
\boldsymbol{x},f^{\ast }(\boldsymbol{x}))\}d\mu _{X}(\boldsymbol{x}) \\
&\geq &2^{-1}\int\nolimits_{\mathcal{X}}(f(\boldsymbol{x})-f^{\ast }(%
\boldsymbol{x}))^{2}\int_{0}^{1}g^{\prime \prime }\left( \boldsymbol{x}%
,f^{\ast }(\boldsymbol{x})+(f(\boldsymbol{x})-f^{\ast }(\boldsymbol{x}%
))\omega \right) d\omega d\mu _{X}(\boldsymbol{x})\\
&=&2^{-1}\int\nolimits_{\mathcal{X}}(f(\boldsymbol{x})-f^{\ast }(\boldsymbol{%
x}))^{2}\int_{0}^{1}\mu ^{\prime }(f^{\ast }(\boldsymbol{x})+(f(\boldsymbol{x%
})-f^{\ast }(\boldsymbol{x}))\omega |\boldsymbol{x})d\omega d\mu _{X}(%
\boldsymbol{x}) \\
&\geq &\frac{1}{2C_{1}}\int\nolimits_{\mathcal{X}}(f(\boldsymbol{x})-f^{\ast
}(\boldsymbol{x}))^{2}d\mu _{X}(\boldsymbol{x})=\frac{1}{2C_{1}}||f-f^{\ast
}||_{2}^{2}.
\end{eqnarray*}

Note that $\partial g(\boldsymbol{x},u)/\partial u\left\vert
_{u=f_{0}}\right. =0$ and $\partial ^{2}g(\boldsymbol{x},u)/\partial
u^{2}=\mu ^{\prime }(u|\boldsymbol{x})$. Thus by Taylor's expansion,%
\begin{eqnarray*}
&&\mathcal{E}(f)-\mathcal{E}(f_{0}) \\
&=&\int\nolimits_{\mathcal{X}}\{g(\boldsymbol{x},f(\boldsymbol{x}))-g(%
\boldsymbol{x},f_{0}(\boldsymbol{x}))\}d\mu _{X}(\boldsymbol{x}) \\
&=&2^{-1}\int\nolimits_{\mathcal{X}}(f(\boldsymbol{x})-f_{0}(\boldsymbol{x}%
))^{2}\int_{0}^{1}g^{\prime \prime }\left( \boldsymbol{x},f_{0}(\boldsymbol{x%
})+(f(\boldsymbol{x})-f_{0}(\boldsymbol{x}))\omega \right) d\omega d\mu _{X}(%
\boldsymbol{x}) \\
&=&2^{-1}\int\nolimits_{\mathcal{X}}(f(\boldsymbol{x})-f_{0}(\boldsymbol{x}%
))^{2}\int_{0}^{1}\mu ^{\prime }(f_{0}(\boldsymbol{x})+(f(\boldsymbol{x}%
)-f_{0}(\boldsymbol{x}))\omega |\boldsymbol{x})d\omega d\mu _{X}(\boldsymbol{%
x}) \\
&\leq &\frac{1}{2C_{2}}\int\nolimits_{\mathcal{X}}(f(\boldsymbol{x})-f_{0}(%
\boldsymbol{x}))^{2}d\mu _{X}(\boldsymbol{x})=\frac{1}{2C_{2}}%
||f-f_{0}||_{2}^{2}.
\end{eqnarray*}

\end{proof}

\begin{proof}[Proof of Lemma \protect\ref{LEM:Huber}]
The proof of Lemma \ref{LEM:Huber} follows the same procedure as the proof
of Lemma \ref{LEM:quantile}, and thus it is omitted.
\end{proof}

\bibliography{NNbib}

\begin{thebibliography}{}

\bibitem [\protect \citeauthoryear {%
Alquier%
, Cottet%
\BCBL {}\ \BBA {} Lecu\'{e}%
}{%
Alquier%
\ \protect \BOthers {.}}{%
{\protect \APACyear {2019}}%
}]{%
ACL19}
\APACinsertmetastar {%
ACL19}%
\begin{APACrefauthors}%
Alquier, P.%
, Cottet, V.%
\BCBL {}\ \BBA {} Lecu\'{e}, G.%
\end{APACrefauthors}%
\unskip\
\newblock
\APACrefYearMonthDay{2019}{}{}.
\newblock
{\BBOQ}\APACrefatitle {Estimation bounds and sharp oracle inequalities of regularized procedures with Lipschitz loss functions} {Estimation bounds and sharp oracle inequalities of regularized procedures with lipschitz loss functions}.{\BBCQ}
\newblock
\APACjournalVolNumPages{The Annals of Statistics}{47}{}{2117--2144}.
\PrintBackRefs{\CurrentBib}

\bibitem [\protect \citeauthoryear {%
Anthony%
\ \BBA {} Bartlett%
}{%
Anthony%
\ \BBA {} Bartlett%
}{%
{\protect \APACyear {2009}}%
}]{%
AB09}
\APACinsertmetastar {%
AB09}%
\begin{APACrefauthors}%
Anthony, M.%
\BCBT {}\ \BBA {} Bartlett, P\BPBI L.%
\end{APACrefauthors}%
\unskip\
\newblock
\APACrefYear{2009}.
\newblock
\APACrefbtitle {Neural network learning: Theoretical foundations} {Neural network learning: Theoretical foundations}.
\newblock
\APACaddressPublisher{}{Cambridge University Press}.
\PrintBackRefs{\CurrentBib}

\bibitem [\protect \citeauthoryear {%
Bartlett%
, Harvey%
, Liaw%
\BCBL {}\ \BBA {} Mehrabian%
}{%
Bartlett%
\ \protect \BOthers {.}}{%
{\protect \APACyear {2019}}%
}]{%
BHLM19}
\APACinsertmetastar {%
BHLM19}%
\begin{APACrefauthors}%
Bartlett, P\BPBI L.%
, Harvey, N.%
, Liaw, C.%
\BCBL {}\ \BBA {} Mehrabian, A.%
\end{APACrefauthors}%
\unskip\
\newblock
\APACrefYearMonthDay{2019}{}{}.
\newblock
{\BBOQ}\APACrefatitle {Nearly-tight VC-dimension and pseudodimension bounds for piecewise linear neural networks} {Nearly-tight vc-dimension and pseudodimension bounds for piecewise linear neural networks}.{\BBCQ}
\newblock
\APACjournalVolNumPages{Journal of Machine Learning Research}{63}{}{1--17}.
\PrintBackRefs{\CurrentBib}

\bibitem [\protect \citeauthoryear {%
Bauer%
\ \BBA {} Kohler%
}{%
Bauer%
\ \BBA {} Kohler%
}{%
{\protect \APACyear {2019}}%
}]{%
BK2019}
\APACinsertmetastar {%
BK2019}%
\begin{APACrefauthors}%
Bauer, B.%
\BCBT {}\ \BBA {} Kohler, M.%
\end{APACrefauthors}%
\unskip\
\newblock
\APACrefYearMonthDay{2019}{}{}.
\newblock
{\BBOQ}\APACrefatitle {On deep learning as a remedy for the curse of dimensionality in nonparametric regression} {On deep learning as a remedy for the curse of dimensionality in nonparametric regression}.{\BBCQ}
\newblock
\APACjournalVolNumPages{The Annals of Statistics}{47}{}{2261--2285}.
\PrintBackRefs{\CurrentBib}

\bibitem [\protect \citeauthoryear {%
Bellman%
}{%
Bellman%
}{%
{\protect \APACyear {1961}}%
}]{%
B61}
\APACinsertmetastar {%
B61}%
\begin{APACrefauthors}%
Bellman, R.%
\end{APACrefauthors}%
\unskip\
\newblock
\APACrefYear{1961}.
\newblock
\APACrefbtitle {Curse of dimensionality. Adaptive control processes: a guided tour.} {Curse of dimensionality. adaptive control processes: a guided tour.}
\newblock
\APACaddressPublisher{}{Princeton, NJ}.
\PrintBackRefs{\CurrentBib}

\bibitem [\protect \citeauthoryear {%
Bengio%
\ \BBA {} LeCun%
}{%
Bengio%
\ \BBA {} LeCun%
}{%
{\protect \APACyear {2007}}%
}]{%
BL07}
\APACinsertmetastar {%
BL07}%
\begin{APACrefauthors}%
Bengio, Y.%
\BCBT {}\ \BBA {} LeCun, Y.%
\end{APACrefauthors}%
\unskip\
\newblock
\APACrefYear{2007}.
\newblock
\APACrefbtitle {Scaling learning algorithms towards AI. In Bottou, L., Chapelle, O., and DeCoste,D. and Weston, J., eds., \emph{Large-Scale Kernel Machines.}} {Scaling learning algorithms towards ai. in bottou, l., chapelle, o., and decoste,d. and weston, j., eds., \emph{Large-Scale Kernel Machines.}}
\newblock
\APACaddressPublisher{}{MIT Press}.
\PrintBackRefs{\CurrentBib}

\bibitem [\protect \citeauthoryear {%
Bungartz%
\ \BBA {} Griebel%
}{%
Bungartz%
\ \BBA {} Griebel%
}{%
{\protect \APACyear {2004}}%
}]{%
BG04}
\APACinsertmetastar {%
BG04}%
\begin{APACrefauthors}%
Bungartz, H\BPBI J.%
\BCBT {}\ \BBA {} Griebel, M.%
\end{APACrefauthors}%
\unskip\
\newblock
\APACrefYearMonthDay{2004}{}{}.
\newblock
{\BBOQ}\APACrefatitle {Sparse grids} {Sparse grids}.{\BBCQ}
\newblock
\APACjournalVolNumPages{Acta Numerica}{}{}{}.
\PrintBackRefs{\CurrentBib}

\bibitem [\protect \citeauthoryear {%
M.~Chen%
, Jiang%
, Liao%
\BCBL {}\ \BBA {} Zhao%
}{%
M.~Chen%
\ \protect \BOthers {.}}{%
{\protect \APACyear {2022}}%
}]{%
CJLZ19}
\APACinsertmetastar {%
CJLZ19}%
\begin{APACrefauthors}%
Chen, M.%
, Jiang, H.%
, Liao, W.%
\BCBL {}\ \BBA {} Zhao, T.%
\end{APACrefauthors}%
\unskip\
\newblock
\APACrefYearMonthDay{2022}{}{}.
\newblock
{\BBOQ}\APACrefatitle {Nonparametric regression on low-dimensional manifolds using deep ReLU networks} {Nonparametric regression on low-dimensional manifolds using deep relu networks}.{\BBCQ}
\newblock
\APACjournalVolNumPages{Information and Inference: A Journal of the IMA}{11}{}{1203--1253}.
\PrintBackRefs{\CurrentBib}

\bibitem [\protect \citeauthoryear {%
Y.~Chen%
, Segovia-Dominguez%
\BCBL {}\ \BBA {} Gel%
}{%
Y.~Chen%
\ \protect \BOthers {.}}{%
{\protect \APACyear {2021}}%
}]{%
CSG221}
\APACinsertmetastar {%
CSG221}%
\begin{APACrefauthors}%
Chen, Y.%
, Segovia-Dominguez, I.%
\BCBL {}\ \BBA {} Gel, Y\BPBI R.%
\end{APACrefauthors}%
\unskip\
\newblock
\APACrefYearMonthDay{2021}{}{}.
\newblock
{\BBOQ}\APACrefatitle {Z-GCNETs: Time Zigzags at Graph Convolutional Networks for Time Series Forecasting} {Z-gcnets: Time zigzags at graph convolutional networks for time series forecasting}.{\BBCQ}
\newblock
\APACjournalVolNumPages{International Conference on Machine Learning (ICML)}{}{}{1684-1694}.
\PrintBackRefs{\CurrentBib}

\bibitem [\protect \citeauthoryear {%
Cheng%
, Fan%
\BCBL {}\ \BBA {} Marron%
}{%
Cheng%
\ \protect \BOthers {.}}{%
{\protect \APACyear {1994}}%
}]{%
CFM94}
\APACinsertmetastar {%
CFM94}%
\begin{APACrefauthors}%
Cheng, M\BPBI Y.%
, Fan, J.%
\BCBL {}\ \BBA {} Marron, J\BPBI S.%
\end{APACrefauthors}%
\unskip\
\newblock
\APACrefYearMonthDay{1994}{}{}.
\newblock
{\BBOQ}\APACrefatitle {Minimax efficiency of local polynomial fit estimators at boundaries} {Minimax efficiency of local polynomial fit estimators at boundaries}.{\BBCQ}
\newblock
\APACjournalVolNumPages{Mimeo Series 2098, University of North Carolina-Chapel Hill}{}{}{}.
\PrintBackRefs{\CurrentBib}

\bibitem [\protect \citeauthoryear {%
Cheng%
\ \BBA {} Wu%
}{%
Cheng%
\ \BBA {} Wu%
}{%
{\protect \APACyear {2013}}%
}]{%
CW13}
\APACinsertmetastar {%
CW13}%
\begin{APACrefauthors}%
Cheng, M\BPBI Y.%
\BCBT {}\ \BBA {} Wu, H\BPBI T.%
\end{APACrefauthors}%
\unskip\
\newblock
\APACrefYearMonthDay{2013}{}{}.
\newblock
{\BBOQ}\APACrefatitle {Local linear regression on manifolds and its geometric interpretation} {Local linear regression on manifolds and its geometric interpretation}.{\BBCQ}
\newblock
\APACjournalVolNumPages{Journal of the American Statistical Association}{108}{}{1421--1434}.
\PrintBackRefs{\CurrentBib}

\bibitem [\protect \citeauthoryear {%
Cucker%
\ \BBA {} Zhou%
}{%
Cucker%
\ \BBA {} Zhou%
}{%
{\protect \APACyear {2007}}%
}]{%
CZ07}
\APACinsertmetastar {%
CZ07}%
\begin{APACrefauthors}%
Cucker, F.%
\BCBT {}\ \BBA {} Zhou, D.%
\end{APACrefauthors}%
\unskip\
\newblock
\APACrefYear{2007}.
\newblock
\APACrefbtitle {Learning theory} {Learning theory}.
\newblock
\APACaddressPublisher{}{Cambridge Monographs on Applied and Computational Mathematics}.
\PrintBackRefs{\CurrentBib}

\bibitem [\protect \citeauthoryear {%
Dua%
\ \BBA {} Graff%
}{%
Dua%
\ \BBA {} Graff%
}{%
{\protect \APACyear {2019}}%
}]{%
uci}
\APACinsertmetastar {%
uci}%
\begin{APACrefauthors}%
Dua, D.%
\BCBT {}\ \BBA {} Graff, C.%
\end{APACrefauthors}%
\unskip\
\newblock
\APACrefYearMonthDay{2019}{}{}.
\newblock
{\BBOQ}\APACrefatitle {UCI Machine Learning Repository [http://archive.ics.uci.edu/ml]} {Uci machine learning repository [http://archive.ics.uci.edu/ml]}.{\BBCQ}
\newblock
\APACjournalVolNumPages{Irvine, CA: University of California, School of Information and Computer Science}{}{}{}.
\PrintBackRefs{\CurrentBib}

\bibitem [\protect \citeauthoryear {%
Eldan%
\ \BBA {} Shamir%
}{%
Eldan%
\ \BBA {} Shamir%
}{%
{\protect \APACyear {2016}}%
}]{%
ES16}
\APACinsertmetastar {%
ES16}%
\begin{APACrefauthors}%
Eldan, R.%
\BCBT {}\ \BBA {} Shamir, O.%
\end{APACrefauthors}%
\unskip\
\newblock
\APACrefYearMonthDay{2016}{}{}.
\newblock
{\BBOQ}\APACrefatitle {The power of depth for feedfoward neural networks} {The power of depth for feedfoward neural networks}.{\BBCQ}
\newblock
\APACjournalVolNumPages{JMLR: Workshop and Conference Proceedings}{}{}{1-34}.
\PrintBackRefs{\CurrentBib}

\bibitem [\protect \citeauthoryear {%
Fan%
\ \BBA {} Gijbels%
}{%
Fan%
\ \BBA {} Gijbels%
}{%
{\protect \APACyear {1996}}%
}]{%
FG96}
\APACinsertmetastar {%
FG96}%
\begin{APACrefauthors}%
Fan, J.%
\BCBT {}\ \BBA {} Gijbels, I.%
\end{APACrefauthors}%
\unskip\
\newblock
\APACrefYear{1996}.
\newblock
\APACrefbtitle {Local polynomials modelling and its applications} {Local polynomials modelling and its applications}.
\newblock
\APACaddressPublisher{}{Chapman and Hall, London}.
\PrintBackRefs{\CurrentBib}

\bibitem [\protect \citeauthoryear {%
Fan%
\ \BBA {} Huang%
}{%
Fan%
\ \BBA {} Huang%
}{%
{\protect \APACyear {2005}}%
}]{%
fan2005profile}
\APACinsertmetastar {%
fan2005profile}%
\begin{APACrefauthors}%
Fan, J.%
\BCBT {}\ \BBA {} Huang, T.%
\end{APACrefauthors}%
\unskip\
\newblock
\APACrefYearMonthDay{2005}{}{}.
\newblock
{\BBOQ}\APACrefatitle {Profile likelihood inferences on semiparametric varying-coefficient partially linear models} {Profile likelihood inferences on semiparametric varying-coefficient partially linear models}.{\BBCQ}
\newblock
\APACjournalVolNumPages{Bernoulli}{11}{}{1031--1057}.
\PrintBackRefs{\CurrentBib}

\bibitem [\protect \citeauthoryear {%
Fan%
, Ma%
\BCBL {}\ \BBA {} Zhong%
}{%
Fan%
\ \protect \BOthers {.}}{%
{\protect \APACyear {2021}}%
}]{%
FMZ21}
\APACinsertmetastar {%
FMZ21}%
\begin{APACrefauthors}%
Fan, J.%
, Ma, C.%
\BCBL {}\ \BBA {} Zhong, Y.%
\end{APACrefauthors}%
\unskip\
\newblock
\APACrefYearMonthDay{2021}{}{}.
\newblock
{\BBOQ}\APACrefatitle {A selective overview of deep learning} {A selective overview of deep learning}.{\BBCQ}
\newblock
\APACjournalVolNumPages{Statistical Science}{36}{}{264-290}.
\PrintBackRefs{\CurrentBib}

\bibitem [\protect \citeauthoryear {%
Griebel%
}{%
Griebel%
}{%
{\protect \APACyear {2006}}%
}]{%
G06}
\APACinsertmetastar {%
G06}%
\begin{APACrefauthors}%
Griebel, M.%
\end{APACrefauthors}%
\unskip\
\newblock
\APACrefYearMonthDay{2006}{}{}.
\newblock
{\BBOQ}\APACrefatitle {Sparse grids and related approximation schemes for higher dimensional problems} {Sparse grids and related approximation schemes for higher dimensional problems}.{\BBCQ}
\newblock
\APACjournalVolNumPages{Foundations of Computational Mathematics, Cambridge University Press}{}{}{106-161}.
\PrintBackRefs{\CurrentBib}

\bibitem [\protect \citeauthoryear {%
Harrison~Jr%
\ \BBA {} Rubinfeld%
}{%
Harrison~Jr%
\ \BBA {} Rubinfeld%
}{%
{\protect \APACyear {1978}}%
}]{%
harrison1978hedonic}
\APACinsertmetastar {%
harrison1978hedonic}%
\begin{APACrefauthors}%
Harrison~Jr, D.%
\BCBT {}\ \BBA {} Rubinfeld, D\BPBI L.%
\end{APACrefauthors}%
\unskip\
\newblock
\APACrefYearMonthDay{1978}{}{}.
\newblock
{\BBOQ}\APACrefatitle {Hedonic housing prices and the demand for clean air} {Hedonic housing prices and the demand for clean air}.{\BBCQ}
\newblock
\APACjournalVolNumPages{Journal of Environmental Economics and Management}{5}{}{81--102}.
\PrintBackRefs{\CurrentBib}

\bibitem [\protect \citeauthoryear {%
Kingma%
\ \BBA {} Ba%
}{%
Kingma%
\ \BBA {} Ba%
}{%
{\protect \APACyear {2015}}%
}]{%
KingmaBa2015}
\APACinsertmetastar {%
KingmaBa2015}%
\begin{APACrefauthors}%
Kingma, D\BPBI P.%
\BCBT {}\ \BBA {} Ba, J\BPBI L.%
\end{APACrefauthors}%
\unskip\
\newblock
\APACrefYearMonthDay{2015}{}{}.
\newblock
{\BBOQ}\APACrefatitle {Adam: a method for stochastic optimization} {Adam: a method for stochastic optimization}.{\BBCQ}
\newblock
\APACjournalVolNumPages{International Conference on Learning Representations}{}{}{}.
\PrintBackRefs{\CurrentBib}

\bibitem [\protect \citeauthoryear {%
Kohler%
\ \BBA {} Mehnert%
}{%
Kohler%
\ \BBA {} Mehnert%
}{%
{\protect \APACyear {2011}}%
}]{%
KM11}
\APACinsertmetastar {%
KM11}%
\begin{APACrefauthors}%
Kohler, M.%
\BCBT {}\ \BBA {} Mehnert, J.%
\end{APACrefauthors}%
\unskip\
\newblock
\APACrefYearMonthDay{2011}{}{}.
\newblock
{\BBOQ}\APACrefatitle {Analysis of the rate of convergence of least squares neural network regression estimates in case of measurement errors} {Analysis of the rate of convergence of least squares neural network regression estimates in case of measurement errors}.{\BBCQ}
\newblock
\APACjournalVolNumPages{Neural Networks}{24}{}{273--279}.
\PrintBackRefs{\CurrentBib}

\bibitem [\protect \citeauthoryear {%
Kotsiantis%
}{%
Kotsiantis%
}{%
{\protect \APACyear {2007}}%
}]{%
K07}
\APACinsertmetastar {%
K07}%
\begin{APACrefauthors}%
Kotsiantis, S.%
\end{APACrefauthors}%
\unskip\
\newblock
\APACrefYearMonthDay{2007}{}{}.
\newblock
{\BBOQ}\APACrefatitle {Supervised Machine Learning: A Review of Classification Techniques} {Supervised machine learning: A review of classification techniques}.{\BBCQ}
\newblock
\APACjournalVolNumPages{Informatica}{31}{}{249-268}.
\PrintBackRefs{\CurrentBib}

\bibitem [\protect \citeauthoryear {%
K\r{u}rkov\'{a}%
\ \BBA {} Sanguineti%
}{%
K\r{u}rkov\'{a}%
\ \BBA {} Sanguineti%
}{%
{\protect \APACyear {2019}}%
}]{%
KS19}
\APACinsertmetastar {%
KS19}%
\begin{APACrefauthors}%
K\r{u}rkov\'{a}, V.%
\BCBT {}\ \BBA {} Sanguineti, M.%
\end{APACrefauthors}%
\unskip\
\newblock
\APACrefYearMonthDay{2019}{}{}.
\newblock
{\BBOQ}\APACrefatitle {Classification by Sparse Neural Networks} {Classification by sparse neural networks}.{\BBCQ}
\newblock
\APACjournalVolNumPages{IEEE Transactions on Neural Networks and Learning Systems}{30}{}{2746-2754}.
\PrintBackRefs{\CurrentBib}

\bibitem [\protect \citeauthoryear {%
LeCun%
, Bengio%
\BCBL {}\ \BBA {} Hinton%
}{%
LeCun%
\ \protect \BOthers {.}}{%
{\protect \APACyear {2015}}%
}]{%
LBH15}
\APACinsertmetastar {%
LBH15}%
\begin{APACrefauthors}%
LeCun, Y.%
, Bengio, Y.%
\BCBL {}\ \BBA {} Hinton, G.%
\end{APACrefauthors}%
\unskip\
\newblock
\APACrefYearMonthDay{2015}{}{}.
\newblock
{\BBOQ}\APACrefatitle {Deep learning} {Deep learning}.{\BBCQ}
\newblock
\APACjournalVolNumPages{Nature}{521}{}{436--444}.
\PrintBackRefs{\CurrentBib}

\bibitem [\protect \citeauthoryear {%
Li%
, Tang%
\BCBL {}\ \BBA {} Yu%
}{%
Li%
\ \protect \BOthers {.}}{%
{\protect \APACyear {2020}}%
}]{%
LTY2020}
\APACinsertmetastar {%
LTY2020}%
\begin{APACrefauthors}%
Li, B.%
, Tang, S.%
\BCBL {}\ \BBA {} Yu, H.%
\end{APACrefauthors}%
\unskip\
\newblock
\APACrefYearMonthDay{2020}{}{}.
\newblock
{\BBOQ}\APACrefatitle {Better Approximations of High Dimensional Smooth Functions by Deep Neural Networks with Rectified Power Units} {Better approximations of high dimensional smooth functions by deep neural networks with rectified power units}.{\BBCQ}
\newblock
\APACjournalVolNumPages{Communications in Computational Physics}{27}{}{379--411}.
\PrintBackRefs{\CurrentBib}

\bibitem [\protect \citeauthoryear {%
Liang%
\ \BBA {} Srikant%
}{%
Liang%
\ \BBA {} Srikant%
}{%
{\protect \APACyear {2017}}%
}]{%
LS16}
\APACinsertmetastar {%
LS16}%
\begin{APACrefauthors}%
Liang, S.%
\BCBT {}\ \BBA {} Srikant, R.%
\end{APACrefauthors}%
\unskip\
\newblock
\APACrefYearMonthDay{2017}{}{}.
\newblock
{\BBOQ}\APACrefatitle {Why Deep Neural Networks for Function Approximation?} {Why deep neural networks for function approximation?}{\BBCQ}
\newblock
\APACjournalVolNumPages{International Conference on Learning Representations}{}{}{}.
\PrintBackRefs{\CurrentBib}

\bibitem [\protect \citeauthoryear {%
Lin%
, Wang%
, Wang%
\BCBL {}\ \BBA {} Zhou%
}{%
Lin%
\ \protect \BOthers {.}}{%
{\protect \APACyear {2022}}%
}]{%
LWWZ22}
\APACinsertmetastar {%
LWWZ22}%
\begin{APACrefauthors}%
Lin, S\BHBI B.%
, Wang, K.%
, Wang, Y.%
\BCBL {}\ \BBA {} Zhou, D\BHBI X.%
\end{APACrefauthors}%
\unskip\
\newblock
\APACrefYearMonthDay{2022}{}{}.
\newblock
{\BBOQ}\APACrefatitle {Universal Consistency of Deep Convolutional Neural Networks} {Universal consistency of deep convolutional neural networks}.{\BBCQ}
\newblock
\APACjournalVolNumPages{IEEE Transactions on Information Theory}{68}{}{4610-4617}.
\PrintBackRefs{\CurrentBib}

\bibitem [\protect \citeauthoryear {%
Ma%
, Racine%
\BCBL {}\ \BBA {} Yang%
}{%
Ma%
\ \protect \BOthers {.}}{%
{\protect \APACyear {2015}}%
}]{%
MRY2015}
\APACinsertmetastar {%
MRY2015}%
\begin{APACrefauthors}%
Ma, S.%
, Racine, J.%
\BCBL {}\ \BBA {} Yang, L.%
\end{APACrefauthors}%
\unskip\
\newblock
\APACrefYearMonthDay{2015}{}{}.
\newblock
{\BBOQ}\APACrefatitle {Spline regression in the presence of categorical predictors} {Spline regression in the presence of categorical predictors}.{\BBCQ}
\newblock
\APACjournalVolNumPages{Journal of Applied Econometrics}{30}{}{705--717}.
\PrintBackRefs{\CurrentBib}

\bibitem [\protect \citeauthoryear {%
Mao%
\ \BBA {} Zhou%
}{%
Mao%
\ \BBA {} Zhou%
}{%
{\protect \APACyear {2022}}%
{\protect \APACexlab {{\protect \BCnt {1}}}}}]{%
MZ22}
\APACinsertmetastar {%
MZ22}%
\begin{APACrefauthors}%
Mao, T.%
\BCBT {}\ \BBA {} Zhou, D\BHBI X.%
\end{APACrefauthors}%
\unskip\
\newblock
\APACrefYearMonthDay{2022{\protect \BCnt {1}}}{}{}.
\newblock
{\BBOQ}\APACrefatitle {Approximation of functions from Korobov spaces by deep convolutional neural networks} {Approximation of functions from korobov spaces by deep convolutional neural networks}.{\BBCQ}
\newblock
\APACjournalVolNumPages{Advances in Computational Mathematics}{1}{}{48--84}.
\PrintBackRefs{\CurrentBib}

\bibitem [\protect \citeauthoryear {%
Mao%
\ \BBA {} Zhou%
}{%
Mao%
\ \BBA {} Zhou%
}{%
{\protect \APACyear {2022}}%
{\protect \APACexlab {{\protect \BCnt {2}}}}}]{%
mao2022approximation}
\APACinsertmetastar {%
mao2022approximation}%
\begin{APACrefauthors}%
Mao, T.%
\BCBT {}\ \BBA {} Zhou, D\BHBI X.%
\end{APACrefauthors}%
\unskip\
\newblock
\APACrefYearMonthDay{2022{\protect \BCnt {2}}}{}{}.
\newblock
{\BBOQ}\APACrefatitle {Approximation of functions from Korobov spaces by deep convolutional neural networks} {Approximation of functions from korobov spaces by deep convolutional neural networks}.{\BBCQ}
\newblock
\APACjournalVolNumPages{Advances in Computational Mathematics}{48}{6}{84}.
\PrintBackRefs{\CurrentBib}

\bibitem [\protect \citeauthoryear {%
McDermott%
\ \BBA {} Forsyth%
}{%
McDermott%
\ \BBA {} Forsyth%
}{%
{\protect \APACyear {2016}}%
}]{%
mcdermott2016diagnosing}
\APACinsertmetastar {%
mcdermott2016diagnosing}%
\begin{APACrefauthors}%
McDermott, J.%
\BCBT {}\ \BBA {} Forsyth, R\BPBI S.%
\end{APACrefauthors}%
\unskip\
\newblock
\APACrefYearMonthDay{2016}{}{}.
\newblock
{\BBOQ}\APACrefatitle {Diagnosing a disorder in a classification benchmark} {Diagnosing a disorder in a classification benchmark}.{\BBCQ}
\newblock
\APACjournalVolNumPages{Pattern Recognition Letters}{73}{}{41--43}.
\PrintBackRefs{\CurrentBib}

\bibitem [\protect \citeauthoryear {%
Mhaskar%
, Liao%
\BCBL {}\ \BBA {} Poggio%
}{%
Mhaskar%
\ \protect \BOthers {.}}{%
{\protect \APACyear {2017}}%
}]{%
MLP17}
\APACinsertmetastar {%
MLP17}%
\begin{APACrefauthors}%
Mhaskar, H.%
, Liao, Q.%
\BCBL {}\ \BBA {} Poggio, T.%
\end{APACrefauthors}%
\unskip\
\newblock
\APACrefYearMonthDay{2017}{}{}.
\newblock
{\BBOQ}\APACrefatitle {When and why are deep networks better than shallow ones?} {When and why are deep networks better than shallow ones?}{\BBCQ}
\newblock
\APACjournalVolNumPages{Proceedings of the Thirty-First AAAI Conference on Artificial Intelligence}{}{}{2343-2349}.
\PrintBackRefs{\CurrentBib}

\bibitem [\protect \citeauthoryear {%
Mhaskar%
\ \BBA {} Poggio%
}{%
Mhaskar%
\ \BBA {} Poggio%
}{%
{\protect \APACyear {2016}}%
}]{%
MP16}
\APACinsertmetastar {%
MP16}%
\begin{APACrefauthors}%
Mhaskar, H.%
\BCBT {}\ \BBA {} Poggio, T.%
\end{APACrefauthors}%
\unskip\
\newblock
\APACrefYearMonthDay{2016}{}{}.
\newblock
{\BBOQ}\APACrefatitle {Deep vs. shallow networks: An approximation theory perspective} {Deep vs. shallow networks: An approximation theory perspective}.{\BBCQ}
\newblock
\APACjournalVolNumPages{Analysis and Applications}{14}{}{829-848}.
\PrintBackRefs{\CurrentBib}

\bibitem [\protect \citeauthoryear {%
Mohri%
, Rostamizadeh%
\BCBL {}\ \BBA {} Talwalkar%
}{%
Mohri%
\ \protect \BOthers {.}}{%
{\protect \APACyear {2018}}%
}]{%
fM2018}
\APACinsertmetastar {%
fM2018}%
\begin{APACrefauthors}%
Mohri, M.%
, Rostamizadeh, A.%
\BCBL {}\ \BBA {} Talwalkar, A.%
\end{APACrefauthors}%
\unskip\
\newblock
\APACrefYear{2018}.
\newblock
\APACrefbtitle {Foundations of machine learning} {Foundations of machine learning}.
\newblock
\APACaddressPublisher{}{MIT Press}.
\PrintBackRefs{\CurrentBib}

\bibitem [\protect \citeauthoryear {%
Montanelli%
\ \BBA {} Du%
}{%
Montanelli%
\ \BBA {} Du%
}{%
{\protect \APACyear {2019}}%
}]{%
MD19}
\APACinsertmetastar {%
MD19}%
\begin{APACrefauthors}%
Montanelli, H.%
\BCBT {}\ \BBA {} Du, Q.%
\end{APACrefauthors}%
\unskip\
\newblock
\APACrefYearMonthDay{2019}{}{}.
\newblock
{\BBOQ}\APACrefatitle {New error bounds for deep ReLU networks using sparse grids} {New error bounds for deep relu networks using sparse grids}.{\BBCQ}
\newblock
\APACjournalVolNumPages{SIAM Journal on Mathematics of Data Science}{1}{}{78--92}.
\PrintBackRefs{\CurrentBib}

\bibitem [\protect \citeauthoryear {%
Nakada%
\ \BBA {} Imaizumi%
}{%
Nakada%
\ \BBA {} Imaizumi%
}{%
{\protect \APACyear {2020}}%
}]{%
NI2020}
\APACinsertmetastar {%
NI2020}%
\begin{APACrefauthors}%
Nakada, R.%
\BCBT {}\ \BBA {} Imaizumi, M.%
\end{APACrefauthors}%
\unskip\
\newblock
\APACrefYearMonthDay{2020}{}{}.
\newblock
{\BBOQ}\APACrefatitle {Adaptive approximation and generalization of deep neural network with intrinsic dimensionality} {Adaptive approximation and generalization of deep neural network with intrinsic dimensionality}.{\BBCQ}
\newblock
\APACjournalVolNumPages{Journal of Machine Learning Research}{21}{}{1--38}.
\PrintBackRefs{\CurrentBib}

\bibitem [\protect \citeauthoryear {%
Poggio%
, Mhaskar%
, Rosasco%
, Miranda%
\BCBL {}\ \BBA {} Liao%
}{%
Poggio%
\ \protect \BOthers {.}}{%
{\protect \APACyear {2017}}%
}]{%
PMRML17}
\APACinsertmetastar {%
PMRML17}%
\begin{APACrefauthors}%
Poggio, T.%
, Mhaskar, H\BPBI N.%
, Rosasco, L.%
, Miranda, B.%
\BCBL {}\ \BBA {} Liao, Q.%
\end{APACrefauthors}%
\unskip\
\newblock
\APACrefYearMonthDay{2017}{}{}.
\newblock
{\BBOQ}\APACrefatitle {Why and when can deep-but not shallow-networks avoid the curse of dimensionality: A review} {Why and when can deep-but not shallow-networks avoid the curse of dimensionality: A review}.{\BBCQ}
\newblock
\APACjournalVolNumPages{International Journal of Automation and Computing}{14}{}{503--519}.
\PrintBackRefs{\CurrentBib}

\bibitem [\protect \citeauthoryear {%
Ruppert%
}{%
Ruppert%
}{%
{\protect \APACyear {1997}}%
}]{%
R97}
\APACinsertmetastar {%
R97}%
\begin{APACrefauthors}%
Ruppert, D.%
\end{APACrefauthors}%
\unskip\
\newblock
\APACrefYearMonthDay{1997}{}{}.
\newblock
{\BBOQ}\APACrefatitle {Empirical-bias bandwidths for local polynomial nonparametric regression and density estimation} {Empirical-bias bandwidths for local polynomial nonparametric regression and density estimation}.{\BBCQ}
\newblock
\APACjournalVolNumPages{Journal of the American Statistical Association}{92}{}{1049-1062}.
\PrintBackRefs{\CurrentBib}

\bibitem [\protect \citeauthoryear {%
Schmidhuber%
}{%
Schmidhuber%
}{%
{\protect \APACyear {2015}}%
}]{%
S15}
\APACinsertmetastar {%
S15}%
\begin{APACrefauthors}%
Schmidhuber, J.%
\end{APACrefauthors}%
\unskip\
\newblock
\APACrefYearMonthDay{2015}{}{}.
\newblock
{\BBOQ}\APACrefatitle {Deep learning in neural networks: An overview} {Deep learning in neural networks: An overview}.{\BBCQ}
\newblock
\APACjournalVolNumPages{Neural Networks}{61}{}{85--117}.
\PrintBackRefs{\CurrentBib}

\bibitem [\protect \citeauthoryear {%
Schmidt-Hieber%
}{%
Schmidt-Hieber%
}{%
{\protect \APACyear {2019}}%
}]{%
Schmidt2019}
\APACinsertmetastar {%
Schmidt2019}%
\begin{APACrefauthors}%
Schmidt-Hieber, J.%
\end{APACrefauthors}%
\unskip\
\newblock
\APACrefYearMonthDay{2019}{}{}.
\newblock
{\BBOQ}\APACrefatitle {Deep relu network approximation of functions on a manifold} {Deep relu network approximation of functions on a manifold}.{\BBCQ}
\newblock
\APACjournalVolNumPages{preprint, https://arxiv.org/abs/1908.00695}{}{}{}.
\PrintBackRefs{\CurrentBib}

\bibitem [\protect \citeauthoryear {%
Schmidt-Hieber%
}{%
Schmidt-Hieber%
}{%
{\protect \APACyear {2020}}%
}]{%
Schmidt2020}
\APACinsertmetastar {%
Schmidt2020}%
\begin{APACrefauthors}%
Schmidt-Hieber, J.%
\end{APACrefauthors}%
\unskip\
\newblock
\APACrefYearMonthDay{2020}{}{}.
\newblock
{\BBOQ}\APACrefatitle {Nonparametric regression using deep neural networks with ReLU activation function} {Nonparametric regression using deep neural networks with relu activation function}.{\BBCQ}
\newblock
\APACjournalVolNumPages{The Annals of Statistics}{48}{}{1875--1897}.
\PrintBackRefs{\CurrentBib}

\bibitem [\protect \citeauthoryear {%
Shaham%
, Cloninger%
\BCBL {}\ \BBA {} Coifman%
}{%
Shaham%
\ \protect \BOthers {.}}{%
{\protect \APACyear {2018}}%
}]{%
SCC18}
\APACinsertmetastar {%
SCC18}%
\begin{APACrefauthors}%
Shaham, U.%
, Cloninger, A.%
\BCBL {}\ \BBA {} Coifman, R\BPBI R.%
\end{APACrefauthors}%
\unskip\
\newblock
\APACrefYearMonthDay{2018}{}{}.
\newblock
{\BBOQ}\APACrefatitle {Provable approximation properties for deep neural networks} {Provable approximation properties for deep neural networks}.{\BBCQ}
\newblock
\APACjournalVolNumPages{Applied and Computational Harmonic Analysis}{44}{3}{537--557}.
\PrintBackRefs{\CurrentBib}

\bibitem [\protect \citeauthoryear {%
J.~Shen%
\ \BBA {} Wang%
}{%
J.~Shen%
\ \BBA {} Wang%
}{%
{\protect \APACyear {2010}}%
}]{%
SW10}
\APACinsertmetastar {%
SW10}%
\begin{APACrefauthors}%
Shen, J.%
\BCBT {}\ \BBA {} Wang, L\BPBI L.%
\end{APACrefauthors}%
\unskip\
\newblock
\APACrefYearMonthDay{2010}{}{}.
\newblock
{\BBOQ}\APACrefatitle {Sparse spectral approximations of high-dimensional problems based on hyperbolic cross} {Sparse spectral approximations of high-dimensional problems based on hyperbolic cross}.{\BBCQ}
\newblock
\APACjournalVolNumPages{SIAM Journal on Numerical Analysis}{48}{}{1087-1109}.
\PrintBackRefs{\CurrentBib}

\bibitem [\protect \citeauthoryear {%
Z.~Shen%
, Yang%
\BCBL {}\ \BBA {} Zhang%
}{%
Z.~Shen%
\ \protect \BOthers {.}}{%
{\protect \APACyear {2020}}%
}]{%
SYZ2020}
\APACinsertmetastar {%
SYZ2020}%
\begin{APACrefauthors}%
Shen, Z.%
, Yang, H.%
\BCBL {}\ \BBA {} Zhang, S.%
\end{APACrefauthors}%
\unskip\
\newblock
\APACrefYearMonthDay{2020}{}{}.
\newblock
{\BBOQ}\APACrefatitle {Deep network approximation characterized by number of neurons} {Deep network approximation characterized by number of neurons}.{\BBCQ}
\newblock
\APACjournalVolNumPages{Communications in Computational Physics}{28}{}{1768--1811}.
\PrintBackRefs{\CurrentBib}

\bibitem [\protect \citeauthoryear {%
Srivastava%
, Hinton%
, Krizhevsky%
, Sutskever%
\BCBL {}\ \BBA {} Salakhutdinov%
}{%
Srivastava%
\ \protect \BOthers {.}}{%
{\protect \APACyear {2014}}%
}]{%
SHKSS2014}
\APACinsertmetastar {%
SHKSS2014}%
\begin{APACrefauthors}%
Srivastava, N.%
, Hinton, G.%
, Krizhevsky, A.%
, Sutskever, I.%
\BCBL {}\ \BBA {} Salakhutdinov, R.%
\end{APACrefauthors}%
\unskip\
\newblock
\APACrefYearMonthDay{2014}{}{}.
\newblock
{\BBOQ}\APACrefatitle {Dropout: A Simple Way to Prevent Neural Networks from Overfitting} {Dropout: A simple way to prevent neural networks from overfitting}.{\BBCQ}
\newblock
\APACjournalVolNumPages{Journal of Machine Learning Research}{15}{}{1929-1958}.
\PrintBackRefs{\CurrentBib}

\bibitem [\protect \citeauthoryear {%
Stone%
}{%
Stone%
}{%
{\protect \APACyear {1982}}%
}]{%
S82}
\APACinsertmetastar {%
S82}%
\begin{APACrefauthors}%
Stone, C\BPBI J.%
\end{APACrefauthors}%
\unskip\
\newblock
\APACrefYearMonthDay{1982}{}{}.
\newblock
{\BBOQ}\APACrefatitle {Optimal global rates of convergence for nonparametric regression} {Optimal global rates of convergence for nonparametric regression}.{\BBCQ}
\newblock
\APACjournalVolNumPages{The Annals of Statistics}{10}{}{1040--1053}.
\PrintBackRefs{\CurrentBib}

\bibitem [\protect \citeauthoryear {%
Stone%
}{%
Stone%
}{%
{\protect \APACyear {1985}}%
}]{%
Stone85}
\APACinsertmetastar {%
Stone85}%
\begin{APACrefauthors}%
Stone, C\BPBI J.%
\end{APACrefauthors}%
\unskip\
\newblock
\APACrefYearMonthDay{1985}{}{}.
\newblock
{\BBOQ}\APACrefatitle {Additive regression and other nonparametric models} {Additive regression and other nonparametric models}.{\BBCQ}
\newblock
\APACjournalVolNumPages{The Annals of Statistics}{13}{}{689--705}.
\PrintBackRefs{\CurrentBib}

\bibitem [\protect \citeauthoryear {%
Stone%
}{%
Stone%
}{%
{\protect \APACyear {1994}}%
}]{%
S94}
\APACinsertmetastar {%
S94}%
\begin{APACrefauthors}%
Stone, C\BPBI J.%
\end{APACrefauthors}%
\unskip\
\newblock
\APACrefYearMonthDay{1994}{}{}.
\newblock
{\BBOQ}\APACrefatitle {The use of polynomial splines and their tensor products in multivariate function estimation} {The use of polynomial splines and their tensor products in multivariate function estimation}.{\BBCQ}
\newblock
\APACjournalVolNumPages{The Annals of Statistics}{22}{}{118--184}.
\PrintBackRefs{\CurrentBib}

\bibitem [\protect \citeauthoryear {%
Stone%
, Hansen%
, Kooperberg%
\BCBL {}\ \BBA {} Truong%
}{%
Stone%
\ \protect \BOthers {.}}{%
{\protect \APACyear {1997}}%
}]{%
Stone97}
\APACinsertmetastar {%
Stone97}%
\begin{APACrefauthors}%
Stone, C\BPBI J.%
, Hansen, M\BPBI H.%
, Kooperberg, C.%
\BCBL {}\ \BBA {} Truong, Y\BPBI K.%
\end{APACrefauthors}%
\unskip\
\newblock
\APACrefYearMonthDay{1997}{}{}.
\newblock
{\BBOQ}\APACrefatitle {Polynomial splines and their tensor products in extended linear modeling: 1994 Wald memorial lecture} {Polynomial splines and their tensor products in extended linear modeling: 1994 wald memorial lecture}.{\BBCQ}
\newblock
\APACjournalVolNumPages{The Annals of Statistics}{25}{}{1371--1470}.
\PrintBackRefs{\CurrentBib}

\bibitem [\protect \citeauthoryear {%
Wang%
\ \BBA {} Chen%
}{%
Wang%
\ \BBA {} Chen%
}{%
{\protect \APACyear {2018}}%
}]{%
WC18}
\APACinsertmetastar {%
WC18}%
\begin{APACrefauthors}%
Wang, D.%
\BCBT {}\ \BBA {} Chen, J.%
\end{APACrefauthors}%
\unskip\
\newblock
\APACrefYearMonthDay{2018}{}{}.
\newblock
{\BBOQ}\APACrefatitle {Supervised speech separation based on deep learning: An overview} {Supervised speech separation based on deep learning: An overview}.{\BBCQ}
\newblock
\APACjournalVolNumPages{IEEE/ACM Transactions on Audio, Speech, and Language Processing}{26}{}{1702-1726}.
\PrintBackRefs{\CurrentBib}

\bibitem [\protect \citeauthoryear {%
Wasserman%
}{%
Wasserman%
}{%
{\protect \APACyear {2006}}%
}]{%
W06}
\APACinsertmetastar {%
W06}%
\begin{APACrefauthors}%
Wasserman, L.%
\end{APACrefauthors}%
\unskip\
\newblock
\APACrefYear{2006}.
\newblock
\APACrefbtitle {All of Nonparametric Statistics} {All of nonparametric statistics}.
\newblock
\APACaddressPublisher{}{Springer Texts in Statistics}.
\PrintBackRefs{\CurrentBib}

\bibitem [\protect \citeauthoryear {%
Yarotsky%
}{%
Yarotsky%
}{%
{\protect \APACyear {2017}}%
}]{%
Y17}
\APACinsertmetastar {%
Y17}%
\begin{APACrefauthors}%
Yarotsky, D.%
\end{APACrefauthors}%
\unskip\
\newblock
\APACrefYearMonthDay{2017}{}{}.
\newblock
{\BBOQ}\APACrefatitle {Error bounds for approximations with deep ReLU networks} {Error bounds for approximations with deep relu networks}.{\BBCQ}
\newblock
\APACjournalVolNumPages{Neural Networks}{70}{}{103--114}.
\PrintBackRefs{\CurrentBib}

\end{thebibliography}

\end{document}